\pgfplotsset{compat=1.18}
\definecolor{lightred}{RGB}{255,100,100} 
\definecolor{darkred}{RGB}{150,50,50}
\definecolor{lightblue}{RGB}{100,100,255} 
\definecolor{lightgreen}{RGB}{100,255,100} 
\definecolor{darkblue}{RGB}{50,50,150} 
\definecolor{darkgreen}{RGB}{50,150,50} 
\definecolor{darkdarkblue}{RGB}{50,50,100} 
\definecolor{darkdarkgreen}{RGB}{50,100,50} 
\newtheorem{lem}{Lemma}
\newcommand{\setchain}{{Setchain}\xspace}
\newcommand{\PR}[1]{\ensuremath{\textit{#1}}}
\newcommand{\proofs}{\PR{proofs}\xspace}
\newcommand{\epoch}{\PR{epoch}\xspace}
\newcommand{\history}{\PR{history}\xspace}
\newcommand{\TheSet}{\PR{the}\_\PR{set}\xspace}
\newcommand{\batch}{\PR{batch}\xspace}
\newcommand{\batchOriginal}{\PR{batch}\_\PR{original}\xspace}
\newcommand{\hashToBatch}{\PR{hash}\_\PR{to}\_\PR{batch}\xspace}
\newcommand{\hashToSigners}{\PR{hash}\_\PR{to}\_\PR{signers}\xspace}
\newcommand{\API}[1]{\texttt{#1}\xspace}
\newcommand{\APPEND}{\API{Append}}
\newcommand{\NEWBLOCK}{\API{NewBlock}}
\newcommand{\BroadcastTxAsync}{\API{BroadcastTxAsync}}
\newcommand{\FinalizeBlock}{\API{FinalizeBlock}}
\newcommand{\validelement}{\ensuremath{\mathtt{valid\_element}}\xspace}
\newcommand{\setobject}{\ensuremath{\mathtt{S}}\xspace}
\newcommand{\add}{\ensuremath{\mathtt{add}}\xspace}
\newcommand{\addepoch}{\ensuremath{\mathtt{add\_to\_batch}}\xspace}
\newcommand{\get}{\ensuremath{\mathtt{get}}\xspace}
\newcommand{\ledgerobject}{\ensuremath{\mathtt{L}}\xspace}
\newcommand{\Append}{\ensuremath{\mathtt{append}}\xspace}
\newcommand{\NewBlock}{\ensuremath{\mathtt{new\_block}}\xspace}
\newcommand{\Sign}{\ensuremath{\mathtt{Sign}}\xspace}
\newcommand{\Hash}{\ensuremath{\mathtt{Hash}}\xspace}
\newcommand{\validproof}{\ensuremath{\mathtt{valid\_proof}}\xspace}
\begin{document}
\title{Setchain Algorithms for Blockchain Scalability\thanks{This work is funded in part by a research grant from Nomadic Labs and the Tezos Foundation, and by MICIU/AEI /10.13039/501100011033/, ERDF, and the ESF+ under predoctoral training grant PREP2022-000373, grant DRONAC (PID2022-140560OB-I00), and grant DECO (PID2022-138072OB-I00).}}


\author{Arivarasan Karmegam\inst{1,3}\orcidlink{0000-0002-6690-0285} \and
Gabina Luz Bianchi\inst{6} \and
Margarita Capretto\inst{2,4}\orcidlink{0000-0003-2329-3769} \and
Martín Ceresa\inst{5}\orcidlink{0000-0003-4691-5831} \and
Antonio {Fernández Anta}\inst{2,1}\orcidlink{0000-0001-6501-2377} \and
César Sánchez\inst{2}\orcidlink{0000-0003-3927-4773}}
\authorrunning{A. Karmegam et al.}
%
\institute{IMDEA Networks Institute, Madrid, Spain
\and
IMDEA Software Institute, Madrid, Spain
\and
Universidad Carlos III de Madrid, Spain
\and
Universidad Politécnica de Madrid, Spain
\and
Input-Output, Madrid, Spain
\and
Universidad Nacional de Rosario, Argentina }

\maketitle              
%

\begin{abstract}
Setchain has been proposed to increase blockchain scalability
by relaxing the strict total order requirement among transactions. 
Setchain organizes elements into a sequence of sets, referred to as \textit{epochs,} 
so that elements within each epoch are unordered.
%
In this paper, we propose and evaluate three distinct Setchain algorithms, that leverage an underlying block-based ledger.
\textit{Vanilla} is a basic implementation that serves as a reference point.
\textit{Compresschain} aggregates elements into batches, and compresses these batches before appending them as epochs in the ledger.
\textit{Hashchain} converts batches into fixed-length hashes which are appended as epochs in the ledger.
This requires Hashchain to use a
distributed service to obtain the batch contents from its hash.
%
To allow light clients to safely interact with only one server, 
the proposed algorithms maintain, as part of the Setchain, proofs for the epochs. An \textit{epoch-proof} is the hash of the epoch, cryptographically signed by a server. A client can verify the correctness of an epoch with $f+1$ epoch-proofs (where $f$ is the maximum number of Byzantine servers assumed).
All three Setchain algorithms are implemented on top of the CometBFT blockchain application platform. 
%
We conducted performance evaluations across various configurations, using clusters of four, seven, and ten servers. Our results show that the Setchain algorithms reach orders of magnitude higher throughput than the underlying blockchain, and achieve finality with latency below $4$ seconds.

\keywords{Blockchain  \and Setchain \and Scalability \and CometBFT}

\end{abstract}

\section{Introduction}

%
%
%
A blockchain is a \emph{reliable distributed object} containing the
list of transactions performed on behalf of the
users, totally ordered and packed into blocks~\cite{anta2018formalizing,anta2021principles}.
Real-world blockchains are maintained by multiple servers (without a central authority) that leverage a \emph{Byzantine-tolerant consensus algorithms} to order the transactions and
that compute their effects~\cite{DBLP:journals/tocs/CastroL02,redbelly}.

\textbf{Challenges.}
One of the main challenges for practical blockchains is to 
append transactions to the chain at a fast rate. This is called
\textit{throughput} and is measured in transactions per second (TPS).
Throughput is constrained by multiple factors, including the speed at which the consensus is solved, network latency, and the computational cost of executing transactions to update or validate the blockchain state. To maintain efficiency and decentralization, blockchains often impose limits on block size or execution complexity, which further impact throughput.
%
%
A second challenge is to reduce the \textit{latency} of the blockchain,
i.e. 
the time spent since users send their transactions until those transactions are added to the blockchain.
The latency in
some blockchains can be of hours. 
Moreover, in some blockchains
(e.g., Bitcoin), a block that has been reported to be added to the
blockchain may end up being removed due
to a fork in
the chain. Hence, a third challenge is to achieve \textit{finality}
in the blockchain, i.e., reach a point in time when users know
their transactions are in the chain to never be removed.
%
Finally, a fourth challenge is to provide users with \textit{proofs} that their transactions are in the chain.
This can be achieved by having each user contact several servers and ask them for individual proof (e.g., a copy of the whole chain). This is rarely done, and users typically trust that the server with which they interact is honest. It would be desirable that solid proofs that the transactions are appended to the chain are provided to users who interact with one single server.
We address these challenges with \setchain~\cite{capretto2024improving}, which is a \textit{reliable distributed object} that implements Byzantine-tolerant distributed grown-only sets with barriers and has been shown to have a large throughput, and with \textit{epoch-proofs,} that are cryptographic artifacts that allow users to interact reliably with just one server.
Related Works section is deferred to Appendix \ref{sec:relatedworks} due to space constraints.

\textbf{Setchain.}
An approach to improve the scalability of blockchain is \setchain~\cite{capretto2024improving},
a Byzantine-tolerant distributed object
that implements a sequence of sets (called \textit{epochs}).
%
\setchain relaxes the total order requirement of blockchain and thus can achieve higher throughput and scalability by allowing validating transactions in parallel within an epoch.
\setchain~can be used for applications like digital
registries (e.g., MIT Digital Diplomas~\cite{blockcerts,mitblockchaindiploma}, Georgia Government registry) or voting systems (e.g., Follow My Vote, 
Chirotonia~\cite{DBLP:conf/blockchain2/RussoAVR21}), where different elements in the blockchain need not be ordered except across infrequent epoch barriers.
Byzantine-tolerant distributed algorithms that implement \setchain have been proposed, but no efficient real-world implementation exists.

\textbf{Contributions.}
In this work we propose a family
of real-world \setchain~algorithms built on top of a block-based ledger. We follow an incremental approach, providing several approximations to the
final and most complex solution (Section~\ref{sec:algorithms}). We first present \textbf{Algorithm Vanilla} which is a naive implementation of \setchain with similar throughput as the underlying ledger. Then, we present \textbf{Algorithm Compresschain} that increases the throughput using compression of epochs. Finally, we present \textbf{Algorithm Hashchain}, a solution using hash functions.
From these algorithms, our primary contribution is 
Hashchain, which exploits the succinctness of hashes to reduce the communication necessary during broadcasts
and consensus, communicating a fixed-size hash instead of the hundreds or thousands of
elements of an epoch.
The price to pay is an additional distributed algorithm to obtain the epoch contents of a hash from the corresponding server.
Additionally, to allow users to interact with only one server, these algorithms maintain, as part of the Setchain, proofs for its epochs. An \textit{epoch-proof} is the hash of the epoch, cryptographically signed by a server. A user can verify the correctness of an epoch with $f+1$ epoch-proofs (where $f$ is the maximum number of Byzantine servers).
All three Setchain algorithms are implemented on top of the CometBFT blockchain application platform (Section~\ref{sec:implementation}). We deployed these implementations as docker nodes in a cluster.
%
We conducted performance evaluations across various configurations, using clusters of four, seven, and ten servers. Our results show that the Setchain algorithms reach orders of magnitude higher throughput than the underlying blockchain (tens of thousands TPS), and achieve finality with latency below $4$ seconds.




%

\section{Model and Definitions} \label{sec:modelsanddefinition}

\textbf{System Model.}\label{subsec:systemmodel}
We consider a distributed system consisting of $n$ servers and an
unbounded number of clients, which together are the system processes.
%
The system is permissioned but public, also known as ``open
permissioned''~\cite{redbelly}, referring to an open model for clients
but where servers are known upfront (permissioned).
Nevertheless, this model can also be adapted to a permissionless
setting with committee sortition~\cite{gilad2017algorand} without
significant modifications.
%
%
Up to $f<n/2$ of the servers may exhibit Byzantine behavior, while the
clients' behavior is not restricted (i.e., they can be Byzantine). We
assume that the number $f$ of Byzantine servers is known.
%
We use \(f+1\) as a lower bound on the amount of consistent
epoch-proof required to ensure that an epoch is correct (see
Section~\ref{subsec:epochproofs}) and as the number of signatures
needed in order to consolidate hashes into epochs (see
Section~\ref{subsec:hashchain}).
We assume that there is a public key infrastructure (PKI) deployed so that each process (server or client) has a pair of private and public keys. Every process also knows the public key of all the other processes. The communication between processes is assumed to be reliable, meaning that messages sent between correct processes are eventually delivered only once, and no spurious messages are generated. Faulty processes can send any message but thanks to the PKI they cannot impersonate other processes. The PKI enables authenticated and secure communication between processes, since messages are signed by the sender using their private key, and the receiver verifies the signature using the sender's public key. If a signature is invalid, the message is discarded.
Clients create elements and invoke a \setchain operation to add
them. The elements created by the clients are signed and can hence be
authenticated. They can also be validated by the servers for syntactic
and semantic correctness. Only authenticated valid elements are
processed by the correct servers. We assume that a server cannot
create a valid element by itself, and that clients and servers do not collude.

\textbf{\setchain.}\label{subsec:setchain}
A \setchain~\cite{capretto2024improving} is a Byzantine-tolerant distributed object
$\setobject$
that implements a sequence of sets called \textit{epochs}.
The \setchain defines an order between elements in different epochs but not between
elements in the same epoch.
Thus, \setchain relaxes the total order requirement imposed by blockchains, potentially
achieving higher throughput and scalability.
%
%
Let \(U\) be the set of elements that client processes can inject into the
\setchain.
A \setchain \setobject is a reliable distributed object where a collection of servers
maintain:
(i) a grow-only set $\TheSet \subseteq U$ of elements added;
(ii) a natural number $\epoch \in \mathbb{N}$;
(iii)
a map $history: \{1, \ldots, \epoch\} \rightarrow 2^U$ with the set of elements that have been stamped with a given
epoch number\footnote{$2^U$ denotes the
power set of $U$.};
(iiii)
a set $\proofs$ of epoch-proofs (see Section~\ref{subsec:epochproofs})\footnote{This set was not part of the API described in the original \setchain paper~\cite{capretto2024improving} and is one of the contributions of this work.}.
%
Servers support two operations: \(\add\) and \(\get\).
Operation \(\setobject.\add_v(e)\) is used by a client to request a
server $v$ to add an element $e$ to the \setchain \setobject, while
operation \(\setobject.\get_v()\) returns the values
$(\TheSet,$ $\history,\epoch, \proofs)$ that server $v$ maintains.  Servers
can use a function $\validelement(e)$ to locally decide if an element
$e$ is valid.
%
%
%
In a \setchain, when a new epoch is created, the servers collaboratively
decide which of the added elements without an epoch assigned are included in the new epoch, 
and increase the epoch number.
We call this an \emph{epoch increment}.
%
For the rest of the paper, we assume that at any given time, there will be a future epoch increment. This assumption is reasonable in real-world scenarios, and it can be reliably ensured by utilizing timeouts. 
A typical workflow from the point of the view of a client is as follows: a
client invokes $\setobject.\add_v(e)$ in a server $v$ to insert a new valid
element $e$ in the \setchain.
The element $e$ will be propagated among the servers, and when an
epoch increment occurs, the servers will attempt to include it in the
new epoch.
After waiting for some time, the client invokes $\setobject.\get_w()$
in a (possibly different) server $w$ to check that the element has
been effectively added to the \setchain and included in an epoch.
%
%
\setchain implementations must satisfy certain properties that guarantee consistency between correct servers and that the valid elements added are eventually included in an epoch~\cite{capretto2024improving}. These properties are restricted to correct servers, since Byzantine servers do not provide any guarantee. 

\begin{enumerate}
\item
\textit{Consistent-Sets:} \label{prop:consistent-sets}
Let $(T,H,h,P)=\setobject.\get_v()$ be the result of an invocation to a correct server $v$. Then,  for all $i \in \{1,\ldots,h\}, H[i] \subseteq T$.
\item
\textit{Add-Get-Local:} \label{prop:add-get-local}
Let $\setobject.\add_v(e)$ be an operation invoked in a correct server $v$ and let $e$ be valid. Then, eventually all invocations $(T,H,h,P)=\setobject.\get_v()$ satisfy $e\in T$.    
\item
\textit{Get-Global:} \label{prop:get-global}
Let $v$ and $w$ be two correct servers, let $e$ be a valid element, and let $(T,H,h,P)=\setobject.\get_v()$. If $e \in T$, then eventually all invocations $(T',H',h',P')=\setobject.\get_w()$ satisfy that $e \in T'$.    
\item
\textit{Eventual-Get:} \label{prop:eventual-get}
    Let $v$ be a correct server, let $e$ be a valid element and let $(T,H,h,P)=\setobject.\get_v()$. If $e \in T$, then eventually all invocations $(T',H',h',P')\\=\setobject.\get_v()$ satisfy that $e \in H'$.
\item
\textit{Unique-Epoch:} \label{prop:unique-epoch}
    Let $v$ be a correct server, $(T,H,h,P)=\setobject.\get_v()$, and let $i,i' \in \{1,\ldots,h\}$ with $i \neq i'$. Then, $H[i] \cap H[i'] = \emptyset$.
\item
\textit{Consistent-Gets:} \label{prop:consistent-gets}
    Let $v,w$ be correct servers, let $(T,H,h,P)=\setobject.\get_v()$ and $(T',H',h',P')=\setobject.\get_w()$, and let $i \in \{1,\ldots,$ $\min(h,h')\}$. Then $H[i]=H'[i]$.
\item
\textit{Add-before-Get:} \label{prop:add-before-get}
  Let $v$ be a correct server, $(T,H,h,P)=\setobject.\get_v()$, and
  $e \in T$ a valid element. Then there was an operation
  $\setobject.\add_w(e)$ invoked in the past in some server $w$.
\end{enumerate}


\setcounter{property}{7}


Properties \ref{prop:consistent-sets}, \ref{prop:unique-epoch}, \ref{prop:consistent-gets}, and \ref{prop:add-before-get} are safety properties. Properties \ref{prop:add-get-local}, \ref{prop:get-global}, and \ref{prop:eventual-get} are liveness properties.

\textbf{\setchain Epoch-proofs.} \label{subsec:epochproofs}
The original \setchain~\cite{capretto2024improving} proposal assumes that a client interacts with a correct server. However, in practice, clients do not know whether they are communicating with a correct or a Byzantine server. Hence, 
a client may need to interact with a sufficient number of servers (at least $f+1$) to guarantee that at least one is correct.
In this work, we introduce \emph{epoch-proofs} as a mechanism that allows clients to achieve the same guarantees without needing to contact multiple servers, thereby simplifying the process and improving efficiency.
%
%
An epoch-proof is the cryptographic signature of an epoch $i$ by a server $v$. As described above, in this paper, the \setchain maintains a set $\proofs$ of epoch-proofs. This set is returned by the \setchain when a \(\get\) operation is invoked. Hence, when $\proofs$ contains at least $f+1$ consistent epoch-proofs for a given epoch $i$, the client can be sure the epoch is correct.
The epoch-proof is created by signing the hash of the epoch number and the elements of the epoch: $p_v(i) = \Sign_v(\Hash(i, \history[i]))$.
%
%
To add an element, a client only performs a single \(\setobject.\add_v(e)\) request to one server $v$, hoping it is a
correct server.
After waiting for some time, the client can invoke a
\(\setobject.\get_w()\) from a single server $w$ and check whether $e$
is in some epoch and the set $\proofs$ returned contains at least
$f+1$ valid epoch-proofs for that epoch to guarantee that at least one
correct server has signed it.  Clients can verify if epoch
proofs are valid by generating the hash of a given epoch, and
verifying if the signature in the epoch-proof is valid using the hash
and the public key of the signing server.  Recall that servers' public
keys are known to the clients.
Note that this process usually requires only one message per \(\add\) and one message per \(\get\). Of course, it is possible that the client is unlucky
and servers $v$ or $w$ are Byzantine, and after waiting a reasonable amount of time, has to restart the process.
The following is the basic property that any \setchain algorithm must satisfy concerning epoch-proofs.
\begin{property}[Valid-Epoch]\label{prop:valid-epoch}
Let $v$ be a correct server, $(T,H,h,P)=\setobject.\get_v()$, and $i \in \{1, \ldots, h\}$. Then eventually all invocations $(T',H',$ $h',P')=\setobject.\get_v()$ satisfy that $P'$ contains at least $f+1$ epoch proofs of $H[i]$.
\end{property}


\textbf{Block-based Ledger.} \label{subsec:block-basedledger}
In the proposed algorithms to implement a \setchain presented in this paper, we assume the availability of a Byzantine-tolerant consensus service. 
For simplicity, we abstract that service as a block-based ledger. While our algorithms only require that the number of Byzantine processes satisfies $f<n/2$,
the block-based ledger may have more stringent requirements. For instance, in Section \ref{sec:implementation} we use the ledger CometBFT that requires $f<n/3$.
%
%
A block-based ledger $\ledgerobject$ is a Byzantine-tolerant distributed object that maintains a sequence of blocks. Each block contains a sequence of transactions that have been appended to the ledger by its clients. We prefer not to call this object a blockchain since its transactions have no semantics.
To keep the nomenclature consistent, the term \emph{transaction} is always used to
refer to the \emph{block-based ledger transactions}, while \emph{element} refers to the elements
of the \setchain.
Depending on the implementation, a single ledger transaction may contain one
or many elements.
%
%
The block-based ledger provides two endpoints. First,
$\Append(tx)$ is used to submit a transaction $tx$ into the ledger,
which eventually gets included in a block. Then, $\NewBlock(B)$ notifies the servers
whenever a new block $B$ has been appended.  The number of
transactions in $B$ is $|B|$, and the $i$th transaction in $B$ is
$B[i]$.
%
%
Since \setchain algorithms use a block-based ledger, to prove correctness it is necessary to define properties that the block-based ledger guarantees. In particular, the ledger property that will be used is that any valid transaction appended by a correct server will eventually be included in a block, which gets notified to all the servers. Moreover, we assume that all blocks notified are final.
\begin{property}[Ledger-Add-Eventual-Notify]\label{prop:ledger-add-eventual-notify}
Let $tx$ be a valid transaction, and let $v$ be a correct server. Then, if $v$ invokes $\ledgerobject.\Append_v(tx)$, transaction $tx$ will be eventually and permanently added in a fixed position $i$ into a block $B$.
Moreover, all correct servers $w$ will be eventually notified of the new block $B$ with $\ledgerobject.\NewBlock_w(B)$.
\end{property}

\begin{property}[Ledger-Consistent-Notification]\label{prop:ledger-consistent-notification}
All correct servers $w$ are notified with $\ledgerobject.\NewBlock_w(B)$ of the same set of blocks, and in the same order.
\end{property}

\begin{property}[Notification-Implies-Append]\label{prop:notification-implies-append}
If a correct server $w$ is notified with $\ledgerobject.\NewBlock_w(B)$ containing some valid element $e$, then some server $v$ had invoked $\ledgerobject.\Append_v(e)$.
\end{property}

\section{Setchain Algorithms}\label{sec:algorithms}

In this section we propose practical \setchain algorithms built on top
of a block-based ledger.
In particular, we present three different implementations, beginning with a naive
but trivially correct implementation and ending with a more complex algorithm
implementing \setchain using hashing.
%
%
%
As described in Section~\ref{subsec:setchain}, a \setchain \setobject provides two methods, \(\add\) and \(\get\), defined in each algorithm.

\begin{figure}[h]
  \begin{adjustbox}{minipage=[t]{\columnwidth}}
  \algrenewcommand\alglinenumber[1]{\scriptsize #1}
    \begin{algorithm}[H]
      \renewcommand{\thealgorithm}{Compresschain}         
      \caption{\small 
      Code executed by server $v$.
      Uses block-based ledger $\ledgerobject$ shared by all servers.
}%
      \label{alg:setchain-compress}%
      \scriptsize
      \begin{multicols}{2}
      \begin{algorithmic}[1]
      \vspace*{-2.5em}
      \algrenewcommand\algorithmicindent{1em}
      \State Init: $\TheSet \leftarrow \emptyset$, $\epoch \leftarrow 0$, $\history \leftarrow \emptyset$, 
      \Statex \hspace*{2em} $\proofs \leftarrow \emptyset$, $\batch \leftarrow \emptyset$
      \Function{\add}{$e$}
        \State assert $\validelement(e) \land (e \notin \TheSet)$ 
        \label{alg:setchain-compress:valid_element}
        \State $the\_set \leftarrow \TheSet \cup \{ e \}$ \label{alg:setchain-compress-setaddition}
        \State $\addepoch(e)$ \label{alg:setchain-compress-batchaddition}
        \State \textbf{return}
      \EndFunction

        \Function{\addepoch}{$e$}
        \State $\batch \leftarrow \batch \cup \{ e \}$
        \State \textbf{return}
        \EndFunction
      
        \Function{\get}{\null}
            \State \textbf{return} $(\TheSet, \history, \epoch, \proofs)$
        \EndFunction

        \Upon{$\mathtt{isReady}(\batch)$}
            \State assert $\batch \neq \emptyset$
            \State $b \leftarrow \mathtt{Compress}(\batch)$
            \State $\ledgerobject.\Append(b)$ \label{alg:setchain-compress-append}
            \State $\batch \leftarrow \emptyset$\label{alg:setchain-compress-emptybatch}
        \EndUpon
        
        \Upon{$\ledgerobject.\NewBlock(B)$}

             \For{$i=1$ to $|B|$}
                    \State $\batchOriginal \leftarrow \mathtt{Decompress}(B[i])$
                    \If{$\batchOriginal = \emptyset$}
                        \textbf{continue}
                    \EndIf
                    \State $np \leftarrow \{ep \in \batchOriginal: $
                    \Statex \hspace*{3em} $ep=\langle j, p, w \rangle \text{~is an epoch-proof}$ 
                    \Statex \hspace*{3em} $\land \  \validproof(j, p, w, \history[j])\}$
                    \State $\proofs \leftarrow \proofs \cup np$ \label{alg:setchain-compress:validepochproof}
                    \State $G \leftarrow \{e \in \batchOriginal:$
                    \Statex \hspace*{3em} $e \text{~is an element} \land \validelement(e)$
                    \Statex \hspace*{3em} $\land (e \notin \history)\}$ \label{alg:setchain-compress-othervalidelement}
                    \State $\TheSet \leftarrow \TheSet \cup G$ \label{alg:setchain-compress-othersetaddition}
                    \State $\epoch \leftarrow \epoch + 1$ \label{alg:setchain-compress-epochinc}
                    \State $\history[\epoch] \leftarrow G$ \label{alg:setchain-compress-epochaddition}
                    \State $p \leftarrow \Sign_v(\Hash(\epoch,G))$
                    \State $\addepoch(\langle \epoch, p, v \rangle)$ \label{alg:setchain-compress:epochproof}
            \EndFor
        \EndUpon

        \end{algorithmic}
        \end{multicols}
        \vspace*{-1em}
      \end{algorithm}
	\end{adjustbox}
  \end{figure}

\textbf{\ref{alg:setchain-vanilla}.}\label{subsec:vanilla}
In ~\ref{alg:setchain-vanilla} we propose a basic implementation of \setchain \setobject
using a block-based ledger $\ledgerobject$.
The algorithm for \ref{alg:setchain-vanilla} is deferred to Appendix \ref{sec:algorithmvanilla} due to space constraints.
In this algorithm, each server $v$ maintains three local sets $\TheSet$, $\history$, and $\proofs$, and the counter $\epoch$, as defined in Section \ref{subsec:setchain}.
Whenever a client queries \setchain \setobject using $\setobject.\get_v()$, the server returns the current value of $\TheSet$, $\history$, $\epoch$, and $\proofs$ (Line~\ref{alg:setchain-vanilla-get}).
On the other hand, clients add an element $e$ by invoking $\setobject.\add_v(e)$. The server only accepts valid elements not present in $\TheSet$ (see Line~\ref{alg:setchain-vanilla:valid_element}). If so, the element is added to $\TheSet$, and the server invokes a $\ledgerobject.\Append_v(e)$ operation, which eventually adds the element $e$ to ledger $\ledgerobject$.
Whenever a new block gets appended to the ledger \ledgerobject, server $v$ is notified through $\ledgerobject.\NewBlock(B)$ (see Line~\ref{alg:vanilla-newblock}).
Ledger \ledgerobject is used to disseminate the epoch-proofs in the three algorithms proposed.
Hence, the server starts by extracting the valid epoch-proofs from $B$ and adding them to the set $\proofs$ (Line~\ref{alg:setchain-vanilla:validepochproof}). 
Then, the server extracts to a batch $G$ the elements from $B$ that are valid and are not in an epoch yet. (Observe that checking if an element is valid cannot be avoided because a Byzantine server may have added invalid elements to the ledger). The set $G$ of valid elements is added to $\TheSet$, and forms a new epoch. This new epoch is added to the $\history$ map. Then, the epoch-proof containing the epoch number, the epoch signature, and the identity of the server is added to the ledger by invoking an $\ledgerobject.\Append$ operation (Line~\ref{alg:setchain-vanilla:epochproof}).
~\ref{alg:setchain-vanilla} solves the proof challenge via epoch-proofs, allowing a client to interact with only one correct server. However, the throughput and latency of the \setchain implemented are those of the block-based ledger $\ledgerobject$.

\textbf{\ref{alg:setchain-compress}.}\label{subsec:compresschain}
%
%
%
%
The second algorithm we propose, \ref{alg:setchain-compress}, increases the throughput with respect to \ref{alg:setchain-vanilla}.
%
In \ref{alg:setchain-compress} each server has a \textit{collector} in which the elements added by the clients and the epoch-proofs added by the servers are held until the collector size is reached (or possibly a timeout expires). Then, the collected batch, which will become a new epoch, is compressed and appended to the ledger as a single ledger transaction.
\ref{alg:setchain-compress} adapts \ref{alg:setchain-vanilla} to handle the compressed batches, by adding an additional set $\batch$ to collect the client elements and the epoch-proof. When a client adds an element $e$ to the \setchain using $\setobject.\add_v(e)$, on top of adding the element to $\TheSet$, it is also added to $\batch$ (the collector; Line~\ref{alg:setchain-compress-batchaddition}). When $\batch$ reaches the collector size (or a timeout is triggered, and $\batch$ is not empty) a notification $\mathtt{isReady}(\batch)$ is signaled. Then, the batch is compressed and appended to the ledger, and $\batch$ is reset to $\emptyset$. 
When the creation of a new block $B$ is notified, each compressed batch in $B$ is processed in order. Let $B[i]$ be the $i$th transaction in block $B$. $B[i]$ is decompressed, and the valid epoch-proofs in $B[i]$ are added to the set $\proofs$. Then, the valid elements in $B[i]$ are extracted to a set $G$ to form a new epoch. The elements in $G$ are first added to $\TheSet$. Then, they are assigned an epoch number, and the epoch information is added to the $\history$ map. An epoch-proof for the epoch is also created and sent to the collector using $\setobject.\addepoch_v(\cdot)$ which in turn adds it to $\batch$. 
The most important difference between \ref{alg:setchain-compress} and \ref{alg:setchain-vanilla} is that in \ref{alg:setchain-compress} each transaction in a block $B$ is a compressed batch (which potentially has many Setchain elements) that becomes an epoch, whereas in \ref{alg:setchain-vanilla} each transaction in block $B$ is an element, and an epoch is the set of valid elements in the block $B$. This potentially increases the throughput significantly.

\begin{figure}[t!]
  \begin{adjustbox}{minipage=[t]{\columnwidth}} 
  \algrenewcommand\alglinenumber[1]{\scriptsize #1}
    \begin{algorithm}[H]
      \renewcommand{\thealgorithm}{Hashchain}         
      \caption{\small 
      Code executed by server $v$. Uses block-based ledger $\ledgerobject$ shared by all servers.}\label{alg:setchain-hash}
      \scriptsize
      \begin{multicols}{2}
      \begin{algorithmic}[1]
      \vspace*{-2.5em}
      \algrenewcommand\algorithmicindent{0.5em}
      \State Init: $\TheSet \leftarrow \emptyset, epoch \leftarrow 0, history \leftarrow \emptyset$, 
      \Statex \hspace*{1em} $\proofs \leftarrow \emptyset, \hashToBatch \leftarrow \emptyset$, 
      \Statex \hspace*{1em} $\batch \leftarrow \emptyset, \hashToSigners \leftarrow \emptyset$ 
      \Function{\add}{$e$}
        \State assert $\validelement(e) \land (e \notin \TheSet)$ 
        \label{alg:setchain-hash-validelement}
            \State $\TheSet \leftarrow \TheSet \cup \{ e \}$ \label{alg:setchain-hash-setaddition}
        \State $\addepoch(e)$ \label{alg:setchain-hash-batchaddition}
        \State \textbf{return}
      \EndFunction
      
        \Function{\addepoch}{$e$}
        \State $\batch \leftarrow \batch \cup \{ e \}$
        \State \textbf{return}
        \EndFunction

        \Function{\get}{\null}
            \State \textbf{return} $(\TheSet, history, epoch, \proofs)$
        \EndFunction
        
        \Upon{$\mathtt{isReady}(\batch)$} \label{alg:setchain-hash-isready}
            \State assert $\batch \neq \emptyset$
            \State $h \leftarrow \Hash(\batch)$
            \State $\hashToBatch[h] \leftarrow \batch$ \label{alg:setchain-hash-hashtobatch}
            \State $\mathtt{Register\_batch}(h,\batch)$ \label{alg:register-hash-hashtobatch}

            \State $s \leftarrow \Sign_v(h)$
            \State $hb \leftarrow \langle h, s, v \rangle$
            \State $\ledgerobject.\Append(hb)$ \label{alg:setchain-hash-append}
            \State $\batch \leftarrow \emptyset$
        \EndUpon
        
        \Upon{$\ledgerobject.\NewBlock(B)$}
       
             \For{$i=1$ to $|B|$}
                \If{ ($B[i] = \langle h, s_w, w \rangle) \land$ \newline 
                \hspace*{2em} $\mathtt{valid\_hash}(h,s_w,w)$}
                    \State $\batchOriginal \leftarrow \hashToBatch[h]$ \label{alg:setchain-hash-retrieveoriginalbatchlocal}
                    \If{$\batchOriginal = \emptyset$} \Comment{$h$ is new}
                    \State $\batchOriginal \leftarrow \mathtt{Request\_batch}(h)$  \label{alg:setchain-hash-requestoriginalbatch}
                    \If{$(\batchOriginal = \emptyset) \  \lor$ \Comment{Not found} \newline 
                    \hspace*{3em} $(\Hash(\batchOriginal) \neq h$)} \label{alg:setchain-hash-batchverify}
                        \State \textbf{continue}                    
                    \EndIf
                        \State $\hashToBatch[h] \leftarrow \batchOriginal$
                        \State $\mathtt{Register\_batch}(h,\batchOriginal)$\label{alg:other-register-hash-hashtobatch}
                        \State $s_v \leftarrow \Sign_v(h)$
                        \State $hb \leftarrow \langle h, s_v, v \rangle$
                        \State $\ledgerobject.\Append(hb)$ \label{alg:setchain-hash-otherappend}

                        \State $np \leftarrow \{ep \in \batchOriginal:$
                        \Statex \hspace*{3em} $ep=\langle j, p, w \rangle \text{~is epoch-proof~}  \land$
                        \Statex \hspace*{3em} $ \validproof(j, p, w, \history[j])\}$
                        \State $\proofs \leftarrow \proofs \cup np$ \label{alg:setchain-hash:validepochproof}
                        \State $G \leftarrow \{e \in \batchOriginal:$ \newline
                        \hspace*{3em} $ e \text{~is an element} \land \validelement(e)$ \newline
                        \hspace*{3em} $\land (e \notin history)\}$
                            \State $\TheSet \leftarrow \TheSet \cup G$
                            \label{alg:setchain-hash-othersetaddition}
                \EndIf
                    \State $\hashToSigners[h] \leftarrow \hashToSigners[h]$ \newline
                    \hspace*{3em} $ \cup \{w\}$ \label{alg:hash-consol}
                    \If{
                    $|\hashToSigners[h]| = f+1$}\label{alg:setchain-hash-signerscheck}
                        \State $epoch \leftarrow epoch + 1$ \label{alg:setchain-hash-epochinc}
                        \State $G \leftarrow \{e \in \batchOriginal:$ \newline
                        \hspace*{3em} $ e \text{~is an element}\land \validelement(e)$ \newline
                        \hspace*{3em} $ \land (e \notin history)\}$\label{alg:setchain-hash-othervalidelement}
                        \State $history[epoch] \leftarrow G$\label{alg:setchain-hash-epochaddition}
                        \State $p \leftarrow \Sign_v(\Hash(\epoch,G))$
                        \State $\addepoch(\langle epoch, p, v \rangle)$\label{alg:setchain-hash:epochproof}
                    \EndIf
                \EndIf
            \EndFor
        \EndUpon


        
        \end{algorithmic}
        \end{multicols}
        \vspace*{-1em}
      \end{algorithm}
	\end{adjustbox}
\end{figure}

\textbf{\ref{alg:setchain-hash}.}\label{subsec:hashchain} \ref{alg:setchain-hash} increases the throughput by using hashing of batches instead of compressing them.
While the space reduction of hashing may be enormous, hashes are
irreversible. Hence, a non-trivial method has to be provided to
recover the contents of the original batch of elements from a hash.
In \ref{alg:setchain-hash},
when a batch is ready (Line~\ref{alg:setchain-hash-isready}) server $v$ generates the hash $h$ of the batch, signs the hash $h$, and creates the hash-batch $hb$ as the combination of the hash of the batch, the signature, and its identity. Then, the hash-batch $hb$ is appended as a transaction to ledger $\ledgerobject$.
As the original batch cannot be recovered from the hash $h$, the server saves the batch associated with $h$ in a map $\hashToBatch$. Also, the \batch with its hash $h$ is registered using $\mathtt{Register\_batch}$. This will allow serving the contents of the original batch to other servers. 
Set $\batch$ is then reset to $\emptyset$.
Upon a $\ledgerobject.\NewBlock(B)$ notification, each valid
hash-batch $hb_w=\langle h, s_w, w \rangle \in B$ is processed as
follows. If server $v$ does not have the batch associated with the hash
$h$ in the map $\hashToBatch$, $v$ requests the batch to server $w$ using $\mathtt{Request\_batch}$
(which must have it since $w$ signed that hash-batch). 
Since $w$ may be Byzantine, server $v$ waits for a limited amount of time for the answer to its request. If the batch is
received correctly, $v$ creates its hash-batch
$hb_v=\langle h, s_v, v \rangle$ and appends it to ledger
$\ledgerobject$. Then, the batch is saved in map $\hashToBatch$ and
$\TheSet$ is updated.
Observe that finding a valid hash-batch in ledger $\ledgerobject$ is not enough to assign it an epoch number, because it could be a hash-batch appended by a Byzantine server that refuses to provide the batch that corresponds to the
hash. Hence, we decided that a hash has to be signed by at least $f+1$ individual servers to be consolidated into an epoch. The reason is that a hash with hash-batches from $f+1$ different servers is guaranteed to be signed by at least one correct server, which will have a copy of the batch and serve it upon request. 
%
Then, the process of hash-batch $hb_w$ is continued by adding $w$ to
the set of signers in the map $\hashToSigners$, used to track the set
of servers that appended hash-batches with hash $h$
(Line~\ref{alg:hash-consol}). When this set of signers reaches $f+1$,
the server extracts the valid elements from the batch to a set $G$
which is assigned an epoch number. We call this process
\textit{epoch consolidation}. Also, the valid epoch-proofs in \batchOriginal
are added to the set $\proofs$. As usual, the epoch is added to the
$\history$, and an epoch-proof for this epoch is generated and
sent to the collector using $\setobject.\addepoch_v(ep)$. 
\textbf{Correctness Proof Sketch.}
\label{sec:intuition-correctness}
We provide here an intuitive explanation for why the \setchain algorithms presented in Section \ref{sec:algorithms} behave correctly, even in the presence of up to $f$ Byzantine faults\footnote{We defer the full proof of correctness to Appendix \ref{sec:proofofcorrectness} due to space restrictions.}. The~\setchain algorithms ensure that all valid elements are eventually included in an epoch, and that all correct servers maintain a consistent view of this sequence. 
%
When a client submits an element to a~\setchain server $v$, it is added to that server's local $\TheSet$ only if the element is valid. Once stored locally, the element is returned as a part of that~\setchain server's~$\TheSet$, whenever $\setobject.\get_v()$ is invoked by the client (Property~\ref{prop:add-get-local}).
When it consolidates an epoch, a server adds all valid elements from the epoch to $\TheSet$, if not already present, before recording the epoch in the $history$. Therefore, $history$, which is the collection of all the epochs, is a subset of $\TheSet$ (Property \ref{prop:consistent-sets}). 
Each server periodically creates a batch either after collecting enough elements or after a timeout. Since all valid elements in the batch are added to $\TheSet$ and eventually put into an epoch, these elements eventually end up in $history$, again by Properties \ref{prop:ledger-add-eventual-notify} and \ref{prop:ledger-consistent-notification} (Property \ref{prop:eventual-get}).
%
In \ref{alg:setchain-hash}, a hash-batch consolidates into an epoch (i.e., is assigned an epoch number) only after the server receives $f+1$ signatures on the hash of the batch. When a correct server $v$ either generates a hash-batch or receives a hash-batch from another server, it first verifies the batch elements (requesting them to the signer of that batch, if the original batch is not found locally), then signs the batch hash and appends it to the ledger. Eventually, a block containing this hash-batch signed by $v$ is notified to all the servers, and the other correct servers request the original batch from $v$ (if the original batch is not found locally), verify the batch elements and sign the batch hash and append it to the ledger. So, given that there at least $f+1$ correct servers, eventually this hash-batch receives $f+1$ signatures and the batch consolidates to an epoch.
All correct~\setchain servers use the same logic to construct epochs from observed blocks, all~\setchain servers observe the same blocks in the same order, and all the transactions inside the block are totally ordered (Properties \ref{prop:ledger-add-eventual-notify} and \ref{prop:ledger-consistent-notification}).
Hence, correct~\setchain servers create identical epochs and agree on the epoch content (Property~\ref{prop:consistent-gets}). Also, each correct server ensures that new epochs exclude elements already included in $history$, which leads directly to Property~\ref{prop:unique-epoch}.
Again, by Property \ref{prop:ledger-add-eventual-notify}, every element added to a server's local $\TheSet$ eventually appears in a block that is delivered to all~\setchain servers. After that, every $\setobject.\get()$ invocation to a correct~\setchain servers will include this element as part of their~$\TheSet$ (Property \ref{prop:get-global}). Since we assume that a server can neither create a valid element by itself nor collude with clients, a server cannot append a valid element with $\ledgerobject.\Append(e)$ without a client invocation of $\setobject.\add(e)$ (Property \ref{prop:add-before-get}).
To ensure that clients can validate the correctness of an epoch without needing to trust a single server, Setchain maintains epoch-proofs, which are \setchain signatures of the hash of the epoch. These proofs are included in the ledger as transactions, enabling any client to retrieve them through a $\setobject.\get()$ operation. When a correct~\setchain server encounters an epoch, after validation, it signs the hash of the epoch and appends it to the block-based ledger as its epoch-proof. Due to Property \ref{prop:ledger-add-eventual-notify}, this epoch would have been notified to all~\setchain servers and eventually the ledger contains $f+1$ epoch-proofs for the epoch, as the number of correct servers is more than $f+1$. Therefore, eventually, when a client invokes $\setobject.\get()$, at least $f+1$ epoch-proofs are returned for an epoch (Property \ref{prop:valid-epoch}) and the client can trust the epoch.


\section{Implementation and Performance Evaluation}\label{sec:implementation}

The three algorithms presented have been implemented in Golang using CometBFT as the block-based ledger. We describe some implementation aspects.

\textbf{CometBFT.}
CometBFT \cite{tendermint.design} (previously known as Tendermint) is a Byzantine-tolerant state machine replication engine. CometBFT is a blockchain middleware that supports replicating arbitrary applications, written in any programming language. More details about CometBFT is given in Appendix \ref{sec:cometbft}.

\textbf{Performance Evaluation Platform.}
We carried out the performance evaluation of the \setchain using a cluster.
Each machine in the cluster has an Intel(R) Xeon(R) E-2186G CPU @ 3.80GHz with 12 cores, 32GB RAM, and runs Debian GNU/Linux 11 (bullseye). We used Docker Engine version 20.10.5. The \setchain algorithms are implemented on CometBFT v0.38, with each ledger server running in a docker container and each container running in a separate machine in the cluster. The containers have no limit on CPU or RAM usage. Each docker container contains one client, one collector module, and one CometBFT server.

\textbf{Experiment Scenarios.} \label{subsec:experimentscenarios}
The parameters considered for the experiments in this work are listed in Table \ref{tab:setchain_param}. The sending rate in Table~\ref{tab:setchain_param} is the total rate at which elements are added across all clients. Each client adds at a rate $sending\_rate/server\_count$ by sending the elements to their local server (the server running in the same docker). The $network\_delay$ parameter is an artificial latency that is added to all communications between servers to simulate the impact of moving from a cluster to a wide area deployment.
The mempool is an important element of CometBFT, where the unconfirmed ledger transactions are held after validation and before being included in the Blockchain by CometBFT. The default mempool setting of CometBFT allows only a maximum of $5,000$ transactions to be present in the mempool. For the \setchain evaluation we did not want this to be a bottleneck. Hence, after a few trial and error experiments, the mempool size has been set to $10,000,000$ transactions or $2$ GB, whichever is reached first.
By default, each experiment has been designed so that clients add elements in the \setchain for 50 seconds. When all the elements have been included in epochs and all the epoch-proofs have been inserted in ledger blocks, the experiment ends. The logs are then collected and analyzed. The injection and throughput metrics are based on the \setchain elements sent by the clients, measured in \textit{elements per second (el/s)}.
To keep the experiments realistic we use as \setchain elements transactions downloaded from Arbitrum \cite{arbitrum}. Regarding the basic algorithms used, we use SHA512 \cite{SHA512} for hashing; signing is done with ed25519, which is part of the EdDSA (Edwards-Curve Digital Signature Algorithm) family \cite{RFC8032,Bernstein2012}, and we use Brotli for compression \cite{rfc7932}. The average size of an Arbitrum transaction is approximately $438$ bytes with a standard deviation of $753.5$. The length of an epoch-proof is $139$ bytes. In Compresschain, the average length of a compressed batch is approximately $16,000$ bytes with a standard deviation of $2,100$ for a collector limit of $100$, and an average of $66,000$ bytes with a standard deviation of $15,000$ for a collector limit of $500$. So, the compression ratio varies approximately from $2.5$ to $3.5$. In Hashchain, the length of a hash-batch is $139$ bytes. In CometBFT, one ledger block is generated roughly every $1.25$ seconds (i.e., the block rate is approximately $0.8$ blocks/s). Unless otherwise stated, the ledger blocksize used in CometBFT is $0.5$ MB.


\subsection{Performance Results}\label{sec:result}

\textbf{Analysis.} We derived an analytical formula to estimate the theoretical highest throughput of each algorithm as a function of system parameters.
These formulas are used in this section to compute the theoretical throughput values, which are then compared with our experimental results.
Due to space constraints, the full derivation is deferred to Appendix~\ref{sec:analysis}.

\begin{figure}[t!]
\centering



   

        
    \begin{tikzpicture}
        \begin{axis}[
            xlabel={\scriptsize Time (s)},
            ylabel={\scriptsize el/s},
            ylabel style={yshift=-5pt},
            ymode = log,
            tick label style={font=\scriptsize},
            scale=0.40,
            legend style= {font = \tiny, at={(0.7,0.18)},legend image post style={xscale=0.4},anchor=center, every axis legend/.append style={line width=0.5pt},inner sep=1pt, nodes={inner sep=1pt}},
            ytick={1,10,100,1000,10000,100000},
            grid=both
        ]

        \addplot[line width=1, red] table[col sep=comma, x expr=\coordindex, y index=0]{sr5000cl100/rolling_vanilla.csv};
        \addlegendentry{Vanilla};

        \addplot[line width=1, blue] table[col sep=comma, x expr=\coordindex, y index=0]{sr5000cl100/rolling_compress.csv};
        \addlegendentry{Compresschain};

        \addplot[line width=1, green] table[col sep=comma, x expr=\coordindex, y index=0]{sr5000cl100/rolling_hash.csv};
        \addlegendentry{Hashchain};

        \addplot[color=black, line width=1] coordinates {(50,5) (50,20000)};

        \addplot[color=red, line width=1, dotted] coordinates {(0,955) (350,955)};
        
        \addplot[color=blue, line width=1, dotted] coordinates {(0,2497) (350,2497)};
        
        \addplot[color=green, line width=1, dotted] coordinates {(0,5000) (350,5000)};
    
        \end{axis}
    \end{tikzpicture}
    \label{fig:5000txscl100}
    \begin{tikzpicture}
        \begin{axis}[
            xlabel={\scriptsize Time (s)},
            ylabel={\scriptsize el/s},
            ylabel style={yshift=-5pt},
            ymode = log,
            tick label style={font=\scriptsize},
            scale=0.40,
            legend style= {font = \tiny, at={(0.7,0.13)},legend image post style={xscale=0.4},anchor=center, every axis legend/.append style={line width=0.5pt},inner sep=1pt, nodes={inner sep=1pt}},
            grid=both
        ]
        
        \addplot[line width=1, blue] table[col sep=comma, x expr=\coordindex, y index=0]{sr10000cl100/rolling_compress.csv};
        \addlegendentry{Compresschain};
        
        \addplot[line width=1, green] table[col sep=comma, x expr=\coordindex, y index=0]{sr10000cl100/rolling_hash.csv};
        \addlegendentry{Hashchain};
   
        \addplot[color=black, line width=1] coordinates {(50,20) (50,30000)};
            
        \addplot[color=blue, line width=1, dotted] coordinates {(0,2497) (320,2497)};
        
        \addplot[color=green, line width=1, dotted] coordinates {(0,10000) (320,10000)};
    
        \end{axis}
    \end{tikzpicture}
    \begin{tikzpicture}
        \begin{axis}[
            xlabel={\scriptsize Time (s)},
            ylabel={\scriptsize el/s},
            ylabel style={yshift=-5pt},
            ymode = log,
            tick label style={font=\scriptsize},
            scale=0.40,
            legend style= {font = \tiny, at={(0.7,0.13)},legend image post style={xscale=0.4},anchor=center, every axis legend/.append style={line width=0.5pt},inner sep=1pt, nodes={inner sep=1pt}},
            grid=both,
            ytick={1,10,100,1000,10000,100000}
        ]
        
        \addplot[line width=1, blue] table[col sep=comma, x expr=\coordindex, y index=0]{sr10000cl500/rolling_compress.csv};
        \addlegendentry{Compresschain};
        
        \addplot[line width=1, green] table[col sep=comma, x expr=\coordindex, y index=0]{sr10000cl500/rolling_hash.csv};
        \addlegendentry{Hashchain};
   
        \addplot[color=black, line width=1] coordinates {(50,1) (50,30000)};
        
        \addplot[color=blue, line width=1, dotted] coordinates {(0,3330) (230,3330)};
        
        \addplot[color=green, line width=1, dotted] coordinates {(0,10000) (230,10000)};

        \end{axis}
    \end{tikzpicture}
\caption{Throughput over time of the \setchain algorithms for (left) Sending rate = $5,000$ el/s  and Collector size = $100$, (center) Sending rate = $10,000$ and Collector size = $100$, and (right) Sending rate = $10,000$ el/s and Collector size = $500$.
Solid lines plot the rolling average number of elements committed in 9 seconds. The vertical bar marks the time clients add the last element (roughly after 50 s). The dotted horizontal lines show the minimum of the sending rate and the analytical throughput computed in Section~\ref{sec:results-analysis}.}
\label{fig:throughput_comparison}
\vspace{-1.5em}
\end{figure}
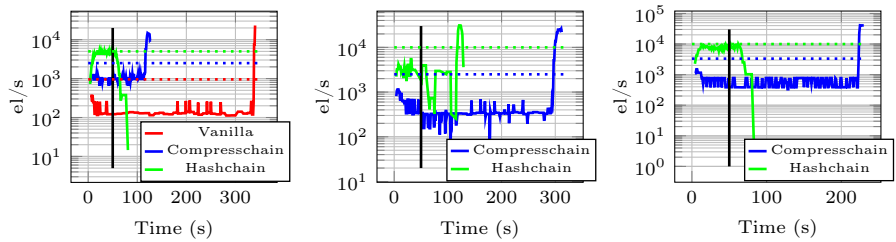
\textbf{Throughput Comparison.}
Fig.~\ref{fig:throughput_comparison} shows the throughput (in elements \textit{committed} per second, el/s) over time of the three \setchain algorithms for different sending rates with $10$ servers and with no increase in the network delay. An element added becomes \textit{committed} when the epoch in which the element has been included gets at least $f+1$ epoch proofs in the ledger. 
The throughput values obtained in the analysis (Section~\ref{sec:results-analysis}) are also shown for reference.
%
%
In Fig.~\ref{fig:throughput_comparison} left, for a sending rate of $5,000$ el/s and collector size $100$, it can be observed that both Vanilla and Compresschain take a long time to commit all the elements and show the peak at the end which is a symptom of stress. 
Moreover, both algorithms show a smaller throughput than the analytical bound. 
Hashchain on the other hand is able to cope and finishes shortly after all elements are added. 
In Fig.~\ref{fig:throughput_comparison} center, for a sending rate of $10,000$ el/s and collector size $100$, we have excluded Vanilla to compare Compresschain and Hashchain better. In this plot it can be seen that both Compresschain and Hashchain are stressed, although the former much more than the latter. As shown in Fig.~\ref{fig:throughput_comparison} right, increasing the collector size to $500$ helps to relieve the stress from Hashchain, while it does not help Compresschain much.
In Table~\ref{tab:throughput}, we present the average throughput achieved up to 50s for all three algorithms in every experiment in Fig.~\ref{fig:throughput_comparison}.

%

\begin{figure}[h]
\centering
\hspace*{-4em}
\begin{tikzpicture}
        \begin{axis}[
            xlabel={\scriptsize Time (s)},
            ylabel={\scriptsize el/s},
            ylabel style={yshift=-5pt},
            ymode = log,
            tick label style={font=\scriptsize},
            scale=0.6,
            legend to name=highest_throughput_legend, 
            legend columns=1,
            legend cell align = {left},%
            grid=both,
            ytick={10,100,1000,10000,100000,1000000}
        ]

        \addplot[line width=1, red] table[col sep=comma, x expr=\coordindex, y index=0]{sr1000cl100/rolling_vanilla.csv};
        \addlegendentry{Vanilla (1000 el/s)};
        
        \addplot[line width=1, blue] table[col sep=comma, x expr=\coordindex, y index=0]{sr3500cl500w_validation/rolling_compress.csv};
        \addlegendentry{Compresschain (3,500 el/s)};

        \addplot[line width=1, cyan] table[col sep=comma, x expr=\coordindex, y index=0]{sr3500cl500wo_validation/rolling_compress.csv};
        \addlegendentry{Compresschain w/o};
        \addlegendimage{empty legend}\addlegendentry{Decompression (3,500 el/s)} 
        
        \addplot[line width=1, green] table[col sep=comma, x expr=\coordindex, y index=0]{sr30000cl500/rolling_hash.csv};
        \addlegendentry{Hashchain (25,000 el/s)};

        \addplot[line width=1, teal] table[col sep=comma, x expr=\coordindex, y index=0]{sr150000cl500/rolling_hash.csv};
        \addlegendentry{Hashchain w/o Hash-};
        \addlegendimage{empty legend}\addlegendentry{Reversal (150,000 el/s)} 
   
        \addplot[color=black, line width=1] coordinates {(50,10) (50,200000)};

        \addplot[color=red, line width=1, dashed] coordinates {(0,955) (85,955)};
        
        \addplot[color=cyan, line width=1, dashed] coordinates {(0,3330) (85,3330)};
        
        \addplot[color=teal, line width=1, dashed] coordinates {(0,147857) (85,147857)};

        \end{axis}
\end{tikzpicture}
\hspace*{-5em}
\begin{tikzpicture}
   \raisebox{2.5em}{ \scalebox{0.6}{\ref{highest_throughput_legend}}}     
\end{tikzpicture}
\hspace*{-2em}
\begin{tikzpicture}
\begin{axis}[
    domain=5e5:128e6, samples=400,
    xlabel={\scriptsize Block Size (MB)},
    ylabel={\scriptsize el/s},
    ylabel style={yshift=-5pt},
    xtick={5e5,1e6,2e6,4e6,8e6,16e6,32e6,64e6,128e6},
    xticklabels={0.5,1,2,4,8,16,32,64,128}, 
    scaled x ticks=false,
    xminorgrids=true,
    yminorgrids=true,
    ymode = log,
    xmode = log,
    tick label style={font=\scriptsize},
    scale=0.6,
    grid=both,
    ytick={1000,10000,100000,1000000,10000000,100000000}]
  \addplot[line width=1, red, densely dashed]{(0.8*(x-1390))/438};
  \addplot[line width=1, cyan, densely dashed]{x*0.006351558};
  \addplot[line width=1, teal, densely dashed]{0.2820*x};
\end{axis}

\end{tikzpicture}

    
\caption{(left) Highest throughput measured in the evaluation platform with collector size $500$. 
The vertical bar marks the time clients add the last element (roughly after 50 s). Solid lines plot the rolling average number of elements committed in 9 s.  The dashed lines show the analytical throughput. (right) Analytical throughput of the \setchain algorithms for various block sizes with collector size $500$.}
\label{fig:throughput_highest}
\vspace{-1.5em}
\end{figure}
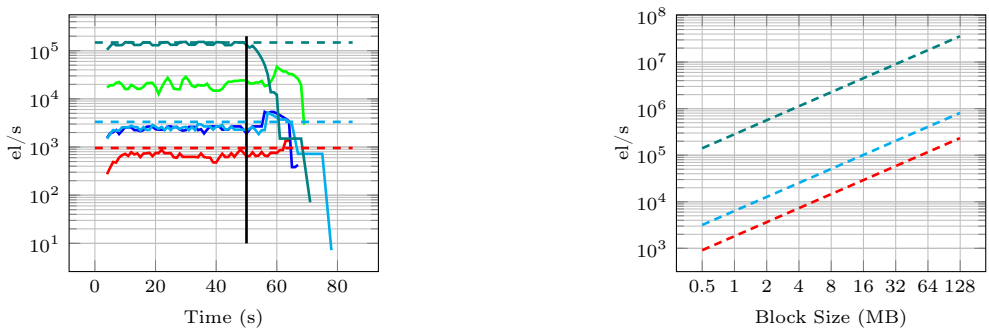

\textbf{Pushing the Hashchain Limits.} \label{subsubsec:pushing_hashchain_limits}
As seen in Fig.~\ref{fig:throughput_comparison} right, there is a noticeable gap between the analytical achievable throughput ($147,857$ el/s, see Section~\ref{sec:results-analysis}) and the actual throughput observed with the implementation of Hashchain, because the experiments did not try large sending rates. To determine the highest achievable throughput, we increased the sending rate. However, we identified a bottleneck around $20,000$ el/s, regardless of further increase in the sending rate beyond that threshold and of the collector size (as seen in Fig.~\ref{fig:throughput_highest} left). The most likely cause of this limitation is the hash-reversal process, where batches are exchanged between servers across the network for each batch-hash sent by the collector. 
%
In our implementation, Setchain servers handle the distribution of transaction batches. More efficient methods could be employed, such as having only a set of $2f+1$ servers sign each batch-hash and epoch, utilizing optimistic validation of hash-batches (share a batch only on request), or implementing alternative distributed batch-sharing mechanisms. To assess the impact of hash-reversal, we conducted experiments removing this service and the validation of hash-batches, while assuming that all servers are correct---meaning all hash-batches are inherently valid.
Fig.~\ref{fig:throughput_highest} left illustrates the highest achieved throughput by Hashchain with and without hash-reversal. From this, it is clear that hash-reversal is the bottleneck, and with a more efficient implementation of hash-reversal, Hashchain would scale much better in throughput. The results confirm that hash-reversal significantly limits performance. Without it, Hashchain reaches an average throughput of $133,882$ el/s for the first $50$ seconds with a sending rate of $150,000$ el/s, compared to an average throughput of $20,061$ el/s for the first $50$ seconds with a sending rate of $25,000$ el/s with hash-reversal enabled. It is also important to note that these results were obtained with a collector size of $500$, chosen to maintain comparability with other experiments in this work. A larger collector size would likely yield even higher throughput without hash-reversal.
For comparison, Fig.~\ref{fig:throughput_highest} (left) also shows the highest achieved throughput by Compresschain and Vanilla. For Compresschain we run it with and without decompression and validation to measure the impact of these processes. Observed throughputs are well below Hashchain’s throughput, even with hash-reversal.
The highest throughputs observed with Vanilla, Compresschain Light, and Hashchain Light are very close to the analytical values.
Fig.~\ref{fig:throughput_highest} (right) shows the analytical throughput for all three \setchain algorithms for larger block sizes, while keeping the other parameters constant (For~\ref{alg:setchain-compress} and~\ref{alg:setchain-hash},  Collector size = $500$). As can be seen, with the usual 4MB blocksize of CommetBFT, Haschain reaches a throughput of $10^6$ el/s, and with blocks of 128 MB reaches more than 30 million el/s.


\begin{figure}[h]
\hspace*{-4em}
\begin{subfigure}[b]{0.45\textwidth}
\centering
\includegraphics[trim={1cm 5cm 10cm 1cm},clip,width=1.3\textwidth]{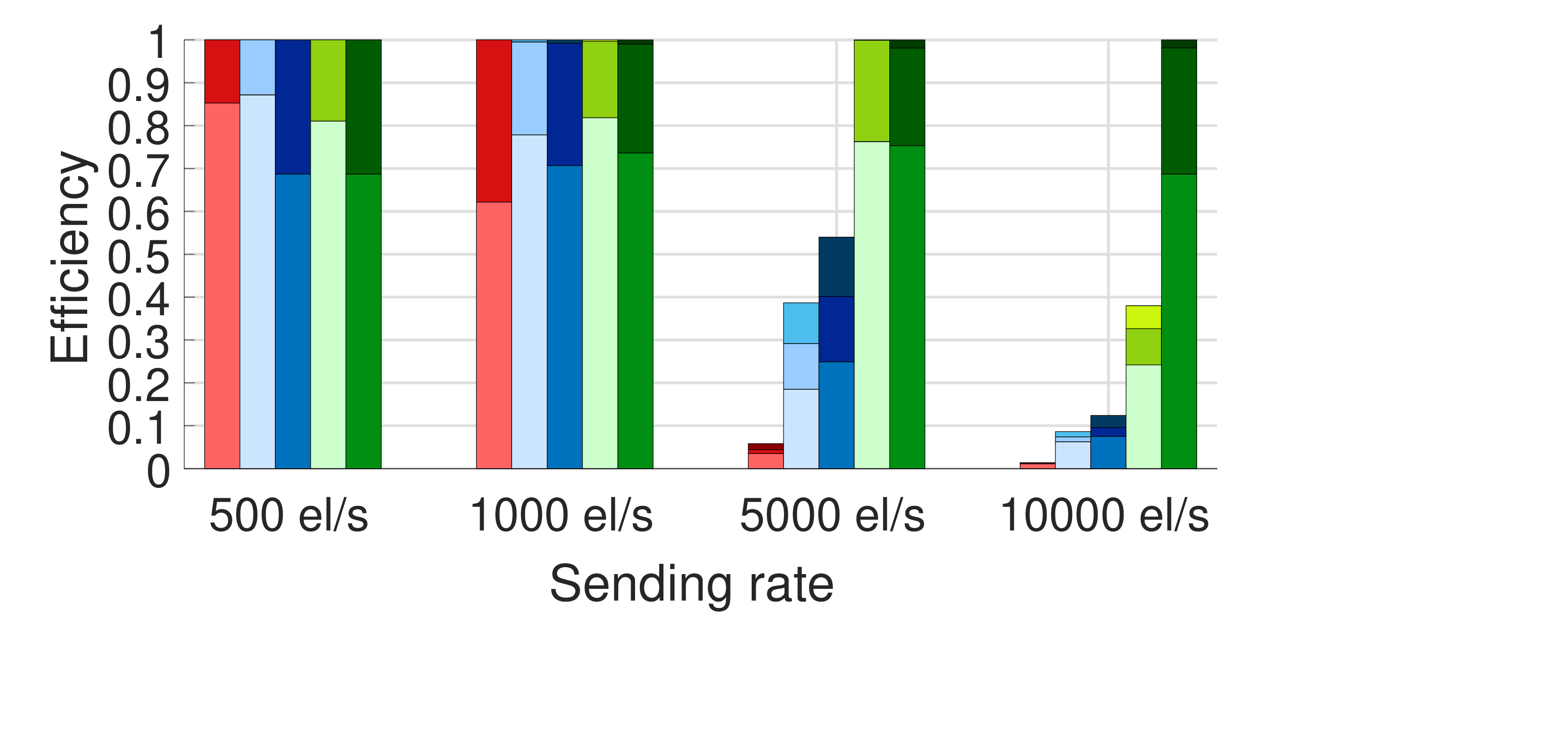}
\caption{\scriptsize Impact of sending rate. 
}
\label{fig:impactofsendrate}
\end{subfigure}
\hfill
\begin{subfigure}[b]{0.45\textwidth}
\centering
\includegraphics[trim={1cm 3cm 6cm 1cm},clip,width=1.3\textwidth]{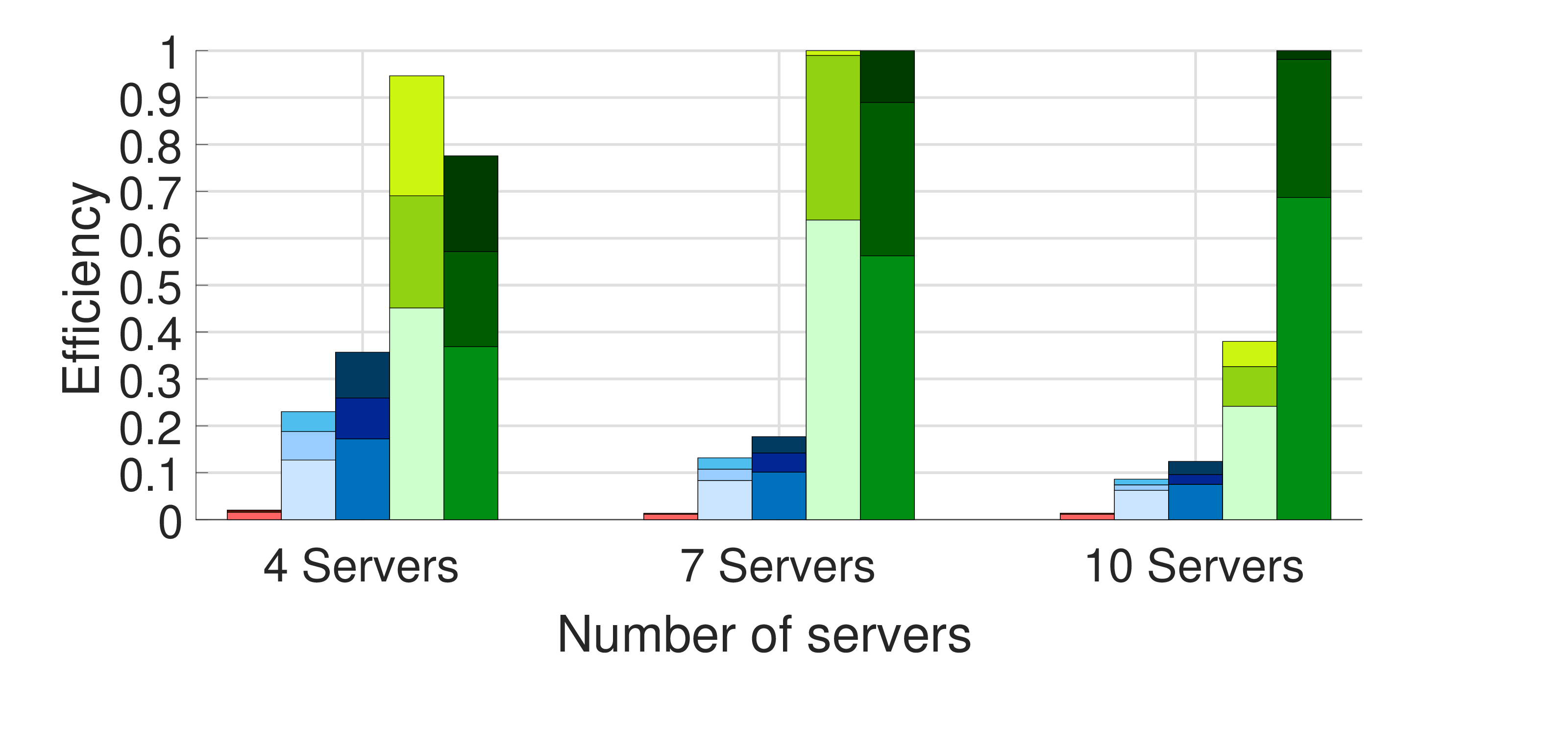}
\caption{\scriptsize Impact of number of servers. 
}
\label{subfig:impactofnofserverssr10000}
\end{subfigure}
\hspace*{-4em}
\begin{subfigure}[b]{0.45\textwidth}
\centering
\includegraphics[trim={1.5cm 3cm 4cm 1cm},clip,width=1.3\textwidth]{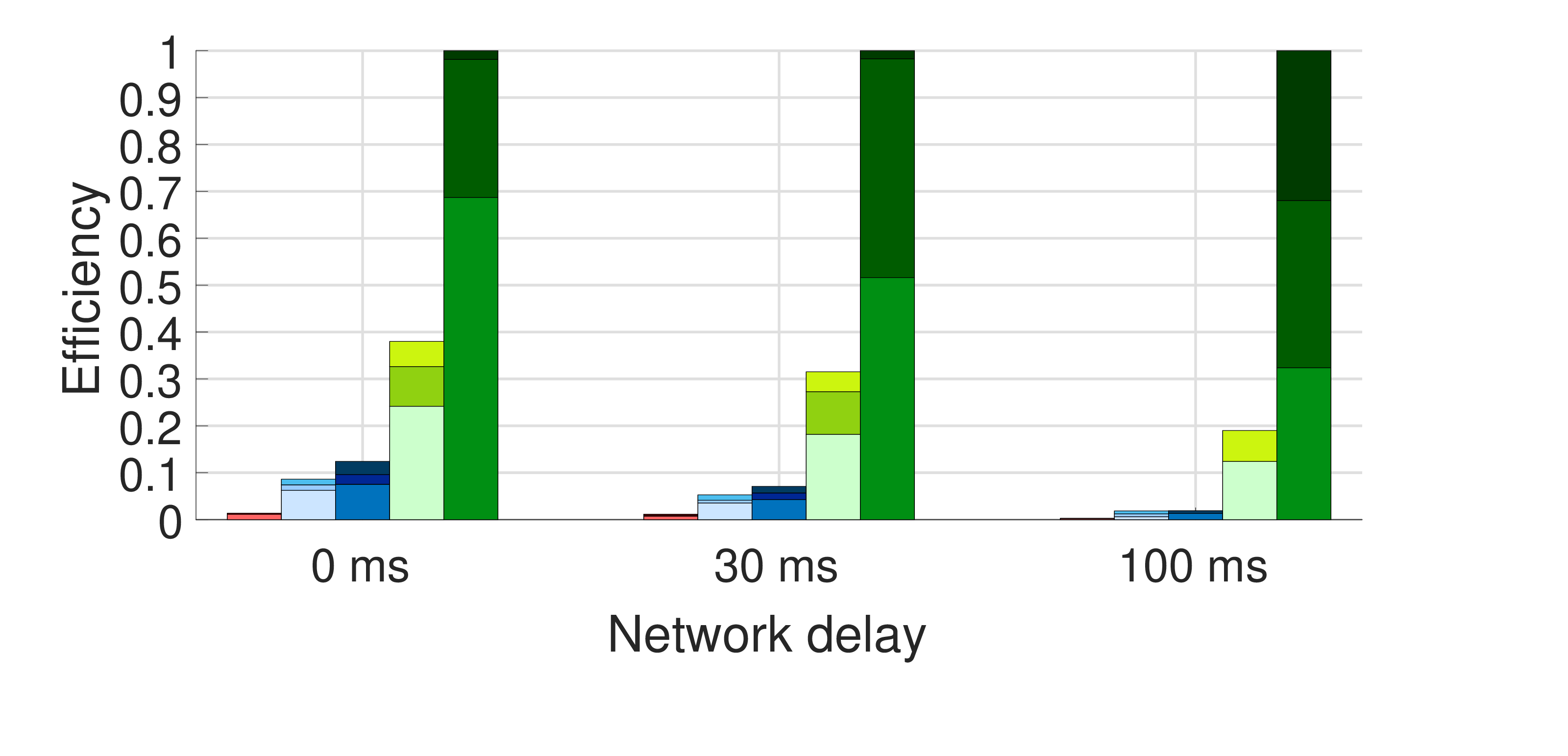}
\caption{\scriptsize Impact of network delay.
}
\label{subfig:impactofnetworkdelaysr10000}
\end{subfigure}
\hspace{8em}
\begin{subfigure}[b]{0.45\textwidth}
\centering
\includegraphics[trim={6cm 1cm 6cm 1cm},clip,width=\textwidth]{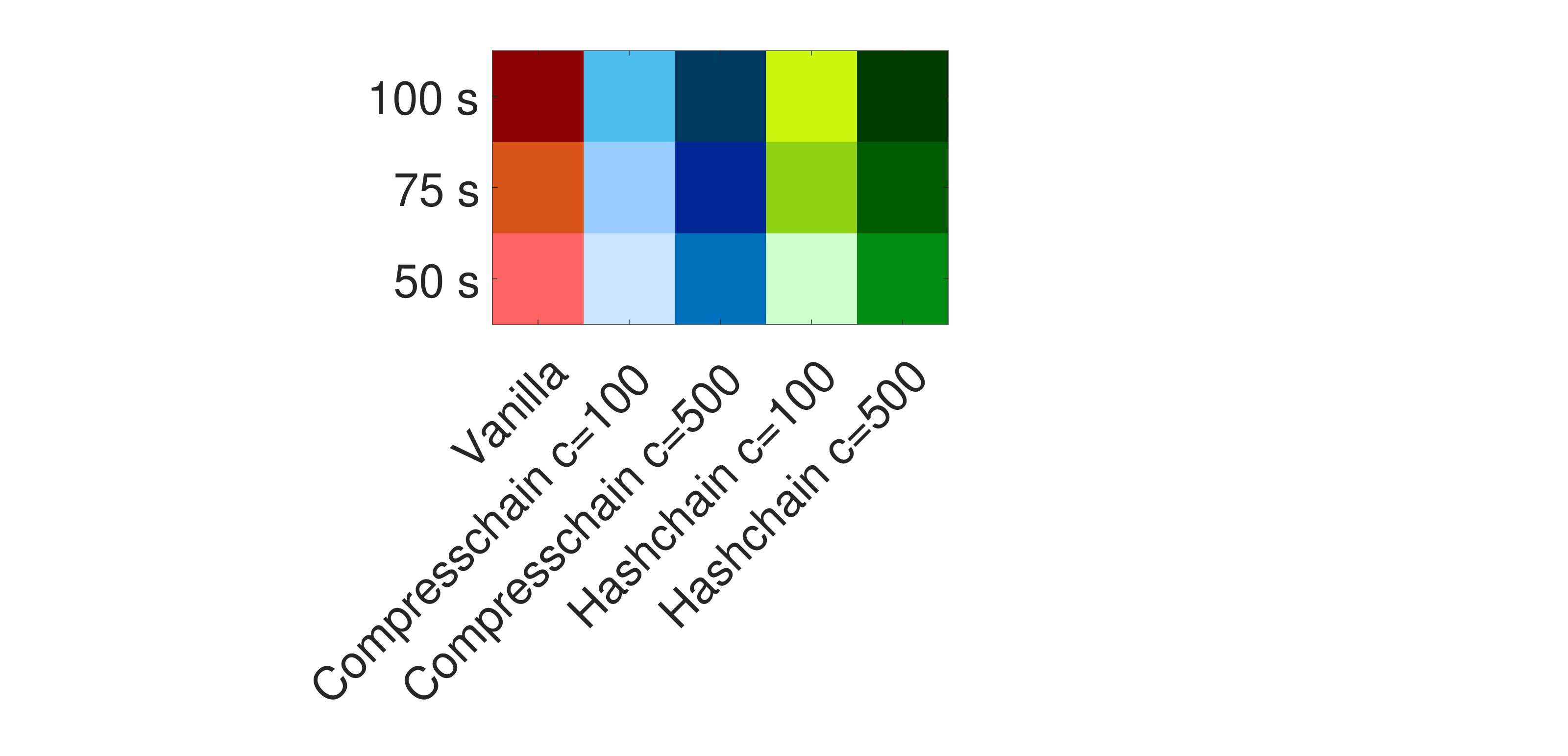}
\caption{\scriptsize Color codes}
\label{subfig:legendefficiencyfactorlarge}
\end{subfigure}
\caption{Efficiency values observed under different scenarios. The base scenario has $10$ servers, a sending rate of $10,000$ el/s, and no ($0$) network delay.}
\label{fig:efficiency}
\vspace{-2.5em}
\end{figure}

\textbf{Efficiency.} \label{subsec:efficiencyfactor}
To quantify easily the level of stress of an algorithm we define and use a metric that we call \emph{efficiency}. The efficiency is obtained by dividing the number of elements committed by the total number of elements added. We calculate the efficiency factor after $50$, $75$, and $100$ seconds. Remember that clients add elements for $50$ seconds in every experiment. If the algorithm is not stressed we expect to observe efficiency close to $1$ after $50$ seconds, and exactly $1$ after $75$ seconds. In this section, we use a base scenario with $10$ servers, a sending rate of $10,000$ el/s, and no ($0$) network delay, and vary one of these parameters at a time.
Fig.~\ref{fig:impactofsendrate} shows the efficiency for four different sending rates ($500$, $1000$, $5,000$, and $10,000$), and for collector sizes $100$ and $500$, when applicable. This figure shows that all algorithms reach full efficiency in 70 seconds for the lower rates $500$ and $1,000$. Then, for rates $5,000$ and $10,000$, Vanilla has very low efficiency. Compresschain, on the other hand, also reduces its efficiency significantly, and increasing the collector size from $100$ to $500$ does not help much. Finally, Hashchain only shows a decrease in efficiency with a rate of $10,000$, which is alleviated by increasing the collector size.
%
%
In Fig.~\ref{subfig:impactofnofserverssr10000},
we compare the change in efficiency observed when the number of servers is varied, maintaining a sending rate of $10,000$ el/s. Observe that Vanilla has the lowest efficiency, even for 4 servers. The efficiency of Compresschain is also low and increasing the collector size does not improve it much. Moreover, it decreases as the number of servers increases.  
Finally, Hashchain only shows low efficiency with 10 servers and a collector size of 100. Interestingly, it shows a less-than-perfect efficiency for 4 servers, which may be due to having fewer servers available for the reverse hashing process.
%
%
%
Fig.~\ref{subfig:impactofnetworkdelaysr10000}
shows how adding an artificial delay to all communication messages affects efficiency. This models the impact of moving from a local-area network to a wide-area network. As can be observed, the increase in network delay reduces the efficiency. However, even with the largest delay of $100$ ms, Hashchain with a collector size of $500$ achieves full efficiency in $100$ seconds.
%

%

\begin{figure}[t!]
\includegraphics[trim={2.5cm 7cm 12cm 0.5cm},clip,width=0.33\textwidth]{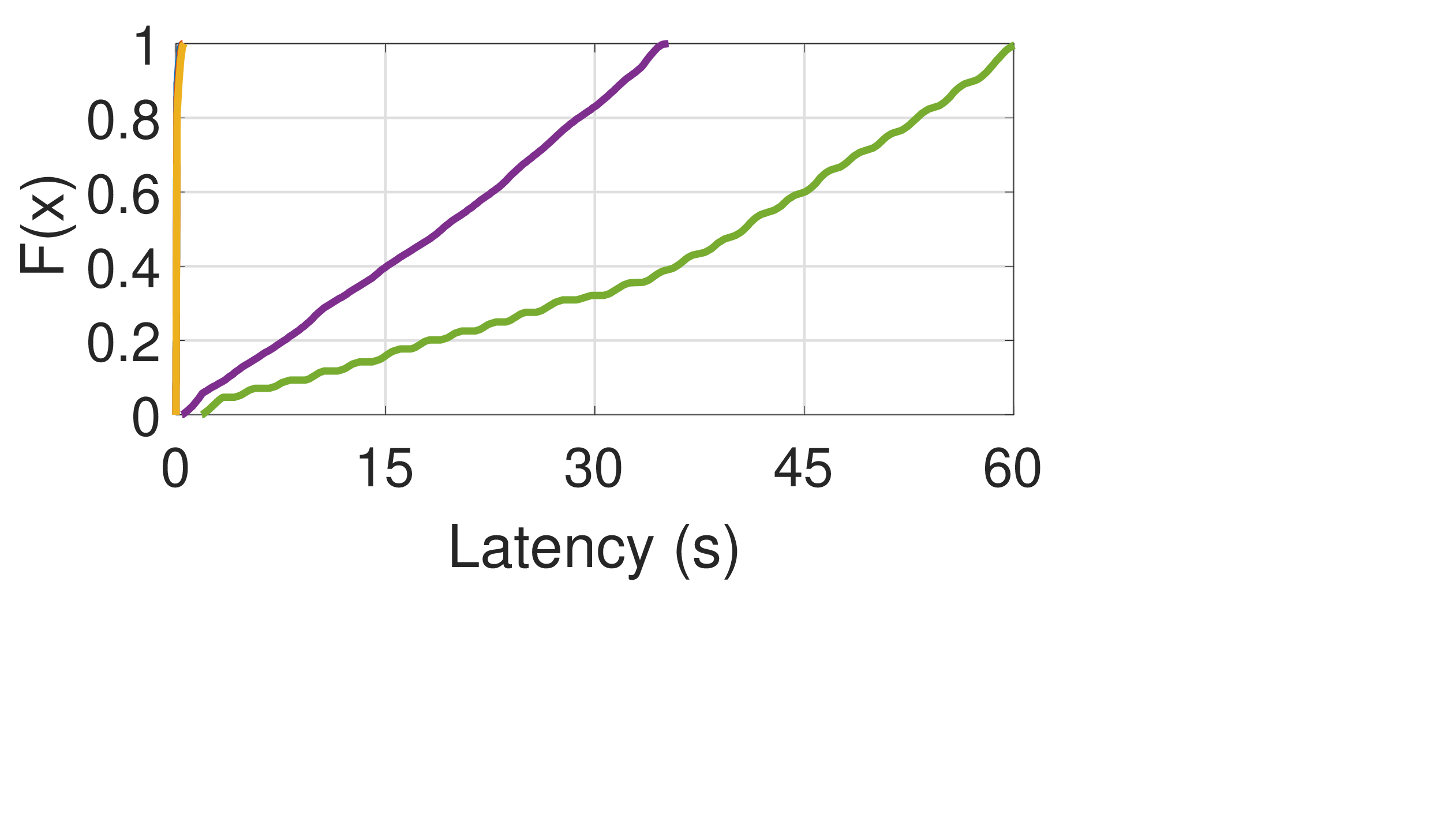}
\label{subfig:latencyvanilla}
\includegraphics[trim={2.5cm 6cm 12.5cm 0.5cm},clip,width=0.31\textwidth]{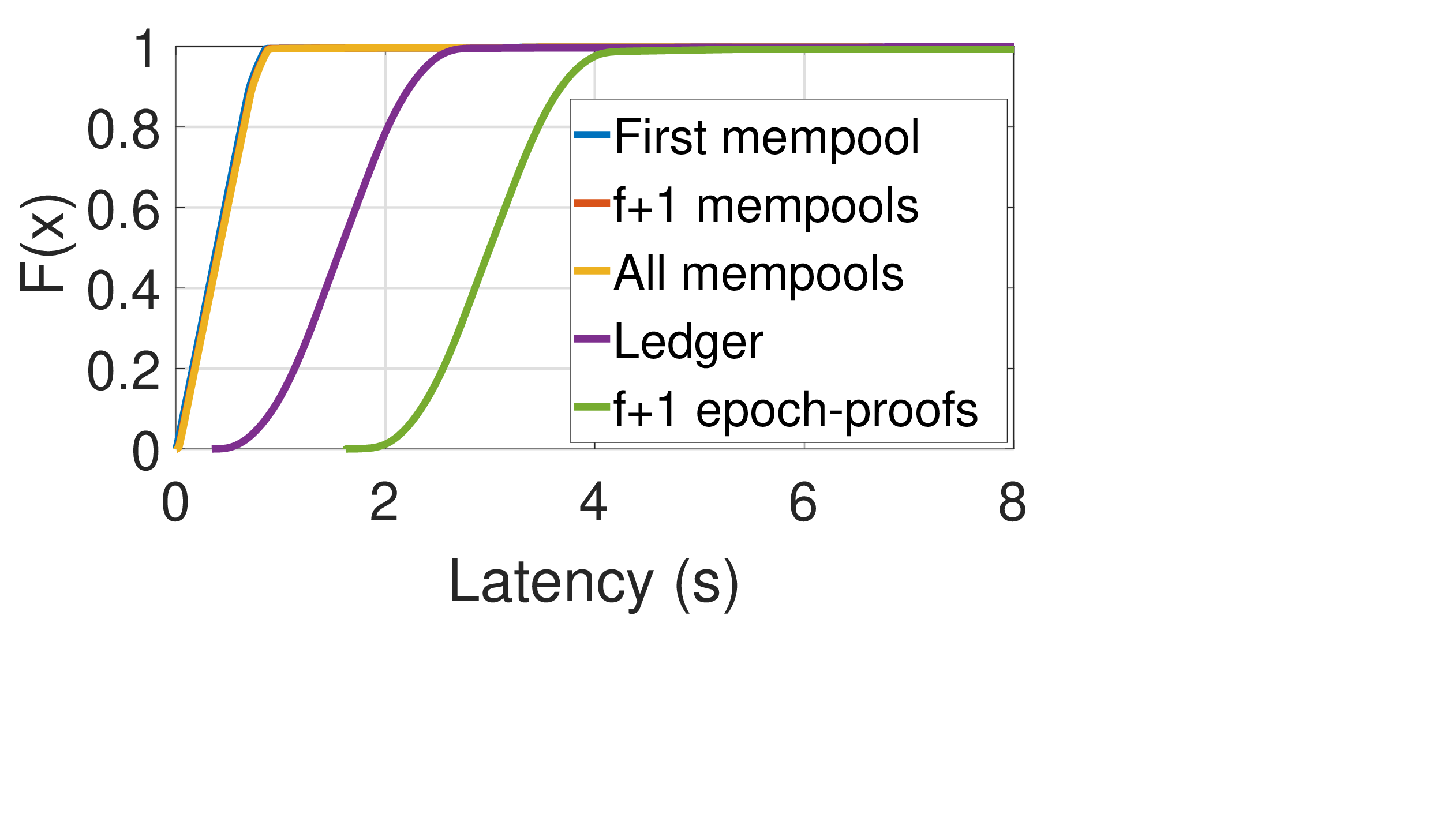}
\label{subfig:latencycompress}
\includegraphics[trim={2.5cm 7cm 12.5cm 0.5cm},clip,width=0.33\textwidth]{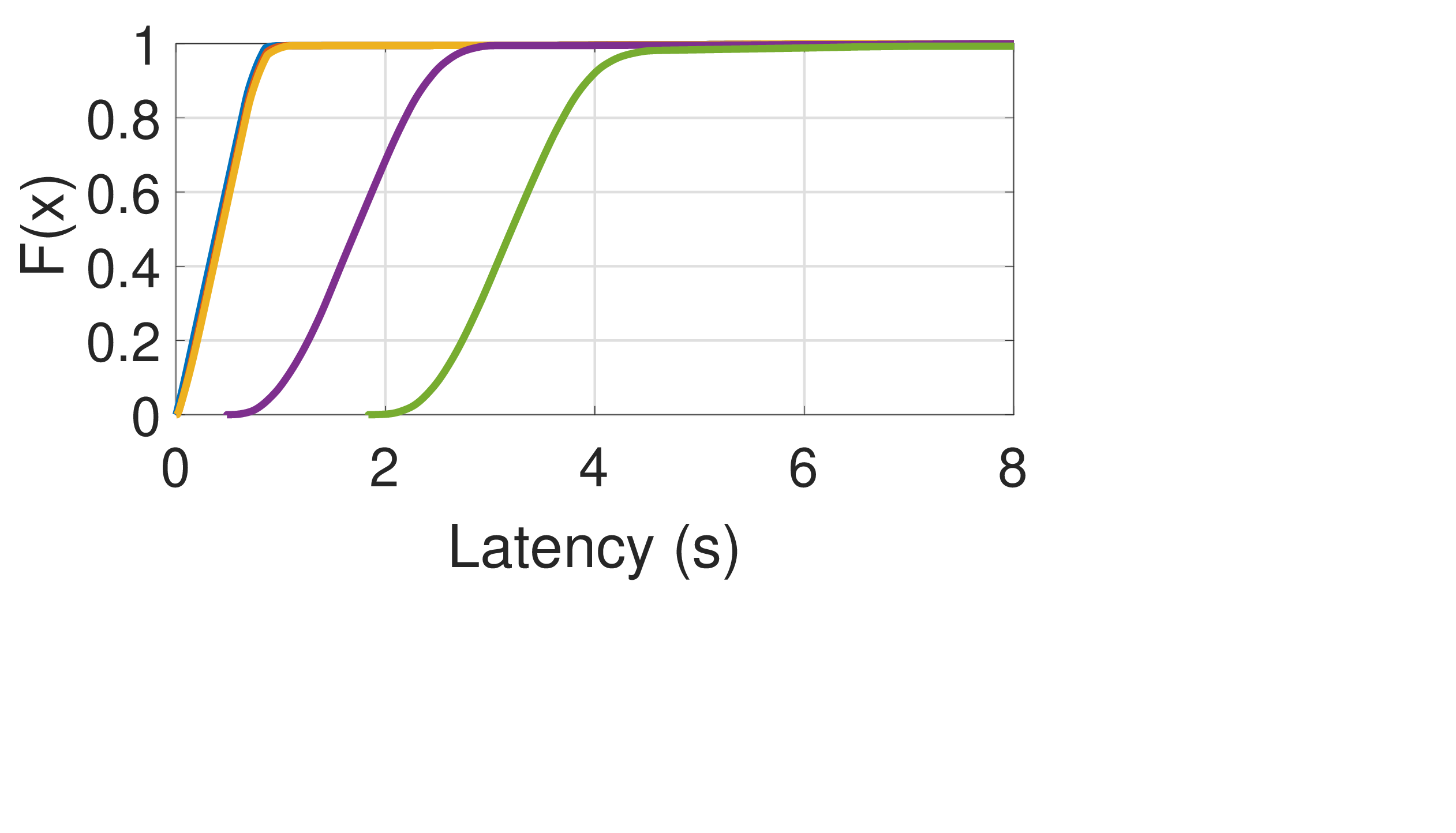}
\label{subfig:latencyhash}
\caption{Cumulative distribution function $F(x)$ of the latency experienced by the elements added to the \setchain to reach several stages in their process for (left) \ref{alg:setchain-vanilla}, (center) \ref{alg:setchain-compress} and (right) \ref{alg:setchain-hash}. Collector size is 100.
The scenario is with $10$ servers, a sending rate of $1,250$ el/s, and no network delay.}
\label{fig:latencygraphs}
\vspace{-1.5em}
\end{figure}

\textbf{Latency.}
When a valid element is sent by a client to its \setchain server, it is eventually sent in a ledger transaction to a CometBFT server.
The transaction, once validated, enters the mempool of the CometBFT server. Then, it is shared with other CometBFT servers via a gossip protocol. These servers validate the transaction and replicate it in their local mempool. Eventually, the transaction will be included in a ledger block, and the \setchain servers will send epoch-proofs to the ledger. When $f+1$ such proofs have been included in ledger blocks, the element has been committed.
Fig.~\ref{fig:latencygraphs} shows the distribution of the latencies experienced by the elements added to the \setchain to reach five different stages until commit, for each algorithm. 
It shows the latency until an element reaches (1) a CometBFT server mempool, (2) $f+1$ CometBFT server mempools, (3) all the CometBFT server mempools, and (4) the ledger. Finally, it shows the (5) commit latency (when $f+1$ epoch-proofs of the epoch in which the element was included are in the ledger).
%
%
In Fig.~\ref{fig:latencygraphs} left, the mempools are reached almost immediately because in Vanilla the elements are directly sent to the CometBFT server. Compresschain and Hashchain, on the other hand, wait for the collector to be full or some timeout before sending the batch of elements to the CometBFT server as a transaction. This explains the delay in mempool latencies in Fig.~\ref{fig:latencygraphs} center and right.
Consider now the latency it takes for an element to reach the ledger and to be committed (with $f+1$ epoch-proofs). With Vanilla these two time gaps are tens of seconds in most cases, while for Compresschain and Hashchain they are usually between one or two seconds. These two algorithms show in this scenario commit latency and finality below 4 seconds with probability almost 1.




\section{Conclusion and Discussions}\label{sec:conclusion}
In this work, we presented three real-world \setchain~implementation
algorithms built on top of a block-based ledger. We have formally
verified that the properties of \setchain~hold for these
algorithms (proofs are sketched for space limitation).
We have implemented the three algorithms and performed an
empirical evaluation to evaluate the effectiveness of each approach in
various scenarios.
Among the three, the Hashchain algorithm consistently performed better
across all evaluated metrics. Its efficient use of hashing techniques
allowed for improved scalability, making it particularly suitable for
large-scale applications. Compresschain, while offering some
improvements over Vanilla, did not match the performance gains
observed with Hashchain.



%
%
%
\bibliographystyle{splncs04}
\bibliography{bibfile}
\newpage

\appendix

\section{Related Works}\label{sec:relatedworks}
Scalability remains a central challenge as blockchains gain
popularity. Various techniques have been proposed to address this
issue.
Bitcoin, one of the most widely known blockchains, operates on Proof-of-Work (PoW), offering robust security and decentralization. However, Bitcoin suffers from low throughput, averaging around $7$ TPS, due to its block size and 10-minute block intervals~\cite{atomicwallet_tps}. Additionally, Bitcoin’s latency is significant, with blocks typically taking $10$ minutes to confirm, and (optimistic) finality is reached after about $60$ minutes with the six-block confirmation rule \cite{bitcoin_confirmation_wiki}.

Ethereum, which now uses PoS, handled $12$ to $30$ TPS~\cite{ethereumtps} with PoW. Finality is generally reached within $12$ to $15$ minutes, as two ``epochs'' (each $32$ slots or $6.4$ minutes) are justified and finalized~\cite{blocknative_ethereum_merge}. Sharding~\cite{omniledger,rapidchain}, is an on-chain solution which partitions the blockchain into smaller segments to enable parallel transaction processing. Though ``The Merge'' did not improve Ethereum's throughput, as its main goal was to transition to PoS, Ethereum’s Danksharding initiative, part of the larger Ethereum 2.0 roadmap, is projected to increase throughput to more than $100,000$ TPS once fully implemented~\cite{danksharding}.

While sharding is still under development, Layer 2 solutions like rollups provide immediate scalability benefits. Layer 2 solutions include state channels~\cite{lightning}, sidechains, and various forms of rollups, such as Plasma~\cite{plasma}, Optimistic Rollups~\cite{arbitrum}, and Zero Knowledge Rollups~\cite{zksync-era}. These technologies facilitate high-volume transaction processing off the main blockchain, thereby alleviating network congestion and enhancing transaction speed and cost efficiency.

Arbitrum~\cite{arbitrum}, a Layer 2 solution for Ethereum, enhances scalability using Optimistic Rollups.  Its recent upgrade, Arbitrum Nitro, theoretically enables the network to achieve up to $40,000$ TPS. However, latency and finality time for transactions on Arbitrum depend on the Ethereum Layer 1 latency and finality time, as Arbitrum posts batches of transactions back to Ethereum for settlement. Arbitrum employs a fraud-proof mechanism, with a one-week challenge period for resolution, though challenges are uncommon.

Algorand~\cite{fu2024quantifying}, using a Pure Proof-of-Stake (PPoS) consensus mechanism, offers a throughput of around $6,000$ TPS. It provides fast finality, with transactions confirmed in approximately $3.5$ seconds. 
Cosmos~\cite{webisoft_cosmos_tps}, which employs the Tendermint consensus algorithm~\cite{Buchman.2018.Tendermint}, can handle around $10,000$ TPS. Tendermint provides finality within $6$-$7$ seconds. However, the performance depends on the specific blockchain’s design, optimizations, and network conditions.

Hyperledger Fabric, a permissioned blockchain, introduced Byzantine Fault Tolerant (BFT) ordering service in version v3.0.0~\cite{hyperledger_fabric_v3_0_0} through the SmartBFT consensus library~\cite{barger2021byzantinefaulttolerantconsensuslibrary}, enabling Fabric to handle Byzantine faults. Fabric typically achieves thousands of TPS depending on the configuration, with fast finality upon block commitment. However, the addition of BFT may slightly increase latency due to added consensus steps.

Another approach to scaling blockchain is consensus optimization. By implementing a Set Byzantine Consensus, Red Belly~\cite{redbelly} enhances the scalability, allowing the network to agree on a superblock of all proposed blocks. RedBelly achieves impressive scalability in testing environments, reaching over $660,000$ TPS, and up to $30,000$ TPS in real-world conditions. Fast finality ensures that transactions are final and irreversible within $3$ seconds, further enhancing its appeal for high-throughput applications. Algorithms for Set Byzantine Consensus (SBC) or that use SBC are an active area of research \cite{redbelly,capretto2024improving,zlb} that exploits the lack of order between transactions. 

Hotstuff~\cite{hotstuff} is a leader-based BFT replication consensus protocol with linear communication complexity and optimistic responsiveness. The Hotstuff paper itself proposes a variant of the protocol called Chained HotStuff, which is a pipelined Basic HotStuff where a Quorum Certificate can serve in different phases simultaneously. There are variants of this work, being proposed to improve its performance. For example, P-Hotstuff~\cite{p-hotstuff}, which introduces parallelism into Hotstuff, claims that it achieves an average throughput which is 20 times that of Hotstuff.

FireLedger~\cite{buchnik2019fireledgerhighthroughputblockchain} is a Byzantine Fault Tolerant (BFT) consensus protocol designed to optimize throughput and latency in permissioned blockchain environments by leveraging the iterative nature of blockchains to improve their throughput in optimistic execution scenarios. In their evaluation, the authors report that FireLedger, when deployed on ten mid-range Amazon instances within a single data center, can reach a throughput of approximately 160K TPS for transactions of size 512 bytes. In a geo-distributed deployment across ten Amazon nodes, the protocol achieves around 30K TPS for the same transaction size.

Dumbo~\cite{dumbo} is an improvement on Honeybadger~\cite{honeybadger} - the first practical asynchronous BFT protocol. Dumbo replaces HoneyBadger’s $n$ concurrent Asynchronous Binary Agreements (ABAs) with a single Multi‑valued Validated Byzantine Agreement (MVBA), improving both latency and throughput. Dumbo-NG~\cite{dumbo} builds on this idea by decoupling transaction dissemination from agreement and executing them concurrently. It uses a bandwidth‑oblivious MVBA with threshold signatures, achieving 4–8× higher peak throughput and latency that stays stable as throughput grows.

Narwhal\cite{narwhal} decouples reliable transaction dissemination from ordering via a DAG-based mempool that certifies availability and supports bounded-memory garbage collection and scale-out workers. When paired with HotStuff (Narwhal-HS), it sustains about 130–140k TPS at less than 2s latency in WAN settings with 50 validators, and adding workers scales throughput roughly linearly to 600k TPS without increasing latency. Tusk adds the actual consensus on top of Narwhal: a fully asynchronous, zero-message-overhead protocol that orders by interpreting the local DAG with a shared random coin. In WAN experiments, it reaches 170k TPS at 3s with 50 validators, and under 1 and 3 crash faults, it keeps latency under 4s and 6s while maintaining high throughput.

Bullshark~\cite{bullshark} is a DAG-based Byzantine atomic broadcast protocol optimized for the common (partially synchronous) case: it adds a fast path that commits in 2 rounds during synchrony while retaining an asynchronous fallback with 6-round expected latency and O(n) amortized communication. Built atop Narwhal\cite{narwhal}, it decouples data dissemination from ordering, avoids view-change or synchronization, and provides timely fairness with bounded memory via garbage collection. In evaluation, a partially synchronous Bullshark variant reaches around 125K TPS at 2s latency with 50 parties.

Fantastyc~\cite{fantastyc} proposes a byzantine-tolerant design that anchors only cryptographic proofs on-chain while keeping client updates and aggregated models off-chain in a fault-tolerant key-value store keyed by the data’s hash. A server collects updates, stores \texttt{(hash(v), v)} off-chain, and, after gathering $f+1$ signed hash-keys, produces a Proof of Availability \& Integrity (PoA\&I). The blockchain records sets of PoA\&I (typically one transaction per round) rather than raw data, and clients later retrieve and verify values directly from the store using the anchored hashes. This cleanly decouples ordering (chain), integrity (servers), and availability (storage) and lets light clients validate with the PoA\&I alone. Conceptually, this mirrors Hashchain’s approach of appending hashes of batches on the ledger and consolidating an epoch only after $f+1$ signatures on the batch hash, to ensure at least one honest replica can serve the data, thereby slashing on-chain bandwidth while preserving retrievability.
\section{Algorithm Vanilla}\label{sec:algorithmvanilla}
    \begin{algorithm}[H]
      \renewcommand{\thealgorithm}{Vanilla}         
      \caption{\small 
      Code executed by server $v$.}%
      \label{alg:setchain-vanilla}%
      \small
      \begin{multicols}{2}
      \begin{algorithmic}[1]
      \State Use block-based ledger $\ledgerobject$ shared by all servers
      \State Init: $\TheSet \leftarrow \emptyset$, $\epoch \leftarrow 0$, 
      \Statex $\history \leftarrow \emptyset$, $\proofs \leftarrow \emptyset$
      \Function{\add}{$e$}
        \State assert $\validelement(e) \land $
        \Statex $e \notin \TheSet$ 
        \label{alg:setchain-vanilla:valid_element}
            \State $\TheSet \leftarrow \TheSet \cup \{ e \}$ \label{alg:setchain-vanilla-setaddition}
            \State $\ledgerobject.\Append(e)$ \label{alg:setchain-vanilla-append}
        \State \textbf{return} 
      \EndFunction
      
        \Function{\get}{\null}
            \State \hspace*{-1em} \textbf{return} $(\TheSet, \history, \epoch, \proofs)$ \label{alg:setchain-vanilla-get}
        \EndFunction

        \Upon{$\ledgerobject.\NewBlock(B)$} \label{alg:vanilla-newblock}
        \State $np \leftarrow \{ep \in B:$
        \Statex $ ep=\langle j, p, w \rangle \text{~is an epoch-proof}\land \validproof(j, p, w, \history[j])\}$
            \State $\proofs \leftarrow \proofs \cup np$ \label{alg:setchain-vanilla:validepochproof}
            \State $G \leftarrow \{e \in B: e \text{~is an element} \land \validelement(e) \land e \notin \history\}$ \label{alg:setchain-vanilla-othervalidelement}
            \State $\TheSet \leftarrow \TheSet \cup G$ \label{alg:setchain-vanilla-othersetaddition}
            \State $\epoch \leftarrow \epoch + 1$ \label{alg:setchain-vanilla-epochinc}
            \State $\history[\epoch] \leftarrow G$  \label{alg:setchain-vanilla-epochaddition}
            \State $p \leftarrow \Sign_v(\Hash(\epoch,G))$
            \State $\ledgerobject.\Append(\langle \epoch, p, v \rangle)$ \label{alg:setchain-vanilla:epochproof}
        \EndUpon
        \end{algorithmic}
        \end{multicols}
      \end{algorithm}
\section{Proof of Correctness} \label{sec:proofofcorrectness}
In this section, we present the proof of correctness for the \setchain algorithms presented in Section~\ref{sec:algorithms}.

\subsection{Correctness of Algorithm~\ref{alg:setchain-vanilla}}

Algorithm~\ref{alg:setchain-vanilla} guarantees Property~\ref{prop:consistent-sets}. 

\begin{lem}\label{lem:vanilla-consistent-sets}
     Let $(T,H,h,P)=\setobject.\get_v()$ be an invocation to a correct server $v$. Then, $\forall i \in \{1,\ldots,h\}, H[i] \subseteq T$.
\end{lem}

\begin{proof}
We prove that in a correct server $v$ it holds that $\forall i \in [1,\epoch], \history[i] \subseteq \TheSet$ at all times.

Initially, we have that $\epoch=0$, since Line \ref{alg:setchain-vanilla-epochinc} was never executed. Hence, Line \ref{alg:setchain-vanilla-epochaddition} was not executed either, $\history$ is empty, and the claim trivially holds. 

Let us assume now that $\epoch>0$ and consider some $i \in [1,\epoch]$. Since $\epoch$ is only modified in Line~\ref{alg:setchain-vanilla-epochinc} by increments of 1, then, when it became equal to $i$, all the elements $G$ that were later added to $\history$ in epoch $i$ (in Line~\ref{alg:setchain-vanilla-epochaddition}) were guaranteed to be in $\TheSet$ before (in Line~\ref{alg:setchain-vanilla-othersetaddition}), and are never removed from $\TheSet$. 
\end{proof}

Algorithm~\ref{alg:setchain-vanilla} guarantees Property~\ref{prop:add-get-local}.

\begin{lem}\label{lem:vanilla-add-get-local}
     Let $\setobject.\add_v(e)$ be an operation invoked on a correct server $v$, and $e$ is valid. Then, eventually all invocations $(T,H,h,P)=\setobject.\get_v()$ satisfy $e \in T$.
\end{lem}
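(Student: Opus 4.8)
The plan is to show that once $\setobject.\add_v(e)$ is invoked on the correct server $v$ with a valid element $e$, the element $e$ is placed into $v$'s local set $\TheSet$ and, since $\TheSet$ is never shrunk, it remains there for all future \get() calls. Because \get simply returns the current value of $\TheSet$ as its first component $T$, this immediately yields $e \in T$. The entire argument is local to the single server $v$; no reasoning about the ledger, asynchrony, or other servers is needed.

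First I would inspect the body of \add in Algorithm~\ref{alg:setchain-vanilla}. The assertion at Line~\ref{alg:setchain-vanilla:valid_element} checks $\validelement(e) \land (e \notin \TheSet)$. Since we are given that $e$ is valid, the first conjunct holds, so I would split on the second conjunct. If $e \notin \TheSet$ at the moment of invocation, the assertion passes and Line~\ref{alg:setchain-vanilla-setaddition} executes $\TheSet \leftarrow \TheSet \cup \{e\}$, placing $e$ in $\TheSet$. If instead $e \in \TheSet$ already, then $e$ is trivially present. In either case, immediately after the invocation completes we have $e \in \TheSet$.

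Next I would observe that $\TheSet$ is grow-only on a correct server: the only statements modifying $\TheSet$ are the set unions at Lines~\ref{alg:setchain-vanilla-setaddition} and~\ref{alg:setchain-vanilla-othersetaddition}, both of which only add elements and never remove them. Hence once $e \in \TheSet$ it stays there permanently. Since \get returns the current $\TheSet$ as $T$ (Line~\ref{alg:setchain-vanilla-get}), every subsequent invocation $(T,H,h,P)=\setobject.\get_v()$ satisfies $e \in T$, which is exactly the claim.

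I do not anticipate a genuine obstacle. The only point that requires a moment of care is the assertion on Line~\ref{alg:setchain-vanilla:valid_element}: one must note that a failing assertion can occur only because $e$ is already in $\TheSet$, so it never threatens the conclusion. Likewise, the word \emph{eventually} in the statement is satisfied immediately, because on $v$ the update to $\TheSet$ is synchronous with the \add call; the temporal phrasing is there only to align this lemma with the weaker guarantees required for the cross-server properties, where real delay is present.
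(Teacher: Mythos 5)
Your proposal is correct and follows essentially the same argument as the paper's proof: the \add invocation places $e$ in $\TheSet$ (unless it is already there), $\TheSet$ is grow-only, and \get returns it directly. Your version merely spells out the case split on the assertion in more detail than the paper does.
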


\begin{proof}
The execution of $\setobject.\add_v(e)$ by the correct server $v$ will add $e$ to $\TheSet$ (Line \ref{alg:setchain-vanilla-setaddition}) if it is not already present.
Since elements are never removed from $\TheSet$, element $e$ will eventually appear in $\TheSet$ returned in all future $\setobject.\get_v()$ invocations which return $(\TheSet, \history, \epoch, \proofs)$.
\end{proof}

Algorithm~\ref{alg:setchain-vanilla} guarantees Property~\ref{prop:get-global}.

\begin{lem}\label{lem:vanilla-get-global}
Let $v$ and $w$ be two correct servers, let $e$ be a valid element, and let $(T,H,h,P)=\setobject.\get_v()$. If $e \in T$, then eventually all invocations $(T',H',h',P')=\setobject.\get_w()$ satisfy that $e \in T'$.
\end{lem}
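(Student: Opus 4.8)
The plan is to trace how the element $e$ came to be in the local $\TheSet$ of the correct server $v$, and then argue that the same element is necessarily replayed into the $\TheSet$ of every other correct server $w$ through the ledger. In Algorithm~\ref{alg:setchain-vanilla}, there are exactly two code paths that insert an element into $\TheSet$: the direct insertion during $\setobject.\add_v(e)$ (Line~\ref{alg:setchain-vanilla-setaddition}), and the insertion while processing a notified block (Line~\ref{alg:setchain-vanilla-othersetaddition}). So the first step is a case split on which of these put $e$ into $v$'s $\TheSet$, with the goal of showing that in both cases $e$ eventually sits in a ledger block that is delivered to $w$.

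In the first case, $e$ entered via $\setobject.\add_v(e)$, so the correct server $v$ also invoked $\ledgerobject.\Append(e)$ on Line~\ref{alg:setchain-vanilla-append} with the valid element $e$. I would then invoke Property~\ref{prop:ledger-add-eventual-notify} (Ledger-Add-Eventual-Notify): since the append was performed by a correct server on a valid transaction, $e$ is eventually and permanently placed in some block $B$, and every correct server, $w$ in particular, is notified of $B$ via $\ledgerobject.\NewBlock_w(B)$. In the second case, $e$ entered while $v$ processed a block $B$ delivered by $\ledgerobject.\NewBlock(B)$ (the loop guarded by Line~\ref{alg:vanilla-newblock}); here I would appeal to Property~\ref{prop:ledger-consistent-notification} (Ledger-Consistent-Notification), which guarantees that all correct servers are notified of the same blocks in the same order, so $w$ is also eventually notified of the block $B$ that contains $e$. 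In either case the conclusion is uniform: server $w$ is eventually notified of a block containing the valid element $e$.

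The final step is to argue that this notification forces $e$ into $w$'s $\TheSet$. When $w$ processes that block, it forms the set $G$ of valid block elements not already in $\history$ (Line~\ref{alg:setchain-vanilla-othervalidelement}) and adds $G$ to $\TheSet$ (Line~\ref{alg:setchain-vanilla-othersetaddition}). Here lies the one subtlety I need to dispatch carefully, which is the $e \notin \history$ guard: if $e$ is already in $w$'s $\history$ at that moment, then $e$ is excluded from $G$. I would resolve this by Lemma~\ref{lem:vanilla-consistent-sets} (Consistent-Sets), which gives $\history[i] \subseteq \TheSet$, so $e \in \history$ already implies $e \in \TheSet$. Thus whether $e$ is freshly added through $G$ or was previously placed in $\TheSet$ via history, we have $e \in \TheSet$ at $w$ from that point on. Since elements are never removed from $\TheSet$, every subsequent invocation $(T',H',h',P')=\setobject.\get_w()$ returns $e \in T'$.

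I expect the main obstacle to be the bookkeeping of the case analysis rather than any deep difficulty: one must make sure that the two insertion paths are exhaustive, that the appending server's correctness (needed to invoke Property~\ref{prop:ledger-add-eventual-notify}) is only assumed in the direct-add case where $v$ itself appends, and that the $e \notin \history$ guard is handled by Consistent-Sets rather than being silently ignored.
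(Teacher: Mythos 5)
Your proposal is correct and follows essentially the same route as the paper's proof: the same case split on the two insertion sites, Property~\ref{prop:ledger-add-eventual-notify} for the direct-add case, Property~\ref{prop:ledger-consistent-notification} for the block-processing case, and monotonicity of $\TheSet$ to conclude. Your explicit treatment of the $e \notin \history$ guard via Lemma~\ref{lem:vanilla-consistent-sets} is a small refinement the paper glosses over (it simply says $e$ is added ``if it is not already present''), and it is a welcome one.
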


\begin{proof}
%
In Algorithm~\ref{alg:setchain-vanilla}, an element is added by server $v$ to $\TheSet$ in Lines \ref{alg:setchain-vanilla-setaddition} and \ref{alg:setchain-vanilla-othersetaddition} only. 

First, let us consider the case where  $e$ is added by server $v$ to $\TheSet$ in Line~\ref{alg:setchain-vanilla-setaddition} when a client invoked $\setobject.\add_v(e)$. Then, $v$ invokes $\ledgerobject.\Append_v(e)$ in Line~\ref{alg:setchain-vanilla-append}.
From Property~\ref{prop:ledger-add-eventual-notify}, a block $B$ containing $e$ is eventually notified to all correct \setchain servers, including $w$. 

On the other hand, if $e$ is added by server $v$ to $\TheSet$ at Line~\ref{alg:setchain-vanilla-othersetaddition} for the first time, then a block $B$ containing $e$ was notified to $v$ by the ledger $\ledgerobject$. From Property \ref{prop:ledger-consistent-notification}, this block must have been notified to all correct \setchain servers, including $w$.

In both cases, when processing block $B$, server $w$ adds the valid element $e$ to $\TheSet$ (Line~\ref{alg:setchain-vanilla-othersetaddition}) if it is not already present, and is never removed from $\TheSet$. Hence, all future invocations of $\setobject.\get_w()$ return tuples $(T', H', h', P')$ such that $e \in T'$.
\end{proof}

Algorithm~\ref{alg:setchain-vanilla} guarantees Property~\ref{prop:eventual-get}.

\begin{lem}\label{lem:vanilla-eventual-get}
Let $v$ be a correct server, let $e$ be a valid element and let $(T,H,h,P)=\setobject.\get_v()$. If $e \in T$, then eventually all invocations $(T',H',h',P')=\setobject.\get_v()$ satisfy that $e \in H'$.
\end{lem}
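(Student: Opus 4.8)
The plan is to trace how the element $e$ first enters $\TheSet$ at the correct server $v$ and argue that this insertion forces $e$ into $\history$ after finitely many steps. Since $e \in T$, there is a first instant at which $e$ is inserted into $\TheSet$, and in Algorithm~\ref{alg:setchain-vanilla} this can only happen at Line~\ref{alg:setchain-vanilla-setaddition} (inside an $\setobject.\add_v$ call) or at Line~\ref{alg:setchain-vanilla-othersetaddition} (while processing a notified block). I would split the argument along these two cases.

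In the block-processing case (Line~\ref{alg:setchain-vanilla-othersetaddition}), the insertion is immediate: the set $G$ computed at Line~\ref{alg:setchain-vanilla-othervalidelement} is exactly the set added to $\TheSet$ at Line~\ref{alg:setchain-vanilla-othersetaddition} and, within the same iteration, recorded as a new epoch in $\history[\epoch]$ at Line~\ref{alg:setchain-vanilla-epochaddition}. Hence $e \in G \subseteq \history$ already holds at that moment, and the claim follows directly.

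In the $\setobject.\add_v$ case (Line~\ref{alg:setchain-vanilla-setaddition}), the assertion at Line~\ref{alg:setchain-vanilla:valid_element} guarantees that $e$ is valid, and $v$ then invokes $\ledgerobject.\Append(e)$ at Line~\ref{alg:setchain-vanilla-append}. By Property~\ref{prop:ledger-add-eventual-notify}, $e$ is eventually placed in some block $B$ that is notified to $v$ through $\ledgerobject.\NewBlock(B)$. When $v$ reaches $e$ while processing $B$, either $e$ is already in $\history$, in which case the conclusion is immediate, or $e \notin \history$, and then, since $e$ is valid, $e$ is included in the set $G$ at Line~\ref{alg:setchain-vanilla-othervalidelement} and written to $\history[\epoch]$ at Line~\ref{alg:setchain-vanilla-epochaddition}. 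Either way $e$ enters $\history$ at a finite time.

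Finally I would close by observing that $\history$ is grow-only in Algorithm~\ref{alg:setchain-vanilla}: $\epoch$ only increases (Line~\ref{alg:setchain-vanilla-epochinc}), each $\history[\epoch]$ is assigned exactly once, and no entry is ever deleted. Therefore, once $e \in \history[i]$ for some $i$, every subsequent $\setobject.\get_v()$ returns a tuple $(T',H',h',P')$ with $e \in H'$, which is the desired conclusion. The only subtle point, and really the one place to be careful, is the branch in the $\setobject.\add_v$ case where $e$ may already have been consolidated into $\history$ by the time block $B$ is processed (for instance because the ledger carried a copy of the same element in an earlier block); but this branch only strengthens the conclusion, so it poses no genuine obstacle.
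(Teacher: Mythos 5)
Your proposal is correct and follows essentially the same route as the paper: a case split on the two lines where $e$ can first enter $\TheSet$, an appeal to Property~\ref{prop:ledger-add-eventual-notify} in the $\add$ case to obtain a notified block containing $e$, and the observation that the block-processing code then places $e$ into $\history[\epoch]$, which is grow-only. Your explicit handling of the sub-case where $e$ is already in $\history$ when block $B$ arrives is a small refinement the paper's proof glosses over, but it does not change the structure of the argument.
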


\begin{proof}
%
The element $e$ is added by server $v$ to $\TheSet$ in either Line~\ref{alg:setchain-vanilla-setaddition} or \ref{alg:setchain-vanilla-othersetaddition}. 

First, let us consider the case where $e$ is added by $v$ to $\TheSet$ in Line~\ref{alg:setchain-vanilla-setaddition} when a client invoked $\setobject.\add_v(e)$. Then, $v$ invokes $\ledgerobject.\Append_v(e)$ in Line~\ref{alg:setchain-vanilla-append}.
From Property~\ref{prop:ledger-add-eventual-notify}, eventually a block $B$ containing $e$ is notified to $v$. 

Otherwise, if $e$ is added by server $v$ to $\TheSet$ in Line~\ref{alg:setchain-vanilla-othersetaddition} for the first time, then it was because a block $B$ containing $e$ was notified to $v$ by ledger $\ledgerobject$.
    
Then, in both cases, $v$ adds the batch of valid elements $G$ from the block $B$ to $\history$ at Line~\ref{alg:setchain-vanilla-epochaddition}, with $e \in G$. After that, all future invocations of $\setobject.\get_w()$ return tuples $(T', H', h', P')$ that satisfy $e \in H'$.
\end{proof}

Algorithm~\ref{alg:setchain-vanilla} guarantees Property \ref{prop:unique-epoch}.

\begin{lem}\label{lem:vanilla-unique-epoch}
Let $v$ be a correct server, $(T,H,h,P)=\setobject.\get_v()$, and let $i,i' \in \{1,\ldots,h\}$ with $i \neq i'$. Then, $H[i] \cap H[i'] = \emptyset$.
\end{lem}

\begin{proof}
%
By way of contradiction, let us assume that for some valid element $e$ it holds that $e \in \history[i]$ and $e \in \history[i']$. Without loss of generality, let us assume that $1 \leq i < i' \leq \epoch$.

Observe that the elements of an epoch are inserted into $\history$ in Line~\ref{alg:setchain-vanilla-epochaddition}, and they are added in
increasing epoch order (variable $\epoch$ is increased every time an epoch is created (line~\ref{alg:setchain-vanilla-epochinc})and never decreased).
Then,  when $\history[i']$ is mapped to a set $G$ with $e \in G$ in Line~\ref{alg:setchain-vanilla-epochaddition}, it already holds that 
$e \in \history[i]$. But, from Line~\ref{alg:setchain-vanilla-othervalidelement}, $G$ does not contain elements already in $\history$.
Hence we have a contradiction.
\end{proof}



Algorithm~\ref{alg:setchain-vanilla} guarantees Property~\ref{prop:consistent-gets}.

\begin{lem}\label{lem:vanilla-consistent-gets}
Let $v,w$ be correct servers, let $(T,H,h,P)=\setobject.\get_v()$ and $(T',H',h',P')=\setobject.\get_w()$, and let $i \in \{1,\ldots,\min(h,h')\}$. Then $H[i]=H'[i]$.
\end{lem}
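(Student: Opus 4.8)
The plan is to exploit the fact that both correct servers are notified of exactly the same sequence of blocks, in the same order (Property~\ref{prop:ledger-consistent-notification}), and that in \ref{alg:setchain-vanilla} each block notification produces exactly one epoch. First I would establish this one-to-one correspondence: since $\epoch$ starts at $0$ and is incremented by exactly $1$ on each $\ledgerobject.\NewBlock$ notification (Line~\ref{alg:setchain-vanilla-epochinc}), the epoch bearing number $i$ at any correct server is precisely the one created while processing the $i$-th block delivered to that server. By Property~\ref{prop:ledger-consistent-notification}, this $i$-th block is the same block $B_i$ at both $v$ and $w$.

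The core of the argument is then an induction on $i$ showing $H[i]=H'[i]$ for all $i \le \min(h,h')$. The value assigned to $\history[i]$ (Line~\ref{alg:setchain-vanilla-epochaddition}) is the set $G$ of elements of $B_i$ that are valid and not already present in $\history$ (Line~\ref{alg:setchain-vanilla-othervalidelement}). I would rely on two observations. First, both servers evaluate $\validelement$ as a deterministic local predicate on the same block contents, so they agree on which elements of $B_i$ are valid. Second, since elements are never removed from $\history$, the value of $\history$ at the moment a server processes $B_i$ is exactly the union of its earlier epochs, i.e. $\bigcup_{j<i} H[j]$ at $v$ and $\bigcup_{j<i} H'[j]$ at $w$.

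For the base case $i=1$, the map $\history$ is empty at both servers, so $G$ reduces to the valid elements of $B_1$ and hence $H[1]=H'[1]$. For the inductive step, the hypothesis $H[j]=H'[j]$ for all $j<i$ yields $\bigcup_{j<i} H[j] = \bigcup_{j<i} H'[j]$, so the membership test $e \notin \history$ returns the same answer at both servers; combined with the agreement on validity, the two sets $G$ computed from $B_i$ coincide, giving $H[i]=H'[i]$.

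The main obstacle I anticipate is the dependence of each epoch's content on the accumulated history: because the filter $e \notin \history$ in Line~\ref{alg:setchain-vanilla-othervalidelement} couples epoch $i$ to all earlier epochs, a purely local, block-by-block comparison does not suffice, and the induction on $i$ is essential. Everything else reduces to the determinism of $\validelement$ together with the block-ordering agreement guaranteed by Property~\ref{prop:ledger-consistent-notification}.
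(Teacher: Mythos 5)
Your proof is correct and follows essentially the same route as the paper's: an induction on the epoch number, using Property~\ref{prop:ledger-consistent-notification} to identify the $i$-th block at both servers and the inductive hypothesis to show the $e \notin \history$ filter in Line~\ref{alg:setchain-vanilla-othervalidelement} behaves identically. The only cosmetic difference is that you argue set equality directly via determinism of $\validelement$, whereas the paper shows the two inclusions element-by-element.
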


\begin{proof}
%


The proof proceeds by induction in the epoch number n.

\begin{itemize}
\item Base case, $n = 0.$ $ H[0] = \emptyset = H'[0].$

\item  Inductive step: we will show that $H[n] = H'[n]$ assuming that $\forall i < n, H[i] = H'[i].$
\end{itemize}

First, we show that $H[n] \subseteq H'[n]$.

Let $e \in H[n]$. 
Then $e$ that was added by $v$ to $H[n]$ at Line \ref{alg:setchain-vanilla-epochaddition} when a ledger block $B$ containing $e$ was processed by $v$.
It is easy to see that $B$ is the $n$-th block received by $v$. 
Therefore, from Property~\ref{prop:ledger-consistent-notification}, $w$ also receives $B$ as the $n$-th block.  
The fact that $v$ added element $e$ to $H[n]$ implies that (1) $e$ is a valid element and (2) $e \notin H[0..n-1].$
By inductive hypothesis and (2), we conclude that $e$ is not in $H'[0..n-1]$. 
Since $e$ is a valid element in the $n$-th block received by $w$ that is not in $H[0..n-1]$, $e$ is added to $H'[n]$.
The proof that $H'[n] \subseteq H[n]$ is analogous.

\end{proof}

Algorithm~\ref{alg:setchain-vanilla} guarantees Property~\ref{prop:add-before-get}.

\begin{lem}\label{lem:vanilla-add-before-get}
Let $v$ be a correct server, $e$ be a valid element, $(T,H,h,P)$ $=\setobject.\get_v()$, and
$e \in T$. Then there was an operation $\setobject.\add_w(e)$ invoked in the past in some server $w$.
\end{lem}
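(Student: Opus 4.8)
The plan is to mirror the case analysis used in the neighbouring lemmas and split on \emph{where} the valid element $e$ first entered $\TheSet$ of the correct server $v$. Inspecting Algorithm~\ref{alg:setchain-vanilla}, the only two lines that write into $\TheSet$ are Line~\ref{alg:setchain-vanilla-setaddition}, inside the body of $\setobject.\add$, and Line~\ref{alg:setchain-vanilla-othersetaddition}, inside the $\ledgerobject.\NewBlock(B)$ handler. Since $e \in T$ for $(T,H,h,P)=\setobject.\get_v()$ and elements are never removed from $\TheSet$, $e$ must have been written by one of these two lines, and I would argue each case separately.

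For the first case, if $e$ was added at Line~\ref{alg:setchain-vanilla-setaddition}, then by construction of the $\setobject.\add$ function this write is executed exactly in response to a client invoking $\setobject.\add_v(e)$ on $v$ itself; taking $w = v$ discharges the claim immediately. For the second case, if $e$ was added at Line~\ref{alg:setchain-vanilla-othersetaddition}, then $e$ belonged to the set $G$ extracted from some block $B$ that the ledger notified to $v$, and the filter at Line~\ref{alg:setchain-vanilla-othervalidelement} guarantees $e$ is a valid \setchain element appearing in $B$. I would then invoke Property~\ref{prop:notification-implies-append} (Notification-Implies-Append) to conclude that some server $u$ must have previously invoked $\ledgerobject.\Append_u(e)$, leaving as the remaining task to trace this append back to a client $\setobject.\add$ invocation.

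The main obstacle is precisely this last step, because the server $u$ that appended $e$ may be Byzantine, so I cannot reason about $u$'s behaviour from the algorithm code. If $u$ were correct the argument is easy: a correct server only appends a valid element through Line~\ref{alg:setchain-vanilla-append} (the append at Line~\ref{alg:setchain-vanilla:epochproof} emits an epoch-proof triple, not a valid element), which is itself triggered by a preceding $\setobject.\add_u(e)$. To cover a Byzantine $u$, I would instead appeal to the system-model assumptions rather than to the code: $e$ is a \emph{valid} element, a server cannot create a valid element by itself, and clients and servers do not collude. Hence the mere existence of the valid element $e$ implies it was created and signed by some client, and the only channel through which a client injects an element into the system is an $\setobject.\add_w(e)$ operation on some server $w$. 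This establishes the required past $\setobject.\add_w(e)$ invocation and completes the proof.
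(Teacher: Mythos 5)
Your proposal is correct and follows essentially the same route as the paper's proof: the same case split on the two lines that write into $\TheSet$, the same use of Property~\ref{prop:notification-implies-append} in the second case, and the same final appeal to the system-model assumptions that a server cannot forge a valid element and that clients and servers do not collude. Your explicit handling of a possibly Byzantine appender is a slightly more careful spelling-out of a step the paper treats implicitly, but it is not a different argument.
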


\begin{proof}
%
Elements are added to $\TheSet$ in Lines~\ref{alg:setchain-vanilla-setaddition} and \ref{alg:setchain-vanilla-othersetaddition}. 

First, let us consider the case where $e$ was added by server $v$ to $\TheSet$ in Line~\ref{alg:setchain-vanilla-setaddition}. Then, it was added when operation $\setobject.\add_v(e)$ was being processed.

On the other hand, if $e$ was added by server $v$ to $\TheSet$ in Line~\ref{alg:setchain-vanilla-othersetaddition}, then a ledger block $B$ containing $e$ was notified to $v$ by the ledger $\ledgerobject$. From Property~\ref{prop:notification-implies-append}, some server $w$ invoked $\ledgerobject.\Append_w(e)$.
Recall that, as mentioned in Section~\ref{subsec:systemmodel}, we assume that a server cannot create a valid element by itself and that clients and servers do not collude. So, a server $w$ cannot append a valid element $e$ with $\ledgerobject.\Append_w(e)$ without a client invocation $\setobject.\add_w(e)$.
\end{proof}

Algorithm~\ref{alg:setchain-vanilla} guarantees Property~\ref{prop:valid-epoch}.

\begin{lem}\label{lem:vanilla-valid-epoch}
Let $v$ be a correct server, $(T,H,h,P)=\setobject.\get_v()$, and $i \in \{1, \ldots, h\}$. Then eventually all invocations $(T',H',h',P')=\setobject.\get_v()$ satisfy that $P'$ contains at least $f+1$ epoch proofs of $H[i]$.
\end{lem}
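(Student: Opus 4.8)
The plan is to prove Property~\ref{prop:valid-epoch} (Valid-Epoch) for~\ref{alg:setchain-vanilla} by tracing the lifecycle of a single epoch $i$ and showing that eventually $f+1$ distinct correct servers append valid epoch-proofs for $H[i]$ to the ledger, and that these proofs are eventually notified back to $v$ and accepted into $P'$. Fix a correct server $v$ and an epoch index $i \in \{1,\ldots,h\}$. Since $v$ reached epoch $i$, it processed some ledger block $B$ in the \NEWBLOCK handler, at which point it set $\history[i] \leftarrow G$ (Line~\ref{alg:setchain-vanilla-epochaddition}) for a concrete set $G = H[i]$. The key structural fact, which I would establish first by appealing to the Consistent-Gets reasoning (Lemma~\ref{lem:vanilla-consistent-gets}) and Ledger-Consistent-Notification (Property~\ref{prop:ledger-consistent-notification}), is that \emph{every} correct server processes the same sequence of blocks in the same order and therefore computes exactly the same set $G$ for epoch $i$. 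Hence all correct servers agree on $H[i]$, and each correct server $w$, upon processing that block, executes Line~\ref{alg:setchain-vanilla:epochproof} to compute $p_w = \Sign_w(\Hash(i,G))$ and invokes $\ledgerobject.\Append_w(\langle i, p_w, w\rangle)$.

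Next I would count. By the system model there are at least $n - f > f$ correct servers, i.e.\ at least $f+1$ correct servers. Each such correct server $w$ invokes $\ledgerobject.\Append_w(\langle i,p_w,w\rangle)$ with a valid epoch-proof. By Ledger-Add-Eventual-Notify (Property~\ref{prop:ledger-add-eventual-notify}), each of these appended transactions is eventually and permanently included in some block, and every correct server — in particular $v$ — is eventually notified of each such block. When $v$ processes the block containing $\langle i, p_w, w\rangle$, it evaluates the filter on Line~\ref{alg:setchain-vanilla:validepochproof}: the epoch-proof passes $\validproof(i,p_w,w,\history[i])$ because $p_w$ is a genuine signature by $w$ over $\Hash(i,G)$ and $v$ holds the identical $\history[i]=G$ by the agreement established above. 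Thus $v$ adds $\langle i,p_w,w\rangle$ to $\proofs$, and since elements are never removed from $\proofs$, every subsequent invocation $(T',H',h',P')=\setobject.\get_v()$ satisfies $\langle i,p_w,w\rangle \in P'$. Collecting the contributions of the $\geq f+1$ distinct correct signers $w$ yields $f+1$ distinct valid epoch-proofs of $H[i]$ in $P'$, as required.

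The main obstacle I anticipate is the consistency argument underpinning validity of the proofs: I must be careful that the hash signed by each correct $w$, namely $\Hash(i,G_w)$, matches the hash $v$ recomputes when checking $\validproof$, namely $\Hash(i,\history[i])$. This reduces to showing $G_w = H[i]$ for every correct $w$, i.e.\ that all correct servers assign the \emph{same} element set to the \emph{same} epoch number $i$. This is precisely the content of Consistent-Gets, so the cleanest route is to invoke Lemma~\ref{lem:vanilla-consistent-gets} directly rather than re-deriving it; the remaining care is merely to ensure each correct $w$ does eventually reach epoch $i$ (so that it actually produces a proof), which follows because all correct servers receive the same blocks in the same order (Property~\ref{prop:ledger-consistent-notification}) and increment \epoch identically. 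A secondary, minor point is liveness: one must confirm the $f+1$ proof-appends are themselves guaranteed to land in the ledger and be notified, which is exactly Property~\ref{prop:ledger-add-eventual-notify}, so no additional machinery is needed.
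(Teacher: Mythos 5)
Your proposal is correct and follows essentially the same route as the paper's proof: both argue via Lemma~\ref{lem:vanilla-consistent-gets} and Property~\ref{prop:ledger-consistent-notification} that every correct server reaches epoch $i$ with the same content $G$, hence appends a matching epoch-proof at Line~\ref{alg:setchain-vanilla:epochproof}, and then use Property~\ref{prop:ledger-add-eventual-notify} plus the count of at least $f+1$ correct servers to conclude that $v$ eventually accumulates $f+1$ valid proofs in $\proofs$. Your added care about why $\validproof$ succeeds at $v$ (matching hashes via agreement on $\history[i]$) is a slightly more explicit rendering of the same argument, not a different one.
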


\begin{proof}
%
Let $E=\history[i]$ be an epoch with epoch number $i$, whose valid and new elements as set $G$ were added to $\history$ by $v$ at Line~\ref{alg:setchain-vanilla-epochaddition}. Once this happens, the hash of $G$ is computed and signed by $v$ as $p_v$. Then, an epoch proof consisting of $\langle i,p_v,v \rangle$ is appended to the ledger by $v$ in Line~\ref{alg:setchain-vanilla:epochproof}.

By Property~\ref{prop:ledger-consistent-notification}, all correct \setchain servers will eventually receive enough ledger blocks to reach the creation of epoch $i$.
From lemma~\ref{lem:vanilla-consistent-gets}, all correct \setchain servers agree on the content of the $i$-th epoch.
Therefore, every correct \setchain server $w$ maps $G$ to $\history_w[i]$, generates an epoch proof $\langle i,p_w,w \rangle$ for the i-th epoch, and appends it to the ledger.

Then, from Property~\ref{prop:ledger-add-eventual-notify}, ledger blocks containing the epoch proofs of $\history[i]$ will be notified to all correct \setchain servers, including server $v$. Server $v$ will add these epoch proofs to $\proofs$ in Line~\ref{alg:setchain-vanilla:validepochproof}. Since we assume a system with $n$ servers, where at most $f<n/2$ are not correct, at least $f+1$ epoch proofs for the i-th will be appended to the ledger. Hence, eventually, in all invocations $(T', H', h', P')=\setobject.\get_v()$, it will hold that $P'$ will contain at least $f+1$ epoch proofs of $\history[i]$.
\end{proof}

The combination of all the previous lemmas shows that Algorithm~\ref{alg:setchain-vanilla} implements a \setchain with epoch proofs.

\subsection{Correctness of Algorithm~\ref{alg:setchain-compress}}

This section follows the same pattern as the previous section, proving Properties \ref{prop:consistent-sets} to \ref{prop:valid-epoch} to conclude that Algorithm~\ref{alg:setchain-compress} implements a \setchain with epoch proofs.

Algorithm \ref{alg:setchain-compress} guarantees Property \ref{prop:consistent-sets}.

\begin{lem} \label{lem:compress-consistent-sets}
     Let $(T,H,h,P)=\setobject.\get_v()$ be an invocation to a correct server $v$. Then, $\forall i \in \{1,\ldots,h\}, H[i] \subseteq T$.
\end{lem}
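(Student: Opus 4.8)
The plan is to mirror the proof of Lemma~\ref{lem:vanilla-consistent-sets}, adapting it to the structure of Algorithm~\ref{alg:setchain-compress}. The target invariant is that in a correct server $v$, it holds at all times that $\forall i \in [1,\epoch],\ \history[i] \subseteq \TheSet$. Since the value returned by $\setobject.\get_v()$ is precisely $(\TheSet,\history,\epoch,\proofs)$, establishing this invariant immediately yields the statement $H[i] \subseteq T$ for all $i \in \{1,\ldots,h\}$.

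First I would dispatch the base case: initially $\epoch = 0$ because the increment on Line~\ref{alg:setchain-compress-epochinc} has never executed, hence the epoch-assignment on Line~\ref{alg:setchain-compress-epochaddition} has never executed either, so $\history$ is empty and the claim holds vacuously. For the inductive maintenance step, I would fix an arbitrary $i \in [1,\epoch]$ and trace the single point where $\history[i]$ is populated, namely Line~\ref{alg:setchain-compress-epochaddition}, where $\history[\epoch] \leftarrow G$. The key observation is that $\epoch$ is only ever modified by increments of $1$ on Line~\ref{alg:setchain-compress-epochinc} and is never decreased, so each value of $\history[i]$ is written exactly once, at the moment $\epoch$ reaches $i$. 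Crucially, within the same iteration of the \texttt{NewBlock} handler, the set $G$ is added to $\TheSet$ on Line~\ref{alg:setchain-compress-othersetaddition} \emph{before} the epoch counter is incremented and $G$ is recorded into $\history$ on Line~\ref{alg:setchain-compress-epochaddition}. Since elements are never removed from $\TheSet$, every element of $\history[i]$ is guaranteed to remain in $\TheSet$ thereafter.

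The main thing to verify carefully — and the only real difference from the Vanilla proof — is that the ordering of the three relevant lines within the single \texttt{NewBlock} handler iteration is indeed (set addition, then epoch increment, then history assignment), so that $G \subseteq \TheSet$ is already guaranteed at the instant $\history[\epoch]$ is set to $G$. I would note that $G$ is computed from the decompressed batch $\batchOriginal$ (Line~\ref{alg:setchain-compress-othervalidelement}) and that the \emph{same} set $G$ is used both for the union into $\TheSet$ and for the assignment into $\history$, so there is no discrepancy between what enters $\TheSet$ and what enters $\history$. The remaining steps are routine and parallel the Vanilla argument exactly. I do not anticipate a genuine obstacle here; the only subtlety is confirming that the batching/compression layer does not alter the relative order of these updates, which it does not, since compression only affects how transactions are packed into the ledger and not the per-block processing logic.
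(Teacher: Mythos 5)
Your proposal is correct and follows essentially the same route as the paper's own proof: establishing the invariant $\forall i \in [1,\epoch],\ \history[i] \subseteq \TheSet$, handling the initial $\epoch=0$ case vacuously, and observing that the same set $G$ is unioned into $\TheSet$ (Line~\ref{alg:setchain-compress-othersetaddition}) before the epoch increment and the assignment $\history[\epoch] \leftarrow G$ (Line~\ref{alg:setchain-compress-epochaddition}), with elements never removed from $\TheSet$. Your additional remark that compression only affects ledger packing and not the per-block update ordering is a sensible sanity check but adds nothing beyond what the paper's argument already relies on.
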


\begin{proof}
We prove that in a correct server $v$ it holds that $\forall i \in [1,\epoch], \history[i] \subseteq \TheSet$ at all times.

Initially, we have that $\epoch=0$, since Line \ref{alg:setchain-compress-epochinc} was never executed. Hence, Line \ref{alg:setchain-compress-epochaddition} was not executed either, $\history$ is empty, and the claim trivially holds initially. 

Let us assume now that $\epoch>0$ and consider some $i \in [1,epoch]$. Since $\epoch$ is only modified in Line \ref{alg:setchain-compress-epochinc} by increments of 1, then, when it became equal to $i$, all the elements $G$ that were later added to $\history$ in epoch $i$ (in Line \ref{alg:setchain-compress-epochaddition}) were guaranteed to be in $\TheSet$ before (in Line \ref{alg:setchain-compress-othersetaddition}), and are never removed from $\TheSet$. 
\end{proof}

Algorithm \ref{alg:setchain-compress} guarantees Property \ref{prop:add-get-local}.

\begin{lem}\label{lem:compress-add-get-local}
   Let $\setobject.\add_v(e)$ be an operation invoked on a correct server $v$, and $e$ is valid. Then, eventually all invocations $(T,H,h,P)=\setobject.\get_v()$ satisfy $e \in T$.
\end{lem}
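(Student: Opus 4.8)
The plan is to show that once a client invokes $\setobject.\add_v(e)$ on a correct server $v$ with a valid $e$, the element is placed in $\TheSet$ and never removed, so every subsequent $\Get$ returns it. This mirrors exactly the argument used in Lemma~\ref{lem:vanilla-add-get-local} for \ref{alg:setchain-vanilla}, since the relevant code in \ref{alg:setchain-compress} is structurally identical at the points that matter.

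First I would examine the body of the \add function in \ref{alg:setchain-compress}. On input $e$, the server asserts $\validelement(e) \land (e \notin \TheSet)$ (Line~\ref{alg:setchain-compress:valid_element}). Since $e$ is valid by hypothesis, the assertion on validity passes; if $e \notin \TheSet$, the server executes Line~\ref{alg:setchain-compress-setaddition}, adding $e$ to $\TheSet$. (If instead $e \in \TheSet$ already, then the conclusion holds trivially, since $e$ is already present.) Note that the extra call $\addepoch(e)$ on Line~\ref{alg:setchain-compress-batchaddition}, which inserts $e$ into the collector $\batch$, is irrelevant for this local property — it concerns later epoch formation, not the immediate membership of $e$ in $\TheSet$.

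The key observation to make explicit is that \ref{alg:setchain-compress} never removes elements from $\TheSet$: the only writes to $\TheSet$ are the unions in Lines~\ref{alg:setchain-compress-setaddition} and~\ref{alg:setchain-compress-othersetaddition}, both of which are monotone (set union). Hence once $e$ enters $\TheSet$, it remains there forever. Since $\get$ (on Line returning $(\TheSet,\history,\epoch,\proofs)$) returns the current value of $\TheSet$, every invocation $(T,H,h,P)=\setobject.\get_v()$ made after the $\add$ completes will satisfy $e \in T$.

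I do not expect any serious obstacle here: this is a purely local, single-server liveness property that needs neither the ledger properties (Property~\ref{prop:ledger-add-eventual-notify} etc.) nor any cross-server reasoning, because $v$ acts on $e$ synchronously within the \add call. The only point requiring a little care is the assertion on Line~\ref{alg:setchain-compress:valid_element}: one must argue that a valid $e$ is not silently dropped. This is handled by the case split noted above — either $e$ is freshly added, or it was already in $\TheSet$ — so in both cases the conclusion $e \in T$ holds for all sufficiently late gets.
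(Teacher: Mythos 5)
Your proof is correct and follows essentially the same route as the paper's: the paper's own argument is a one-liner observing that $\add_v(e)$ places $e$ in $\TheSet$ (Line~\ref{alg:setchain-compress-setaddition}) if not already present, and that elements are never removed, so all future $\get_v()$ invocations return it. Your additional case split on the assertion and the explicit remark that all writes to $\TheSet$ are monotone unions are just more careful spellings of the same argument.
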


\begin{proof}
   The execution of $\setobject.\add_v(e)$ by the correct server $v$ will add $e$ to $\TheSet$ (Line \ref{alg:setchain-compress-setaddition}) if not already present, which will eventually be returned in all future $\setobject.\get_v()$ invocations which return $(\TheSet, \history, epoch, \proofs)$.
\end{proof}

Algorithm \ref{alg:setchain-compress} guarantees Property \ref{prop:get-global}.

\begin{lem}\label{lem:compress-get-global}
     Let $v$ and $w$ be two correct servers, let $e$ be a valid element, and let $(T,H,h,P)=\setobject.\get_v()$. If $e \in T$, then eventually all invocations $(T',H',h',P')=\setobject.\get_w()$ satisfy that $e \in T'$.
\end{lem}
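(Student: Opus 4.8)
The plan is to mirror the Vanilla argument (Lemma~\ref{lem:vanilla-get-global}) and trace how the element $e$ first entered server $v$'s local set $\TheSet$. In \ref{alg:setchain-compress}, $\TheSet$ is written only at Line~\ref{alg:setchain-compress-setaddition} (inside $\add$) and at Line~\ref{alg:setchain-compress-othersetaddition} (inside the $\ledgerobject.\NewBlock$ handler). Since the hypothesis gives $e \in T$, one of these two lines must have executed with $e$, so I would split into the corresponding two cases and, in each, show that a ledger block carrying $e$ (inside a compressed batch) is eventually notified to $w$.

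For the first case, if $e$ entered $\TheSet$ at Line~\ref{alg:setchain-compress-setaddition}, then a client invoked $\setobject.\add_v(e)$; because the assertion passed, $e$ was also inserted into $\batch$ via $\addepoch$ at Line~\ref{alg:setchain-compress-batchaddition}. Using the standing assumption from Section~\ref{subsec:setchain} that an epoch increment always eventually occurs (enforced by a timeout), the $\mathtt{isReady}(\batch)$ event eventually fires with $e \in \batch$, so $v$ appends $b=\mathtt{Compress}(\batch)$ to the ledger at Line~\ref{alg:setchain-compress-append}; by Property~\ref{prop:ledger-add-eventual-notify} a block $B$ containing $b$ is eventually notified to every correct server, including $w$. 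For the second case, if $e$ first entered $\TheSet$ at Line~\ref{alg:setchain-compress-othersetaddition}, then a block $B$ whose decompressed batch contained $e$ had already been notified to $v$, and by Property~\ref{prop:ledger-consistent-notification} the very same $B$ is notified to $w$.

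In both cases $w$ eventually receives $B$ and processes the relevant transaction $B[i]$: it recovers $\batchOriginal=\mathtt{Decompress}(B[i])$ with $e \in \batchOriginal$ (since $\mathtt{Decompress}$ inverts $\mathtt{Compress}$), computes the set $G$ of valid, not-yet-historized elements at Line~\ref{alg:setchain-compress-othervalidelement}, and adds $G$ to $\TheSet$ at Line~\ref{alg:setchain-compress-othersetaddition}. Because $e$ is valid, it is inserted into $w$'s $\TheSet$ if not already present, and elements are never removed, so every subsequent $\setobject.\get_w()$ returns a tuple $(T',H',h',P')$ with $e \in T'$.

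The main obstacle, and the only genuine difference from the Vanilla proof, is the indirection introduced by batching and compression: I must argue that a locally added $e$ is not stranded in $v$'s collector forever (handled by the timeout-based epoch-increment assumption) and that $w$ faithfully reconstructs $e$ from the compressed transaction (handled by correctness of $\mathtt{Decompress}\circ\mathtt{Compress}$). One corner case to dispatch is when $e$ already belongs to $\history$ by the time $w$ processes $B$, in which case $e \notin G$ and $e$ is not re-added at Line~\ref{alg:setchain-compress-othersetaddition}; but then $e$ is already in $w$'s $\TheSet$ by Consistent-Sets (Lemma~\ref{lem:compress-consistent-sets}), so the conclusion holds regardless.
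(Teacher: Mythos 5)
Your proof is correct and follows essentially the same route as the paper's: the same two-case split on where $e$ first entered $\TheSet$, the same use of Properties~\ref{prop:ledger-add-eventual-notify} and~\ref{prop:ledger-consistent-notification}, and the same conclusion at Line~\ref{alg:setchain-compress-othersetaddition}. Your explicit handling of the corner case where $e$ is already in $w$'s $\history$ (via Consistent-Sets) is a small point of extra care that the paper's proof passes over implicitly.
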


\begin{proof}
    
    In Algorithm~\ref{alg:setchain-compress}, an element is added by server $v$ to $\TheSet$ in Lines \ref{alg:setchain-compress-setaddition} and \ref{alg:setchain-compress-othersetaddition} only. 
    
    First, let us consider the case where $e$ is added by server $v$ to $\TheSet$ in Line \ref{alg:setchain-compress-setaddition} when a client invoked $\setobject.\add_v(e)$. Then, $v$ adds $e$ to $batch$ by invoking $\setobject.add\_to\_batch_v(e)$ (Line~\ref{alg:setchain-compress-batchaddition}). 
    %
    %
    Eventually, after $e$ is added to the $batch$, a notification $\mathtt{isReady}(\batch)$ is signaled.
    Therefore, eventually $v$ compresses $batch$ with element $e$ and appends the compressed batch $cb$ to the ledger $\ledgerobject$ (Line~\ref{alg:setchain-compress-append}). Note that, elements in $batch$ are removed only after the compressed version of $batch$ is added to the ledger $\ledgerobject$ (line~\ref{alg:setchain-compress-emptybatch}).
    %
    %
    From Property~\ref{prop:ledger-add-eventual-notify}, eventually a block $B$ containing the compressed batch $cb$ is notified to all correct \setchain servers, including $w$.
    
    On the other hand, if $e$ is added by server $v$ to $\TheSet$ at Line~\ref{alg:setchain-compress-othersetaddition} for the first time, then a block $B$ containing a compressed batch $cb$ with $e \in cb$ was notified to $v$ by the ledger $\ledgerobject$. From Property \ref{prop:ledger-consistent-notification}, this block must have been notified to all correct \setchain servers, including $w$.
    
    In both cases, when processing $cb$ (with $e \in cb$), server $w$ adds the valid element $e$ to $\TheSet$ (Line~\ref{alg:setchain-compress-othersetaddition}) if it is not already present and is never removed from $\TheSet$. Hence, all future invocations of $\setobject.\get_w()$ return tuples $(T', H', h', P')$ with $e \in T$.
\end{proof}

Algorithm \ref{alg:setchain-compress} guarantees Property~\ref{prop:eventual-get}.

\begin{lem}\label{lem:compress-eventual-get}
    Let $v$ be a correct server, let $e$ be a valid element and let $(T,H,h,P)=\setobject.\get_v()$. If $e \in T$, then eventually all invocations $(T',H',h',P')=\setobject.\get_v()$ satisfy that $e \in H'$.
\end{lem}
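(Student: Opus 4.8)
The plan is to follow the structure of Lemma~\ref{lem:vanilla-eventual-get}, adapting it to the batching and compression of Algorithm~\ref{alg:setchain-compress}, and to reuse the dissemination argument already developed in the proof of Lemma~\ref{lem:compress-get-global}. In Algorithm~\ref{alg:setchain-compress} a valid element $e$ can enter $\TheSet$ only at Line~\ref{alg:setchain-compress-setaddition} (while a correct server $v$ processes a client's $\setobject.\add_v(e)$) or at Line~\ref{alg:setchain-compress-othersetaddition} (while $v$ processes a decompressed batch inside a $\ledgerobject.\NewBlock(B)$ handler). I would split the argument along these two cases and, in both, reduce to a single unifying observation: once $v$ processes, in its $\NewBlock$ handler, a compressed batch that decompresses to a batch containing $e$, the element $e$ is placed in $\history$.

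The case where $e$ is added at Line~\ref{alg:setchain-compress-othersetaddition} is immediate. There $v$ adds to $\TheSet$ exactly the set $G$ of valid elements not already in $\history$ (Line~\ref{alg:setchain-compress-othervalidelement}); hence if $e$ is added at this line then $e \in G$, and the same handler invocation executes $\history[\epoch] \leftarrow G$ at Line~\ref{alg:setchain-compress-epochaddition}, so $e$ enters $\history$ in the very step it enters $\TheSet$.

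For the case where $e$ is added at Line~\ref{alg:setchain-compress-setaddition}, I would trace $e$ through the collector. The same $\setobject.\add_v(e)$ call invokes $\setobject.\addepoch_v(e)$, placing $e$ in $\batch$ (Line~\ref{alg:setchain-compress-batchaddition}). Using the standing assumption that there is always a future epoch increment (enforced by the collector-full-or-timeout signal), a $\mathtt{isReady}(\batch)$ event eventually fires while $e \in \batch$, because elements leave $\batch$ only after the compressed batch has been appended at Line~\ref{alg:setchain-compress-append} (namely at Line~\ref{alg:setchain-compress-emptybatch}). Thus $v$ compresses a batch containing $e$ and appends it to $\ledgerobject$, and by Property~\ref{prop:ledger-add-eventual-notify} a block $B$ carrying this compressed batch is eventually notified to $v$. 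When $v$ processes $B$, it decompresses the batch (recovering $e$, since compression is lossless, so $\batchOriginal \neq \emptyset$) and recomputes $G$; because $e$ is valid, either $e$ is already in $\history$ (and we are done) or $e \in G$ and $v$ sets $\history[\epoch] \leftarrow G \ni e$. In either subcase $e \in \history$.

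Finally, since $\history$ entries are never removed, once $e \in \history$ it remains there, so every subsequent invocation $(T',H',h',P')=\setobject.\get_v()$ satisfies $e \in H'$, as required. I expect the only non-routine step to be justifying the liveness of the $\mathtt{isReady}$ signal, i.e., that the batch holding $e$ is actually compressed and appended rather than waiting forever; this is precisely where the standing assumption on future epoch increments (realized through timeouts) is needed, and the remainder of the argument reuses the dissemination mechanics already established for Lemma~\ref{lem:compress-get-global}.
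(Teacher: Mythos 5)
Your proposal is correct and follows essentially the same route as the paper's proof: the same two-case split on where $e$ enters $\TheSet$, tracing $e$ through the collector to a compressed batch appended to the ledger, invoking Property~\ref{prop:ledger-add-eventual-notify} to get the block back to $v$, and concluding via Line~\ref{alg:setchain-compress-epochaddition}. Your added care about the liveness of $\mathtt{isReady}$ and the sub-case where $e$ is already in $\history$ only makes the argument slightly more explicit than the paper's.
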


\begin{proof}
    
    The element $e$ was added to $\TheSet$ of server $v$ in Algorithm \ref{alg:setchain-compress} in either Line \ref{alg:setchain-compress-setaddition} or Line \ref{alg:setchain-compress-othersetaddition}. 
    
    First, let us consider the case where $e$ is added by server $v$ to $\TheSet$ in Line \ref{alg:setchain-compress-setaddition} when a client invoked $\setobject.\add_v(e)$. Then, $v$ adds $e$ to $batch$ by invoking $\setobject.add\_to\_batch_v(e)$ (Line~\ref{alg:setchain-compress-batchaddition}). 
    %
    %
    Eventually, after $e$ is added to the $batch$, a notification $\mathtt{isReady}(\batch)$ is signaled.
    Therefore, eventually $v$ compresses $batch$ with element $e$ and appends the compressed batch $cb$ to the ledger $\ledgerobject$ (Line~\ref{alg:setchain-compress-append}). Note that, elements in $batch$ are removed only after the compressed version of $batch$ is added to the ledger $\ledgerobject$ (line~\ref{alg:setchain-compress-emptybatch}). From Property~\ref{prop:ledger-add-eventual-notify}, eventually a block $B$ containing the compressed batch $cb$ is notified to $v$.

     On the other hand, if $e$ is added to $\TheSet$ at Line \ref{alg:setchain-compress-othersetaddition} for the first time, it was because a block $B$ containing the compressed batch $cb$ was notified to $v$ by the ledger $\ledgerobject$ and $e \in cb$.

     Then, in either case, after receiving $cb$ with $e \in cb$ in block $B$, $v$ decompresses $cb$ and adds the set $G$ of its valid elements to $\history$ (if not there already; Line \ref{alg:setchain-compress-epochaddition}). Since $e$ is valid, after this it holds that $e \in \history$. Then, eventually all invocations $(T', H', h', P')=\setobject.\get_v()$ satisfy that $e \in H'$.
\end{proof}

Algorithm \ref{alg:setchain-compress} guarantees Property~\ref{prop:unique-epoch}.

\begin{lem}\label{lem:compress-unique-epoch}
    Let $v$ be a correct server, $(T,H,h,P)=\setobject.\get_v()$, and let $i,i' \in \{1,\ldots,h\}$ with $i \neq i'$. Then, $H[i] \cap H[i'] = \emptyset$.
\end{lem}

\begin{proof}
    
    By way of contradiction, let us assume that for some valid element $e$ it holds that $e \in \history[i]$ and $e \in \history[i']$. Without loss of generality, let us assume that $1 \leq i < i' \leq \epoch$. 
    
    Observe that the elements of an epoch are inserted into $history$ in Line~\ref{alg:setchain-compress-epochaddition} and they do it in increasing epoch order. Then, when a set $G$ with $e \in G$ is mapped to $\history[i']$ in Line~\ref{alg:setchain-compress-epochaddition}, it already holds that  $e \in \history[i]$. But, from Line~\ref{alg:setchain-compress-othervalidelement}, $G$ cannot contain elements already in $\history$. Hence we have a contradiction.
\end{proof}

Algorithm \ref{alg:setchain-compress} guarantees Property~\ref{prop:consistent-gets}.

\begin{lem}\label{lem:compress-consistent-gets}
    Let $v,w$ be correct servers, let $(T,H,h,P)=\setobject.\get_v()$ and $(T',H',h',P')=\setobject.\get_w()$, and let $i \in \{1,\ldots,\min(h,h')\}$. Then $H[i]=H'[i]$.
\end{lem}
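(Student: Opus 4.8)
\textbf{Proof proposal for Lemma~\ref{lem:compress-consistent-gets} (Consistent-Gets for Compresschain).}
The plan is to mirror the structure of the Vanilla proof of the same property (Lemma~\ref{lem:vanilla-consistent-gets}), namely an induction on the epoch number $n$, since the underlying logic that builds epochs from observed blocks is essentially the same in both algorithms. The key observation is that in \ref{alg:setchain-compress} each ledger transaction is a compressed batch that becomes exactly one epoch, so there is a clean one-to-one correspondence between the $n$-th epoch and the $n$-th compressed batch processed. First I would set up the base case $n=0$, where $H[0]=\emptyset=H'[0]$ trivially. Then, in the inductive step, I would assume $H[i]=H'[i]$ for all $i<n$ and prove $H[n]=H'[n]$ by showing mutual inclusion; by symmetry it suffices to argue $H[n]\subseteq H'[n]$.

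For the inductive step, take any $e\in H[n]$. This element was added to $H[n]$ by $v$ in Line~\ref{alg:setchain-compress-epochaddition} while processing the decompressed contents of some transaction $B[k]$ in a ledger block $B$. The crucial bookkeeping step is to argue that because every correct server increments $\epoch$ exactly once per processed compressed batch (Line~\ref{alg:setchain-compress-epochinc}) and processes the transactions in order, the compressed batch that produced epoch $n$ is the $n$-th compressed batch in the totally ordered stream of ledger transactions. By Property~\ref{prop:ledger-consistent-notification}, $w$ observes the same blocks in the same order, hence it decompresses the same $n$-th batch and produces its epoch $n$ from the same raw batch contents $\batchOriginal$. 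The fact that $v$ placed $e$ in $H[n]$ means (1) $e$ is a valid element and (2) $e\notin H[0..n-1]$ (enforced by the filter $e\notin\history$ in Line~\ref{alg:setchain-compress-othervalidelement}). By the inductive hypothesis, $H[0..n-1]=H'[0..n-1]$, so $e\notin H'[0..n-1]$ as well. Therefore when $w$ applies the same filter to the same decompressed batch, $e$ survives and is added to $H'[n]$, giving $e\in H'[n]$.

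The main obstacle I anticipate is making the ``$n$-th epoch corresponds to the $n$-th compressed batch'' argument airtight. Unlike Vanilla, where each block yields a single epoch, here a single block may contain \emph{several} compressed batches (one per transaction), and each one triggers a separate epoch increment inside the \texttt{for} loop over $B[i]$. I would therefore be careful to phrase the correspondence in terms of the \emph{global order of compressed-batch transactions} across all blocks, rather than in terms of blocks, and to invoke the total order of transactions within a block together with Property~\ref{prop:ledger-consistent-notification} to guarantee that $v$ and $w$ enumerate these batch transactions identically. A secondary subtlety is that a correct server still decompresses and re-validates every batch (since a Byzantine server could have appended a malformed compressed transaction), so I must note that decompression and the validity/novelty filters are deterministic functions of the batch contents; given that both servers see byte-identical transactions, they compute identical sets $G$, and hence identical $H[n]$ and $H'[n]$. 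Once this deterministic-same-input argument is established, the inclusion $H'[n]\subseteq H[n]$ follows by the same reasoning with the roles of $v$ and $w$ exchanged, completing the induction.
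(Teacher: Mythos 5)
Your proposal is correct and follows essentially the same route as the paper's proof: induction on the epoch number, using Property~\ref{prop:ledger-consistent-notification} to ensure both servers process the same compressed batches in the same order, and the inductive hypothesis to show the novelty filter of Line~\ref{alg:setchain-compress-othervalidelement} admits the same elements. Your explicit treatment of the correspondence between the $n$-th epoch and the $n$-th compressed-batch transaction (rather than the $n$-th block) is in fact slightly more careful than the paper's own wording on this point.
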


\begin{proof}


    The proof proceeds by induction in the epoch number n.
    
    \begin{itemize}
    \item Base case, $n = 0.$ $ H[0] = \emptyset = H'[0].$
    
    \item  Inductive step: we will show that $H[n] = H'[n]$ assuming that $\forall i < n, H[i] = H'[i].$
    \end{itemize}

    First, we show that $H[n] \subseteq H'[n]$.
    
    Let $e \in H[n]$. 
    Then $e$ that was added by $v$ to $H[n]$ at Line \ref{alg:setchain-compress-epochaddition} when a ledger block $B$ containing the compressed batch $cb$ was processed by $v$ and $e \in cb$.
    %
    %
    From Property~\ref{prop:ledger-consistent-notification}, we know that $w$ receives the same set of blocks in the same order as $v$. So, it receives $B$ and processes $cb$ in the same order as $v$ and $e \in cb$  
    The fact that $v$ added element $e$ to $H[n]$ implies that (1) $e$ is a valid element and (2) $e \notin H[0..n-1].$
    By inductive hypothesis and (2), we conclude that $e$ is not in $H'[0..n-1]$. 
    Since $e$ is a valid element that is not in $H[0..n-1]$, $e$ is added to $H'[n]$.
    The proof that $H'[n] \subseteq H[n]$ is analogous.
    
\end{proof}

Algorithm \ref{alg:setchain-compress} guarantees Property~\ref{prop:add-before-get}.

\begin{lem}\label{lem:compress-add-before-get}
    Let $v$ be a correct server, $e$ be a valid element, $(T,H,h,P)$ $=\setobject.\get_v()$, and
    $e \in T$. Then there was an operation $\setobject.\add_w(e)$ invoked in the past in some server $w$.
\end{lem}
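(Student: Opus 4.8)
The plan is to follow the same backward-tracing strategy used for Vanilla in Lemma~\ref{lem:vanilla-add-before-get}, adapted to the fact that in~\ref{alg:setchain-compress} the ledger transactions are compressed batches rather than bare elements. Since $e \in T$ and $T$ is the value of $\TheSet$ returned by the correct server $v$, I would first identify every program point at which an element can be inserted into $\TheSet$. Inspecting~\ref{alg:setchain-compress}, these are exactly Line~\ref{alg:setchain-compress-setaddition}, reached inside $\add$ when a client invokes $\setobject.\add_v(e)$, and Line~\ref{alg:setchain-compress-othersetaddition}, reached while processing the valid elements of a decompressed batch extracted from a ledger block. The proof then splits into these two cases.

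The first case is immediate: if $e$ was placed in $\TheSet$ at Line~\ref{alg:setchain-compress-setaddition}, then by construction of the $\add$ function this happened while $v$ was serving a call $\setobject.\add_v(e)$, so taking $w = v$ already witnesses the required past invocation.

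The second case is where compression forces extra care, and I expect it to be the main obstacle. If $e$ entered $\TheSet$ at Line~\ref{alg:setchain-compress-othersetaddition}, then $v$ had been notified of a block $B$ and was processing some compressed batch $cb$ appearing as a transaction of $B$ whose decompression contains $e$ (written $e \in cb$, as in the companion lemmas). The difficulty is that Property~\ref{prop:notification-implies-append} is stated in terms of a \emph{valid element} appearing in a notified block, whereas here the ledger transaction is the opaque batch $cb$ and $e$ only surfaces after decompression. I would bridge this by regarding $e$ as contained in $B$ precisely because it is recovered from a transaction of $B$, so that Property~\ref{prop:notification-implies-append} applies and yields some server $w$ that invoked $\ledgerobject.\Append_w$ on a transaction whose decompression contains $e$. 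From that point the argument coincides with Vanilla: since $e$ is valid, and by the system-model assumptions of Section~\ref{subsec:systemmodel} a server can neither fabricate a valid element on its own nor collude with clients, the server $w$ could not have produced $e$ without a client first injecting it, so some client must have invoked $\setobject.\add(e)$ at some server in the past, which establishes the claim. The only genuinely delicate point is to make explicit that unforgeability of valid elements survives compression: a Byzantine $w$ may append an arbitrary batch, but it cannot place a fresh valid $e$ inside it, so the safety guarantee reduces to the same no-self-creation assumption used throughout.
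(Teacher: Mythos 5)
Your proposal is correct and follows essentially the same route as the paper's proof: the same two-case split on Lines~\ref{alg:setchain-compress-setaddition} and~\ref{alg:setchain-compress-othersetaddition}, an appeal to Property~\ref{prop:notification-implies-append} applied to the compressed batch, and the no-self-creation/non-collusion assumption to conclude. Your explicit remark that Property~\ref{prop:notification-implies-append} must be read as applying to the transaction $cb$ whose decompression contains $e$ is in fact slightly more careful than the paper, which applies the property to $cb$ without comment.
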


\begin{proof}

    Elements are added to $\TheSet$ in the Lines~\ref{alg:setchain-compress-setaddition} and \ref{alg:setchain-compress-othersetaddition}. 
    
    First, let us consider the case where $e$ was added by server $v$ to $\TheSet$ in Line~\ref{alg:setchain-compress-setaddition}. Then, it was added when operation $\setobject.\add_v(e)$ was being processed.
    
    On the other hand, if $e$ was added by server $v$ to $\TheSet$ in Line~\ref{alg:setchain-compress-othersetaddition}, then a ledger block $B$ containing a compressed batch $cb$ was notified to $v$ by the ledger $\ledgerobject$, and $e \in cb$. From Property~\ref{prop:notification-implies-append}, some server $w$ invoked $\ledgerobject.\Append_w(cb)$.
    
    Recall that, as mentioned in Section~\ref{subsec:systemmodel}, we assume that a server cannot create a valid element by itself, and clients and servers do not collude. So, a server $w$ cannot append a valid element $e$ as a part of the compressed batch $cb$ with $\ledgerobject.\Append_w(cb)$ without a client invocation $\setobject.\add_w(e)$.
\end{proof}

Algorithm \ref{alg:setchain-compress} guarantees Property~\ref{prop:valid-epoch}.

\begin{lem}\label{lem:compress-valid-epoch}
    Let $v$ be a correct server, $(T,H,h,P)=\setobject.\get_v()$, and $i \in \{1, \ldots, h\}$. Then eventually all invocations $(T',H',h',P')=\setobject.\get_v()$ satisfy that $P'$ contains at least $f+1$ epoch proofs of $H[i]$.
\end{lem}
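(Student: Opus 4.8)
The plan is to follow the structure of the proof of Lemma~\ref{lem:vanilla-valid-epoch} essentially verbatim, inserting one extra argument to cope with the collector mechanism that distinguishes~\ref{alg:setchain-compress} from~\ref{alg:setchain-vanilla}. I would begin at the instant a correct server $v$ consolidates epoch $i$: when $v$ maps the set $G$ of valid, new elements to $\history[i]$ at Line~\ref{alg:setchain-compress-epochaddition}, it computes $p_v=\Sign_v(\Hash(i,G))$ and invokes $\addepoch(\langle i,p_v,v\rangle)$ at Line~\ref{alg:setchain-compress:epochproof}. The crucial observation is that, unlike Vanilla, this call only deposits the epoch-proof into the local collector $\batch$; it is not yet on the ledger.

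First I would show that this deposited epoch-proof reaches the ledger. Appealing to the standing assumption that a future epoch increment always occurs (backed by timeouts), the $\mathtt{isReady}(\batch)$ guard is eventually signaled, at which point the batch carrying $\langle i,p_v,v\rangle$ is compressed and appended via Line~\ref{alg:setchain-compress-append}; because $\batch$ is reset only after this append (Lines~\ref{alg:setchain-compress-append}--\ref{alg:setchain-compress-emptybatch}), no proof can be dropped in transit. Property~\ref{prop:ledger-add-eventual-notify} then places the compressed batch in a block that is notified to every correct server.

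Next I would lift this from $v$ to all correct servers. By Property~\ref{prop:ledger-consistent-notification} every correct server observes the same blocks in the same order and hence reaches the consolidation of epoch $i$; by Lemma~\ref{lem:compress-consistent-gets} they all agree on its content, mapping the same $G$ to their $\history[i]$. Thus each correct server $w$ independently generates $\langle i,p_w,w\rangle$ and, by the same collector-flushing argument, eventually appends it to the ledger. Since $f<n/2$ guarantees at least $f+1$ correct servers, at least $f+1$ distinct valid epoch-proofs for $\history[i]$ appear on the ledger; when $v$ processes the blocks containing them, the validity filter at Line~\ref{alg:setchain-compress:validepochproof} adds them to $\proofs$, so every subsequent $\setobject.\get_v()$ returns a $P'$ with at least $f+1$ epoch-proofs of $\history[i]$.

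The hard part, and really the only place this proof departs from the Vanilla one, is the first step: arguing that a proof sitting in the collector is eventually committed rather than stalled indefinitely. I expect to discharge it entirely through the timeout-driven flushing implied by the ``future epoch increment'' assumption, together with the fact that the batch is emptied only after its compressed form has been appended, so liveness of the collector suffices and no new consistency argument is needed.
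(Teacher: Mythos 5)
Your proposal is correct and follows essentially the same route as the paper's proof: epoch consolidation triggers the creation of the local epoch-proof, the collector is eventually flushed (timeout/\texttt{isReady}) and its compressed batch appended, Properties~\ref{prop:ledger-add-eventual-notify} and~\ref{prop:ledger-consistent-notification} ensure every correct server reaches the same consolidation and emits its own proof, and $f<n/2$ yields the $f+1$ proofs collected at Line~\ref{alg:setchain-compress:validepochproof}. Your explicit appeal to Lemma~\ref{lem:compress-consistent-gets} for agreement on the epoch content, and to the collector-reset ordering to rule out dropped proofs, only makes the argument slightly more careful than the paper's version.
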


\begin{proof}

    Let $\history[i]$ be an epoch with epoch number $i$ whose valid and new elements as set $G$ were added to $\history$ by $v$ at Line~\ref{alg:setchain-compress-epochaddition}. Once this happens, the hash of $G$ is computed and signed by $v$ as $p_v$. Then, an epoch proof consisting of $\langle i,p_v,v \rangle$ is added to the $batch$ by $v$ in Line~\ref{alg:setchain-compress:epochproof}. Once the $batch$ is ready, it will be compressed and the compressed batch $cb'$ will be appended to the ledger $\ledgerobject$ by $v$ (line~\ref{alg:setchain-compress-append}).
    
    From Property~\ref{prop:ledger-consistent-notification}, ledger $\ledgerobject$ notifies the same set of blocks in the same order to all servers. Eventually, the ledger block $B$ with the compressed batch $cb$ was notified to each correct server $w$ as it happened to $v$. Then, server $w$ maps valid elements of $cb$ as $G$ to $\history[i]$, generates an epoch proof $\langle i,p_w,w \rangle$ for the i-th epoch, and appends it to its $batch$. Eventually, this batch will be ready and compressed, and the compressed batch $cb''$ will be appended to the ledger $\ledgerobject$. 
    
    Then, from Property~\ref{prop:ledger-add-eventual-notify}, ledger blocks containing the compressed batches with the epoch proofs of the i-th epoch will be notified to all correct \setchain servers, including server $v$. Server $v$ will add these epoch proofs to $\proofs$ in Line~\ref{alg:setchain-compress:validepochproof}. Since we assume a system with $n$ servers, where at most $f<n/2$ are not correct, at least $f+1$ epoch proofs for $\history[i]$ will be appended by correct servers to the ledger as a part of compressed batches. Hence, eventually, in all invocations $(T', H', h', P')=\setobject.\get_v()$, it will hold that $P'$ will contain at least $f+1$ epoch proofs of $\history[i]$.
\end{proof}

\subsection{Correctness of Algorithm~\ref{alg:setchain-hash}}

Before writing the proofs for the properties of \setchain, we need some lemmas for \emph{Epoch Consolidation} in Hashchain.
Lemma \ref{lem:hash-batch-original-retrieval} states that when a correct server generates a hash batch and appends it to the ledger, that server is available to share the original batch to other servers when requested.

\begin{lem}\label{lem:hash-batch-original-retrieval}
    Let $v$ be a correct server and let $hb$ be a valid hash batch appended by $v$ to the ledger L through $\texttt{L.append}(hb)$, where $hb = \langle hs,sg,v \rangle$. Then $v$ is available to share the original batch $batch\_original$ corresponding to the hash $hs$ when requested by other servers $w$.
\end{lem}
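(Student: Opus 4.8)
The plan is to trace every point in Algorithm~\ref{alg:setchain-hash} at which a correct server $v$ executes $\ledgerobject.\Append(hb)$ with $hb = \langle hs, sg, v \rangle$, and to verify that at each such point $v$ has already made the corresponding batch retrievable before publishing the hash. Inspecting the code, there are exactly two such append sites: Line~\ref{alg:setchain-hash-append}, reached when $\mathtt{isReady}(\batch)$ fires, and Line~\ref{alg:setchain-hash-otherappend}, reached inside the $\ledgerobject.\NewBlock(B)$ handler when $v$ encounters a hash it has not seen locally. I would argue the claim for each site separately.

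For the $\mathtt{isReady}$ site, I would observe that the statements immediately preceding the append set $hs \leftarrow \Hash(\batch)$, store $\hashToBatch[hs] \leftarrow \batch$ (Line~\ref{alg:setchain-hash-hashtobatch}), and invoke $\mathtt{Register\_batch}(hs,\batch)$ (Line~\ref{alg:register-hash-hashtobatch}). Thus $v$ registers precisely the batch whose hash is $hs$ before publishing $hb$. Crucially, the collector is emptied ($\batch \leftarrow \emptyset$) only \emph{after} these steps, so resetting the collector does not destroy the registered copy. For the $\mathtt{NewBlock}$ site, I would note that $v$ reaches the append only after obtaining $\batchOriginal$ (locally or via $\mathtt{Request\_batch}$) and passing the guard on Line~\ref{alg:setchain-hash-batchverify}, which forces a \textbf{continue} unless $\batchOriginal \neq \emptyset$ and $\Hash(\batchOriginal) = hs$. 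The intervening statements then set $\hashToBatch[hs] \leftarrow \batchOriginal$ and call $\mathtt{Register\_batch}(hs,\batchOriginal)$ (Line~\ref{alg:other-register-hash-hashtobatch}). Hence, again, a batch hashing to $hs$ is registered before the append.

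In both cases the registered object is a batch $b$ with $\Hash(b) = hs$, and, since $v$ is correct, it faithfully executes $\mathtt{Register\_batch}$, which by definition makes $b$ available to answer any matching $\mathtt{Request\_batch}(hs)$ issued by a server $w$. Because registered batches are never removed by the algorithm, $v$ remains able to serve $hs$ from that point onward, which establishes the lemma.

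The main obstacle I anticipate lies not in the case analysis but in pinning down the semantics of the pair $\mathtt{Register\_batch}$/$\mathtt{Request\_batch}$: the argument rests on the contract that whatever is registered under a hash can subsequently be retrieved under that same hash, and that registrations persist. I would therefore make this contract explicit as an assumption on the distributed batch-sharing service, since the lemma is, at heart, a bookkeeping statement asserting that a correct server has honored this contract before ever advertising the hash on the ledger.
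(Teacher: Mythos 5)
Your proposal is correct and follows essentially the same route as the paper's proof: identify the two append sites (Lines~\ref{alg:setchain-hash-append} and~\ref{alg:setchain-hash-otherappend}), and check that at each one the server registers a batch hashing to $hs$ via $\mathtt{Register\_batch}$ before publishing the hash-batch, with the guard on Line~\ref{alg:setchain-hash-batchverify} ensuring the registered batch is nonempty and consistent with $hs$ in the second case. Your explicit remarks about the collector being reset only after registration and about the persistence/retrievability contract of $\mathtt{Register\_batch}$/$\mathtt{Request\_batch}$ are refinements the paper leaves implicit, not a different argument.
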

\begin{proof}

    Two places where a hash batch is appended to the Ledger $\ledgerobject$ in Algorithm \ref{alg:setchain-hash} are at Lines \ref{alg:setchain-hash-append} and  \ref{alg:setchain-hash-otherappend}.

    When a correct server $v$ appends a hash batch $hb$ to the ledger $\ledgerobject$ at Line \ref{alg:setchain-hash-append}, it stores the hash $hs$ and the corresponding $batch\_original$ in local, by calling the $\mathtt{Register\_batch}(h,\batch)$ function (Line \ref{alg:register-hash-hashtobatch}).
    
    When a correct server $v$ appends a hash batch $hb$ to the ledger $\ledgerobject$ at Line \ref{alg:setchain-hash-otherappend}, it has already either retrieved the $batch\_original$ from its local storage (Line \ref{alg:setchain-hash-retrieveoriginalbatchlocal}) or has requested the original batch using the hash $hs$ found in $hb$ (Line \ref{alg:setchain-hash-requestoriginalbatch}), if the original batch is not found locally. $v$ proceeds to append the hash batch $hb$ to the Ledger $\ledgerobject$, only after verifying (Line \ref{alg:setchain-hash-batchverify}) that the $batch\_original$ is not empty and the hash of $batch\_original$ matches the hash $hs$ found in $hb$. Also, before appending, it stores the hash $hs$ and the corresponding $batch\_original$ in local, by calling the $\mathtt{Register\_batch}(h,\batch)$ function (Line \ref{alg:other-register-hash-hashtobatch}).

    So, in both cases, $v$ will be available to share the original batch $batch\_original$ corresponding to the hash $hs$ when requested by other servers $w$.
\end{proof}

\begin{lem}\label{lem:hash-epoch-consolidation-liveness}
    Let $v$ be a correct server and let $hb$ be a valid hash batch appended by $v$ to the ledger L through $\texttt{L.append}(hb)$, where $hb = \langle hs,sg,v \rangle$. Then eventually $|hash\_to\_signers[hs]| \geq f+1$ in each correct server $w$. 
\end{lem}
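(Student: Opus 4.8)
The plan is to establish two facts and then combine them with a simple counting argument. The first fact (call it \emph{availability}) is that every correct server eventually appends a hash-batch carrying the hash $hs$; the second (call it \emph{propagation}) is that every such appended hash-batch is eventually delivered to, and processed by, every correct server, so that its signer is recorded. Throughout I would use the hypothesis that the correct server $v$ has already executed $\ledgerobject.\Append(\langle hs, sg, v\rangle)$, together with Property~\ref{prop:ledger-add-eventual-notify} (eventual notification) and Property~\ref{prop:ledger-consistent-notification} (all correct servers see the same blocks in the same order).

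For availability, I would fix an arbitrary correct server $u$ and argue it appends some $\langle hs, s_u, u\rangle$. By Property~\ref{prop:ledger-add-eventual-notify} the block containing $v$'s hash-batch is notified to $u$ through $\ledgerobject.\NewBlock(\cdot)$. If $u=v$, it already registered the batch in the $\mathtt{isReady}$ handler and has nothing to do. Otherwise, when $u$ processes $\langle hs, sg, v\rangle$ it finds $\hashToBatch[hs]$ empty and calls $\mathtt{Request\_batch}(hs)$ against the signer $v$; since $v$ is correct, Lemma~\ref{lem:hash-batch-original-retrieval} guarantees $v$ is available to serve the original batch, and with reliable communication $u$ eventually receives a batch hashing to $hs$, so the guard at Line~\ref{alg:setchain-hash-batchverify} does not fire and $u$ appends its own $\langle hs, s_u, u\rangle$ at Line~\ref{alg:setchain-hash-otherappend}.

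For propagation and the conclusion, I would observe that each appended $\langle hs, s_u, u\rangle$ is, again by Properties~\ref{prop:ledger-add-eventual-notify} and~\ref{prop:ledger-consistent-notification}, eventually notified to every correct server $w$; when $w$ processes it, the PKI ensures $\mathtt{valid\_hash}$ accepts a genuinely signed hash-batch, so $w$ adds the distinct identity $u$ to $\hashToSigners[hs]$ at Line~\ref{alg:hash-consol}. Because $f<n/2$, there are $n-f \geq f+1$ correct servers, each contributing an unforgeable, distinct identity; hence $|\hashToSigners[hs]|$ eventually reaches at least $f+1$ at every correct server $w$, as claimed.

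The step I expect to be the main obstacle is the retrieval inside availability, because a correct server may first meet a hash-batch for $hs$ posted by a Byzantine server that then withholds the batch, forcing a \textbf{continue} and leaving $\hashToBatch[hs]$ empty. I would stress that this is harmless: the map stays empty, so $v$'s (and every other correct server's) hash-batch for $hs$ still triggers a fresh retrieval attempt against a correct, available signer, and by Lemma~\ref{lem:hash-batch-original-retrieval} together with reliable delivery at least one such attempt succeeds. I would also make explicit that the signers counted toward the threshold are pairwise distinct and correspond to authentically signed hash-batches, which the PKI assumption secures against impersonation.
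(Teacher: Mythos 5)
Your proof is correct and follows essentially the same route as the paper's: Property~\ref{prop:ledger-add-eventual-notify} delivers $v$'s hash-batch to every correct server, each correct server retrieves the batch from the correct signer $v$ (via Lemma~\ref{lem:hash-batch-original-retrieval}), appends its own hash-batch, and the $n-f\geq f+1$ correct signers are eventually all recorded in $\hashToSigners[hs]$ at every correct server. Your explicit treatment of the case where a Byzantine server posts the hash first and withholds the batch is a welcome detail that the paper's proof leaves implicit, but it does not change the argument.
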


\begin{proof}
    When hash batch $hb$ is appended to the ledger $\ledgerobject$ by the correct server $v$, by Property \ref{prop:ledger-add-eventual-notify} a block $B$ containing $hb$ will eventually be notified to all correct \setchain~servers.

    Each correct \setchain~server $w$, when processing $hb$ as a part of the received block $B$, requests the original batch of elements from $v$ after validating the signature $sg$. After receiving the original batch $batch\_original$, $w$ verifies the hash $hs$ as $\texttt{Hash}(batch\_original) = hs$. Since the hash is valid ($v$ is correct), then $w$ generates a signature $sg'$ of the hash $hs$. Then, it generates a new hash batch $hb' = \langle hs,sg',w \rangle$ and appends $hb'$ to the ledger. Server $w$ also adds the identity $v$ to $hash\_to\_signers[hs]$.

    Again, by Property \ref{prop:ledger-add-eventual-notify}, all the correct \setchain~servers will eventually be notified of blocks from the ledger with hash batches for $hs$ from all correct servers. When the hash batch of a correct server is processed, the server is added to $hash\_to\_signers[hs]$.
    
    Since we assume a system with at least $f+1$ correct servers, eventually $|hash\_to\_signers[hs]| \geq f+1$.
\end{proof}

Algorithm \ref{alg:setchain-hash} guarantees Property \ref{prop:consistent-sets}. 

\begin{lem} \label{lem:hash-consistent-sets}
     Let $(T,H,h,P)=\setobject.\get_v()$ be an invocation to a correct server $v$. Then, $\forall i \in \{1,\ldots,h\}, H[i] \subseteq T$.
\end{lem}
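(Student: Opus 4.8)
The plan is to follow the same template as Lemmas~\ref{lem:vanilla-consistent-sets} and~\ref{lem:compress-consistent-sets} and prove the stronger invariant that in every correct server $v$ it holds that $\forall i \in [1,\epoch],\ \history[i] \subseteq \TheSet$ \emph{at all times}; the statement then follows because $\get$ returns the current $(\TheSet,\history,\epoch,\proofs)$. Since $\TheSet$ never shrinks and $\epoch$ is only ever incremented (Line~\ref{alg:setchain-hash-epochinc}), it suffices to check that the invariant is preserved at the single point where $\history$ is modified. The base case is immediate: initially $\epoch=0$ and $\history=\emptyset$, so the claim holds vacuously.

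For the inductive step, note that the only assignment to $\history$ is $\history[\epoch]\leftarrow G$ at Line~\ref{alg:setchain-hash-epochaddition}, executed during \emph{epoch consolidation}, i.e.\ inside the branch guarded by $|\hashToSigners[h]|=f+1$ (Line~\ref{alg:setchain-hash-signerscheck}), where $G$ is recomputed at Line~\ref{alg:setchain-hash-othervalidelement} as the valid elements of $\batchOriginal$ not already in $\history$. I would therefore reduce everything to the key claim: \emph{at the moment of consolidation, $G \subseteq \TheSet$.} To prove it, fix $e\in G$, so $e$ is a valid element of $\batchOriginal$ with $e\notin\history$ at consolidation time, and split according to how $v$ obtained $\batchOriginal$ for the hash $h$. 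If $\batchOriginal$ is $v$'s own batch, then $e$ was inserted into $\batch$ by $\addepoch$ only after being placed in $\TheSet$ by the corresponding client $\add$ (Lines~\ref{alg:setchain-hash-setaddition}--\ref{alg:setchain-hash-batchaddition}); here I rely on the fact that $G$ collects only elements, never epoch-proofs, so every $e\in G$ genuinely originates from an $\add$. Otherwise $v$ first encountered $h$ in the ``$h$ is new'' branch, where it retrieved and verified $\batchOriginal$ and executed $\TheSet\leftarrow\TheSet\cup G'$ at Line~\ref{alg:setchain-hash-othersetaddition}, with $G'$ the valid new elements of $\batchOriginal$ relative to the earlier value of $\history$. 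Since $\history$ only grows, $e\notin\history$ at consolidation implies $e\notin\history$ at that earlier first sighting, hence $e\in G'$ and $e$ was added to $\TheSet$ then. In both cases $e\in\TheSet$ before consolidation, and by monotonicity it remains there, so $G\subseteq\TheSet$ and the invariant is preserved.

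The main obstacle, and the feature that distinguishes this proof from the \ref{alg:setchain-vanilla} and \ref{alg:setchain-compress} cases, is the \emph{decoupling} between the point where an element enters $\TheSet$ (a client $\add$, or the first sighting of $h$) and the later point where the epoch is actually consolidated once $f+1$ signatures accumulate. Two auxiliary observations close this gap and I would state them explicitly before the case analysis: (i) both $\TheSet$ and $\history$ are monotone, which is exactly what lets me relate the set $G'$ computed at first sighting to the set $G$ computed at consolidation; and (ii) the $\TheSet$ update at Line~\ref{alg:setchain-hash-othersetaddition} always runs before (or, when $f=0$, in the same iteration but textually before) the consolidation branch for the same hash, because the first sighting contributes at most one signer while consolidation requires $f+1\ge 1$. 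I would also record that consolidation is only reached when $\batchOriginal$ is non-empty and verified (otherwise the \textbf{continue} at Line~\ref{alg:setchain-hash-batchverify} is taken), so the $G$ at Line~\ref{alg:setchain-hash-othervalidelement} is well defined and built from a batch whose hash is $h$.
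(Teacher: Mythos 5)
Your proof is correct and follows essentially the same route as the paper's: both establish the invariant that $\history[i] \subseteq \TheSet$ holds at all times in a correct server, reducing to the claim that the set $G$ written into $\history$ at consolidation is already contained in $\TheSet$ by monotonicity of both structures. Your version is in fact somewhat more careful than the paper's, which cites only Line~\ref{alg:setchain-hash-othersetaddition} as the point where $G$'s elements entered $\TheSet$ and thereby glosses over the case you treat explicitly, namely that for $v$'s own batches those elements entered $\TheSet$ at Line~\ref{alg:setchain-hash-setaddition} during the client's \add.
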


\begin{proof}
We prove that in a correct server $v$ it holds that $\forall i \in [1,epoch], \history[i] \subseteq \TheSet$ at all times.

Initially, we have that $\epoch=0$, since Line \ref{alg:setchain-hash-epochinc} was never executed. Hence, Line \ref{alg:setchain-hash-epochaddition} was not executed either, $\history$ is empty, and the claim trivially holds. 

Let us assume now that $\epoch>0$ and consider some $i \in [1,epoch]$. Since $\epoch$ is only modified in Line \ref{alg:setchain-hash-epochinc} by increments of 1, when it became equal to $i$, all the elements $G$ that were later (in Line \ref{alg:setchain-hash-epochaddition}) added to $\history$ in epoch $i$ were guaranteed to be in $\TheSet$ before (in Line \ref{alg:setchain-hash-othersetaddition}). 
\end{proof}

Algorithm \ref{alg:setchain-hash} guarantees Property \ref{prop:add-get-local}.

\begin{lem}\label{lem:hash-add-get-local}
   Let $\setobject.\add_v(e)$ be an operation invoked on a correct server $v$, and $e$ is valid. Then, eventually all invocations $(T,H,h,P)=\setobject.\get_v()$ satisfy $e \in T$.
\end{lem}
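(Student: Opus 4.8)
The plan is to recognize that this is purely a \emph{local} liveness property: it concerns only the state of the single correct server $v$ on which the \add is invoked, and the returned first component $T$ of a \get. Consequently the argument should mirror the analogous lemmas for Vanilla (Lemma~\ref{lem:vanilla-add-get-local}) and Compresschain (Lemma~\ref{lem:compress-add-get-local}) almost verbatim. The hashing machinery, the hash-reversal subprotocol, and the ledger properties that distinguish Algorithm~\ref{alg:setchain-hash} from the simpler algorithms play no role here, since we never need to reason about what other servers observe or about consolidation into epochs.

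First I would trace the execution of $\setobject.\add_v(e)$ in the code of Algorithm~\ref{alg:setchain-hash}. Because $v$ is correct and $e$ is valid, the assertion on Line~\ref{alg:setchain-hash-validelement} splits into two cases. If $e \notin \TheSet$, the assertion passes and Line~\ref{alg:setchain-hash-setaddition} executes $\TheSet \leftarrow \TheSet \cup \{e\}$, so $e$ is inserted. If instead $e$ was already present in $\TheSet$ from some earlier operation, then $e \in \TheSet$ already holds. In either case, immediately after the call returns, $e \in \TheSet$ at server $v$.

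Next I would invoke the grow-only nature of $\TheSet$: no line of Algorithm~\ref{alg:setchain-hash} ever removes an element from $\TheSet$; it is only ever extended, on Lines~\ref{alg:setchain-hash-setaddition} and~\ref{alg:setchain-hash-othersetaddition}. Hence once $e \in \TheSet$ it remains so forever. Since $\setobject.\get_v()$ returns the current value of $\TheSet$ as its component $T$, every invocation issued after the \add returns a tuple with $e \in T$, which is exactly the claim.

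The only point requiring care --- and the closest thing to an obstacle --- is the conjunct $(e \notin \TheSet)$ in the assertion on Line~\ref{alg:setchain-hash-validelement}, which might superficially seem to block the insertion for a duplicate $e$. I would dispatch this precisely via the case split above: a failed assertion can only arise because $e$ had already been added, so the conclusion $e \in \TheSet$ is preserved regardless of which branch is taken. Because no appeal to Properties~\ref{prop:ledger-add-eventual-notify}--\ref{prop:notification-implies-append} or to Lemmas~\ref{lem:hash-batch-original-retrieval}--\ref{lem:hash-epoch-consolidation-liveness} is needed, this is the most elementary of the Hashchain lemmas.
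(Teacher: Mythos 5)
Your proposal is correct and matches the paper's own proof, which likewise observes that $\add_v(e)$ inserts $e$ into $\TheSet$ at Line~\ref{alg:setchain-hash-setaddition} if not already present and that elements are never removed, so all future $\get_v()$ invocations return $e$ in $T$. Your extra case analysis of the assertion on Line~\ref{alg:setchain-hash-validelement} is a slightly more careful rendering of the paper's ``if not already present'' clause, but the argument is essentially identical.
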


\begin{proof}
   The execution of $\setobject.\add_v(e)$ by the correct server $v$ will add $e$ to $\TheSet$ (Line \ref{alg:setchain-hash-setaddition}) if not already present. Since the elements are never removed from $the\_set$, $e$ will eventually be returned in all future $\setobject.\get_v()$ invocations which return $(\TheSet, \history, epoch, \proofs)$.
\end{proof}

Algorithm \ref{alg:setchain-hash} guarantees Property \ref{prop:get-global}.

\begin{lem}\label{lem:hash-get-global}
     Let $v$ and $w$ be two correct servers, let $e$ be a valid element, and let $(T,H,h,P)=\setobject.\get_v()$. If $e \in T$, then eventually all invocations $(T',H',h',P')=\setobject.\get_w()$ satisfy that $e \in T'$.
\end{lem}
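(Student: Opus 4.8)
The plan is to mirror the two-case structure used for Vanilla and Compresschain, but to take particular care with the hash-reversal step, since that is the only place where the Byzantine adversary can interfere with liveness. In Algorithm~\ref{alg:setchain-hash} a valid element enters a correct server's $\TheSet$ at exactly two points: Line~\ref{alg:setchain-hash-setaddition} (a local client $\add$) and Line~\ref{alg:setchain-hash-othersetaddition} (while processing the batch of a freshly seen hash inside the $\ledgerobject.\NewBlock$ handler). So I would first split on how $e$ came to be in $v$'s $\TheSet$, with the common objective of showing that \emph{$v$ itself} eventually appends to the ledger a valid hash-batch $\langle h, s_v, v\rangle$ whose underlying batch contains $e$, and that $v$ stays able to serve that batch.

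For Case~1 ($e$ added at Line~\ref{alg:setchain-hash-setaddition}), $v$ also pushes $e$ into its collector via $\addepoch$ (Line~\ref{alg:setchain-hash-batchaddition}); as in the Compresschain argument, an $\mathtt{isReady}$ notification eventually fires (by collector size or timeout), so $v$ hashes the batch and appends $\langle h, s_v, v\rangle$ at Line~\ref{alg:setchain-hash-append}. For Case~2 ($e$ added at Line~\ref{alg:setchain-hash-othersetaddition}), that line sits inside the ``$h$ is new'' branch, so on the same pass $v$ has already appended its own hash-batch $\langle h, s_v, v\rangle$ at Line~\ref{alg:setchain-hash-otherappend} and registered the corresponding batch. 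In both cases the registered batch for $h$ contains the valid element $e$, Lemma~\ref{lem:hash-batch-original-retrieval} guarantees that $v$ remains available to serve it, and Property~\ref{prop:ledger-add-eventual-notify} guarantees that a block $B$ carrying $v$'s hash-batch is eventually notified to every correct server, in particular to $w$.

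The decisive step is the analysis of how $w$ processes $v$'s hash-batch. If $h$ is already known to $w$, then $w$ must have previously traversed the ``$h$ is new'' branch for $h$ and thus already added that batch's valid-and-new elements to its $\TheSet$; if $e$ was not among them, it is only because $e$ was already in $w$'s $\history$, in which case $e$ is in $w$'s $\TheSet$ anyway by Consistent-Sets (Lemma~\ref{lem:hash-consistent-sets}). If instead $h$ is new to $w$, then $w$ issues $\mathtt{Request\_batch}(h)$ (Line~\ref{alg:setchain-hash-requestoriginalbatch}); because the hash-batch being processed is signed by the \emph{correct} server $v$, the request is answered by $v$ (Lemma~\ref{lem:hash-batch-original-retrieval}), the check at Line~\ref{alg:setchain-hash-batchverify} passes, and $w$ reaches Line~\ref{alg:setchain-hash-othersetaddition}, adding $e$ to $\TheSet$ (or $e$ is already present, again by Lemma~\ref{lem:hash-consistent-sets}).

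The main obstacle is precisely this availability concern: a Byzantine server may publish a valid-looking hash-batch and then refuse to reveal its preimage, so one cannot argue that $w$ recovers the batch from an arbitrary signer. The fix, and the reason I route the whole argument through $v$'s hash-batch rather than through whatever server first advertised $h$, is that the correct origin $v$ always re-appends and registers its own hash-batch for $h$, giving $w$ a guaranteed honest source; Lemma~\ref{lem:hash-batch-original-retrieval} packages this availability so that the remaining reasoning is routine. Finally, since elements are never removed from $\TheSet$, once $e \in \TheSet$ at $w$ every later $\setobject.\get_w()$ returns a tuple $(T',H',h',P')$ with $e \in T'$, which is the claim.
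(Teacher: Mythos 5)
Your proposal is correct and follows essentially the same route as the paper's proof: the same two-case split on Lines~\ref{alg:setchain-hash-setaddition} and \ref{alg:setchain-hash-othersetaddition}, the same reduction to showing that $v$ itself appends (and registers) a hash-batch containing $e$, and the same appeal to Property~\ref{prop:ledger-add-eventual-notify} and Lemma~\ref{lem:hash-batch-original-retrieval} to get the batch to $w$. Your explicit treatment of the subcase where $w$ already knows $h$ (falling back on Lemma~\ref{lem:hash-consistent-sets} when $e$ is already in $w$'s $\history$) is a welcome refinement of a step the paper handles only with ``if it is not already present,'' but it does not change the argument's structure.
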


\begin{proof}
    
    Two places where an element is added to $\TheSet$ in Algorithm \ref{alg:setchain-hash} are in Lines \ref{alg:setchain-hash-setaddition} and \ref{alg:setchain-hash-othersetaddition}. 
    
    Let us consider the case where in server $v$, $e$ is added to $\TheSet$ in Line \ref{alg:setchain-hash-setaddition} when a client invoked $\setobject.\add_v(e)$. Then $v$ adds $e$ to $batch$ by invoking $\setobject.add\_to\_batch_v(e)$ (Line~\ref{alg:setchain-hash-batchaddition}). Once a batch is ready, $v$ invokes $\ledgerobject.\Append(hb)$, which appends a hash batch $hb$ to the ledger $\ledgerobject$, where $hb = \langle hs,sg,v \rangle$, $hs = \Hash(\batch)$, and $e \in batch$.

    On the other hand, if $e$ is added to $\TheSet$ at Line \ref{alg:setchain-hash-othersetaddition} for the first time in server $v$, then a block $B$ containing $hb$ was notified to $v$ by the ledger, and the original batch was recovered and is valid. 
    Since the hash is valid, $v$ signs it and generates a new hash batch as $hb = \langle hs,sg,v \rangle$ and appends $hb$ to the Ledger and $e \in batch\_original$.
    
    In either case, according to Property \ref{prop:ledger-add-eventual-notify}, eventually a block $B$ containing $hb$ is notified to all correct \setchain~servers, including $w$.  When $w$ encounters the hash batch with the hash $hs$ as a part of block $B$, eventually $w$ could retrieve the original batch from $v$ as proved in Lemma \ref{lem:hash-batch-original-retrieval}, if it doesn't have it already. After verifying the original batch corresponding to $hb$, $w$ would add the valid element $e$ to $\TheSet$ if it is not already present (Line \ref{alg:setchain-hash-othersetaddition}). So, eventually all future invocations of $\setobject.\get_w()$ returns $(\TheSet, \history, epoch, \proofs)$ and $e \in \TheSet$.

\end{proof}

Algorithm \ref{alg:setchain-hash} guarantees Property \ref{prop:eventual-get}.

\begin{lem}\label{lem:hash-eventual-get}
    Let $v$ be a correct server, let $e$ be a valid element and let $(T,H,h,P)=\setobject.\get_v()$. If $e \in T$, then eventually all invocations $(T',H',h',P')=\setobject.\get_v()$ satisfy that $e \in H'$.
\end{lem}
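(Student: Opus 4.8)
The plan is to follow the same two-case skeleton as in Lemma~\ref{lem:compress-eventual-get}, but to route the argument through the epoch-consolidation machinery specific to Hashchain. The element $e$ can enter $\TheSet$ of the correct server $v$ in only two places: Line~\ref{alg:setchain-hash-setaddition}, when a client invokes $\setobject.\add_v(e)$, or Line~\ref{alg:setchain-hash-othersetaddition}, when $v$ processes a freshly seen hash-batch while handling a new block. I would first establish a single intermediate claim covering both cases: in each, $v$ eventually appends to the ledger a valid hash-batch $hb=\langle h,s_v,v\rangle$ such that $e$ belongs to the batch whose hash is $h$. In the first case this is immediate from Lines~\ref{alg:setchain-hash-batchaddition} and~\ref{alg:setchain-hash-append} (the collector eventually signals $\mathtt{isReady}$, $h=\Hash(\batch)$ is computed, and $hb$ is appended); in the second case $e$ is added in Line~\ref{alg:setchain-hash-othersetaddition} precisely inside the branch that also appends $v$'s own hash-batch in Line~\ref{alg:setchain-hash-otherappend}, after $\batchOriginal$ has been retrieved and verified against $h$.

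With this intermediate claim in hand, the core of the argument is to drive $h$ to consolidation. Here I would invoke Lemma~\ref{lem:hash-epoch-consolidation-liveness}: since $v$ is correct and appended the valid hash-batch $hb$, eventually $|\hashToSigners[h]|\ge f+1$ at $v$. At the moment this counter reaches $f+1$ (the test in Line~\ref{alg:setchain-hash-signerscheck}), $v$ increments $\epoch$ and recomputes $G$ as the valid elements of $\batchOriginal$ not already in $\history$ (Line~\ref{alg:setchain-hash-othervalidelement}), then sets $\history[\epoch]\leftarrow G$ (Line~\ref{alg:setchain-hash-epochaddition}). Because $e$ is valid and lies in the batch with hash $h$, a short case split finishes the job: either $e$ was already placed in $\history$ by an earlier consolidation, or $e\notin\history$ at this point and hence $e\in G$; in both cases $e\in\history$ afterwards. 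Since entries of $\history$ are never removed, every later $\setobject.\get_v()$ returns a tuple $(T',H',h',P')$ with $e\in H'$, which is the claim.

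The main obstacle, and the point where this proof genuinely departs from the Compresschain version, is the liveness of consolidation: unlike Compresschain, where a decompressed batch becomes an epoch the instant it is read from a block, Hashchain forms an epoch only once a hash has been signed by $f+1$ distinct servers. I therefore have to be careful that the hash $h$ for which $v$ appended $hb$ is the very same hash that later accumulates $f+1$ signers, so that the consolidation branch operates on a $\batchOriginal$ actually containing $e$. This is exactly what Lemma~\ref{lem:hash-epoch-consolidation-liveness} guarantees (relying in turn on Property~\ref{prop:ledger-add-eventual-notify} for notification and on Lemma~\ref{lem:hash-batch-original-retrieval} so that each of the $\ge f+1$ correct servers can retrieve and sign the batch). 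The remaining work is then routine: thread the identity of $h$ through the two cases and observe that $\batchOriginal$ at consolidation time is the stored batch for $h$, hence contains $e$.
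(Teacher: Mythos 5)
Your proof is correct and follows essentially the same route as the paper's: the same two-case split on where $e$ enters $\TheSet$ (Line~\ref{alg:setchain-hash-setaddition} vs.\ Line~\ref{alg:setchain-hash-othersetaddition}), the same reduction to ``$v$ appends a hash-batch for a batch containing $e$,'' and the same appeal to Lemma~\ref{lem:hash-epoch-consolidation-liveness} to reach consolidation and placement of $e$ into $\history$. Your explicit handling of the subcase where $e$ may already have been consolidated into an earlier epoch is a minor refinement of the paper's ``if not already present'' remark, not a different argument.
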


\begin{proof}
    
    Two places where an element is added to $\TheSet$ in Algorithm \ref{alg:setchain-hash} are in Lines \ref{alg:setchain-hash-setaddition} and \ref{alg:setchain-hash-othersetaddition}. 
    
    Let us consider the case where in server $v$, $e$ is added to $\TheSet$ in Line \ref{alg:setchain-hash-setaddition} when a client invoked $\setobject.\add_v(e)$. Then $v$ adds $e$ to $batch$ by invoking $\setobject.add\_to\_batch_v(e)$ (Line~\ref{alg:setchain-hash-batchaddition}). Once a batch is ready, $v$ invokes $\ledgerobject.\Append(hb)$, which appends a hash batch $hb$ to the Ledger, where $hb = \langle hs,sg,v \rangle$, $hs = \Hash(\batch)$, and $e \in batch$. 

    On the other hand, if $e$ is added to $\TheSet$ at Line \ref{alg:setchain-hash-othersetaddition} for the first time in server $v$, then a block $B$ containing $hb'= \langle hs,sg',w \rangle$ was notified to $v$ by the ledger, and the original batch was recovered and is valid. 
    Since the hash is valid, then $v$ signs the hash $hs$ and generates a new hash batch for the hash $hs$ as $hb = \langle hs,sg,v \rangle$ and appends $hb$ to the Ledger. Also, it adds $e$ to $\TheSet$. 

    According to Lemma \ref{lem:hash-epoch-consolidation-liveness}, eventually  $|\hashToSigners[hs]| \geq f+1$ in $v$. When $|\hashToSigners[hs]| == f+1$, the batch consolidates and it is assigned an epoch number (Line \ref{alg:setchain-hash-epochinc}). Then, the valid elements from the batch, including $e$, are added to $\history$ as an epoch (Line \ref{alg:setchain-hash-epochaddition}), if not already present. Eventually all invocations $(T', H', h', P')=\setobject.\get_v()$ satisfy that $e \in H'$.
     
\end{proof}

Algorithm \ref{alg:setchain-hash} guarantees Property \ref{prop:unique-epoch}.

\begin{lem}\label{lem:hash-unique-epoch}
    Let $v$ be a correct server, $(T,H,h,P)=\setobject.\get_v()$, and let $i,i' \in \{1,\ldots,h\}$ with $i \neq i'$. Then, $H[i] \cap H[i'] = \emptyset$.
\end{lem}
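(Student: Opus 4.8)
The plan is to follow the same contradiction argument used for \ref{alg:setchain-vanilla} (Lemma~\ref{lem:vanilla-unique-epoch}) and \ref{alg:setchain-compress} (Lemma~\ref{lem:compress-unique-epoch}), since the way a consolidated batch becomes an epoch in \ref{alg:setchain-hash} has the same essential shape. I would begin by assuming, for contradiction, that there is a valid element $e$ with $e \in \history[i]$ and $e \in \history[i']$ for some $i \neq i'$, and without loss of generality take $1 \le i < i' \le \epoch$.

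The two facts I would establish are: (i) entries of $\history$ are populated in strictly increasing epoch order, and (ii) the set $G$ recorded for a new epoch never contains elements already present in $\history$. For (i), I would observe that $\epoch$ is initialized to $0$ and is modified only at Line~\ref{alg:setchain-hash-epochinc}, where it is incremented by one, and is never decreased; hence each assignment $\history[\epoch] \leftarrow G$ at Line~\ref{alg:setchain-hash-epochaddition} uses a fresh epoch index strictly larger than all previously used ones. For (ii), I would point to the construction of $G$ at Line~\ref{alg:setchain-hash-othervalidelement}, where $G$ is defined to consist only of elements $e$ satisfying $e \notin \history$.

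Combining these, at the moment server $v$ performs $\history[i'] \leftarrow G$ with $e \in G$, epoch $i < i'$ has already been consolidated and recorded, so $e \in \history$ holds at that time. But then $e$ would have been excluded by the filter $e \notin \history$ at Line~\ref{alg:setchain-hash-othervalidelement}, so $e \notin G = \history[i']$, contradicting $e \in \history[i']$. This closes the argument.

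I expect the only delicate point, relative to the earlier two algorithms, to be the bookkeeping particular to \ref{alg:setchain-hash}: I must argue that the set $G$ relevant for the $\history$ assignment is precisely the one built at Line~\ref{alg:setchain-hash-othervalidelement} inside the $|\hashToSigners[h]| = f+1$ branch (Line~\ref{alg:setchain-hash-signerscheck}), and not the earlier $G$ computed only for the $\TheSet$ update, and that the increment at Line~\ref{alg:setchain-hash-epochinc} together with the assignment at Line~\ref{alg:setchain-hash-epochaddition} occur within a single consolidation step so that the index used for $\history$ is unambiguous. Once this is pinned down, facts (i) and (ii) yield the contradiction exactly as in the previous cases.
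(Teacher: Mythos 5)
Your proposal is correct and follows essentially the same argument as the paper's proof: a contradiction from $e \in \history[i] \cap \history[i']$ with $i < i'$, using the monotone increment of $\epoch$ at Line~\ref{alg:setchain-hash-epochinc} and the filter $e \notin \history$ in the construction of $G$ at Line~\ref{alg:setchain-hash-othervalidelement}. The ``delicate point'' you flag --- that the $G$ assigned to $\history[\epoch]$ is the one recomputed inside the $|\hashToSigners[h]| = f+1$ branch --- is indeed the right one to pin down, and the paper's proof relies on exactly that line.
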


\begin{proof}
    
    By way of contradiction, let us assume that for some valid element $e$ it holds that $e \in \history[i]$ and $e \in \history[i']$. Without loss of generality, let us assume that $1 \leq i < i' \leq \epoch$. 
    
    So, for $e$ to be included in the epoch $i'$ as a part of $G$ (Line \ref{alg:setchain-hash-epochaddition}), 
    $G$ must have been processed through Line \ref{alg:setchain-hash-othervalidelement}, which allows only elements that are not already present in $\history$ to pass through. We know that $i<i'$ and $epoch$ is only increased (Line \ref{alg:setchain-hash-epochinc}) and never decreased. So, if epoch $i$ was processed first and, as a result, $e \in \history[i]$, then the attempt to add $e$ to $\history[i']$ would not clear the conditions in Line \ref{alg:setchain-hash-othervalidelement}. So, this contradicts our assumption that $e \in \history[i']$. Then, we have $\history[i] \cap \history[i'] = \emptyset$.
\end{proof}

Algorithm \ref{alg:setchain-hash} guarantees Property \ref{prop:consistent-gets}.

\begin{lem}\label{lem:hash-consistent-gets}
    Let $v,w$ be correct servers, let $(T,H,h,P)=\setobject.\get_v()$ and $(T',H',h',P')\\=\setobject.\get_w()$, and let $i \in \{1,\ldots,\min(h,h')\}$. Then $H[i]=H'[i]$.
\end{lem}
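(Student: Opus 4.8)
The plan is to mirror the inductive argument used for Lemmas~\ref{lem:vanilla-consistent-gets} and~\ref{lem:compress-consistent-gets}, proving $H[n]=H'[n]$ by induction on the epoch number $n$, with base case $H[0]=\emptyset=H'[0]$, inductive hypothesis $H[i]=H'[i]$ for all $i<n$, and goal $H[n]\subseteq H'[n]$ (the reverse inclusion being symmetric). The essential difference from the two earlier algorithms is that in Hashchain an epoch is not in one-to-one correspondence with a ledger block or transaction: an epoch is consolidated (Line~\ref{alg:setchain-hash-signerscheck}) only at the moment a hash $h$ accumulates its $(f{+}1)$-th distinct signer in $\hashToSigners[h]$. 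Hence the first task is to show that all correct servers consolidate epochs in the same order and from the same batch.

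First I would argue that the sequence of consolidation events is a deterministic function of the ledger. By Property~\ref{prop:ledger-consistent-notification} every correct server is notified of the same blocks in the same order, and the transactions inside each block are totally ordered; therefore every correct server processes exactly the same sequence of hash-batch transactions. I would then track, for a fixed hash $h$, how $\hashToSigners[h]$ is updated (Line~\ref{alg:hash-consol}) along this sequence, and show that the transaction position at which $|\hashToSigners[h]|$ first reaches $f{+}1$ is the same for all correct servers, so that the relative order in which distinct hashes cross the $f{+}1$ threshold agrees. Since the consolidation check fires exactly once per hash (when the count goes from $f$ to $f{+}1$), this yields a common sequence $h_1,h_2,\dots$ with $h_n$ the hash consolidated as the $n$-th epoch by every correct server.

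Given this $h_n$, I would then establish $H[n]=H'[n]$ directly. Both $v$ and $w$ compute the epoch content as $G=\{e\in\batchOriginal : e\text{ is an element}\wedge\validelement(e)\wedge e\notin\history\}$ (Line~\ref{alg:setchain-hash-othervalidelement}) from their local copy of the batch for $h_n$. Using Lemma~\ref{lem:hash-batch-original-retrieval} (a correct signer always serves its batch), the explicit check $\Hash(\batchOriginal)=h$ on Line~\ref{alg:setchain-hash-batchverify}, and the collision resistance of $\Hash$, I would conclude that the $\batchOriginal$ used by $v$ and $w$ when consolidating $h_n$ is the unique preimage of $h_n$, hence the same set for both. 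Because $\validelement$ is deterministic and, by the inductive hypothesis, $v$ and $w$ hold the identical history $\history[1..n-1]$ at the point of consolidating epoch $n$, the filter defining $G$ returns the same set, giving $H[n]=H'[n]$.

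The hard part will be making the second paragraph rigorous: proving that the $(f{+}1)$-th-signature events really occur in the same order for all correct servers. Unlike the clean block-to-epoch correspondence of Vanilla and Compresschain, the asynchronous hash-reversal means a correct server only begins crediting signers of $h$ from the first hash-batch for $h$ whose preimage it can actually retrieve, and a Byzantine signer may reveal its batch to some correct servers but not others, so two correct servers need not record exactly the same prefix of signers in $\hashToSigners[h]$. I would therefore introduce an auxiliary invariant characterizing precisely which signers each correct server accumulates, leaning on the fact that from the first \emph{correct} signer's hash-batch onward every correct server can obtain the batch (Lemma~\ref{lem:hash-batch-original-retrieval}) and thereafter counts identically, together with the liveness guarantee of Lemma~\ref{lem:hash-epoch-consolidation-liveness}; the delicate point to nail down is that any discrepancy confined to Byzantine signers cannot shift the $(f{+}1)$-th-signature position across correct servers in a way that reorders two distinct hashes' consolidations.
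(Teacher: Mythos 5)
Your proposal follows essentially the same route as the paper's proof: induction on the epoch number with base case $H[0]=\emptyset=H'[0]$, and an inductive step that rests on (a) both servers consolidating the $n$-th epoch from the same batch, recovered via Lemma~\ref{lem:hash-batch-original-retrieval} and the check $\Hash(\batchOriginal)=h$ of Line~\ref{alg:setchain-hash-batchverify}, and (b) the filter of Line~\ref{alg:setchain-hash-othervalidelement} yielding the same set $G$ because $\validelement$ is deterministic and the histories up to $n-1$ coincide by the inductive hypothesis. The interesting difference is what you single out as ``the hard part'': the paper's proof dispatches it in one sentence, asserting via Properties~\ref{prop:ledger-add-eventual-notify} and~\ref{prop:ledger-consistent-notification} that the hash-batch which was the $(f{+}1)$-th signature at $v$ ``will be the $(f{+}1)$-th signature $w$ receives'' for that hash. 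Those properties give a common transaction order but say nothing about which hash-batches each correct server succeeds in reversing; since a failed $\mathtt{Request\_batch}$ causes a \textbf{continue} before Line~\ref{alg:hash-consol}, a Byzantine signer who serves its batch to $v$ but not to $w$ makes $\hashToSigners[h]$ differ between them, and the position at which the count crosses $f{+}1$ (Line~\ref{alg:setchain-hash-signerscheck}) can shift. So the auxiliary invariant you propose is not over-caution: it is exactly the step the paper's proof leaves implicit, and without it the argument (yours and the paper's alike) does not exclude the reordering scenario you describe. Since you leave that invariant unproved, your write-up is no more complete than the paper's on this point, but it is the same argument with the genuine gap made explicit rather than elided.
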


\begin{proof}

    The proof proceeds by induction in the epoch number n.
    
    \begin{itemize}
    \item Base case, $n = 0.$ $ H[0] = \emptyset = H'[0].$
    
    \item  Inductive step: we will show that $H[n] = H'[n]$ assuming that $\forall i < n, H[i] = H'[i].$
    \end{itemize}
    
    First, we show that $H[n] \subseteq H'[n]$.
    
    Let $e \in H[n]$. 
    Then $e$ that was added by $v$ to $H[n]$ at Line \ref{alg:setchain-hash-epochaddition} when $v$ received the $f+1$th signature for the hash $hs$ as $hb=<hs,sg,z>$, such that  $v.hash\_to\_signers[hs]=batch\_original\_hs$, $e \in batch\_original\_hs$, $\Hash(\batchOriginal) == hs$ and $|\hashToSigners[hs]| == f+1$.
    The fact that $v$ added element $e$ to $H[n]$ implies that (1) $e$ is a valid element and (2) $e \notin H[0..n-1].$
    Then, by Property~\ref{prop:ledger-add-eventual-notify}, $w$ will eventually receive a ledger block containing $hb=<hs,sg,z>$ and by Properties~\ref{prop:ledger-add-eventual-notify} and \ref{prop:ledger-consistent-notification}, it will be the $f+1$th signature $w$ receives for hash $hs$. 
    Since $w$ has received $f+1$ signatures of $hs$, at least one of the signers is correct, and by Lemma \ref{lem:hash-batch-original-retrieval}, a correct server $z$ was available to share $batch\_original\_hs$ with $w$ when $w$ requested it at Line \ref{alg:setchain-hash-requestoriginalbatch} (if $w$ already doesn't have $batch\_original$) and $e \in batch\_original$.
    %
    %
    By inductive hypothesis and (2), we conclude that $e$ is not in $H'[0..n-1]$. 
    Since $e$ is a valid element that is not in $H'[0..n-1]$, it is added to $H'[n]$ at Line \ref{alg:setchain-hash-epochaddition} by $w$.
    The proof that $H'[n] \subseteq H[n]$ is analogous.   
    
\end{proof}

Algorithm \ref{alg:setchain-hash} guarantees Property \ref{prop:add-before-get}.

\begin{lem}\label{lem:hash-add-before-get}
    Let $v$ be a correct server, $e$ be a valid element, $(T,H,h,P)$ $=\setobject.\get_v()$, and $e \in T$. Then, there was an operation $\setobject.\add_w(e)$ invoked in the past in some server $w$.
\end{lem}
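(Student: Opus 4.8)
The plan is to follow the same skeleton as Lemmas~\ref{lem:vanilla-add-before-get} and \ref{lem:compress-add-before-get}: perform a case analysis on the two places where an element may enter $\TheSet$ at the correct server $v$, namely Line~\ref{alg:setchain-hash-setaddition} inside \add and Line~\ref{alg:setchain-hash-othersetaddition} inside the $\ledgerobject.\NewBlock(B)$ handler. The first case is immediate: if $e$ was added in Line~\ref{alg:setchain-hash-setaddition}, this happened exactly while $v$ was processing a client request $\setobject.\add_v(e)$, so the required operation (with $w=v$) was invoked in the past.

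The second case is where Hashchain departs from the earlier proofs, and this is the main obstacle. In Vanilla and Compresschain one applies Property~\ref{prop:notification-implies-append} directly, because the notified block $B$ literally carries the element $e$ (respectively, a compressed batch containing $e$). In Hashchain, however, $B$ carries only a hash-batch $\langle h, s_w, w\rangle$, which is a commitment to a batch rather than the batch itself, so Property~\ref{prop:notification-implies-append} cannot be invoked on $e$. I would therefore first establish from the code that, when $v$ executes Line~\ref{alg:setchain-hash-othersetaddition}, the element $e$ lies in a batch $\batchOriginal$ that $v$ has obtained---either from its local map $\hashToBatch[h]$ (Line~\ref{alg:setchain-hash-retrieveoriginalbatchlocal}) or by requesting it with $\mathtt{Request\_batch}(h)$ (Line~\ref{alg:setchain-hash-requestoriginalbatch})---and whose hash $v$ has verified to equal $h$ (Line~\ref{alg:setchain-hash-batchverify}), with $e$ a valid element of $\batchOriginal$.

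To close the second case I would then appeal to the system-model assumptions recalled in Section~\ref{subsec:systemmodel}: a server cannot create a valid element by itself, and clients and servers do not collude. Consequently, the valid element $e$ cannot have been fabricated by any server and inserted into the batch on its own; the only way a valid $e$ can appear in any server's batch is as a consequence of a client having invoked $\setobject.\add_w(e)$ at some server $w$. Crucially, this conclusion is independent of whether the batch $\batchOriginal$ was produced by a correct or a Byzantine server, since the argument rests solely on the unforgeability of valid elements and the absence of collusion, not on the honesty of the batch's creator (which is exactly why the hashing indirection does not break the property). Combining both cases yields the claim.
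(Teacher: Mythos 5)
Your proof is correct and follows essentially the same route as the paper's: the same two-case split on Lines~\ref{alg:setchain-hash-setaddition} and \ref{alg:setchain-hash-othersetaddition}, followed by the same appeal to the assumptions that a server cannot fabricate a valid element and that clients and servers do not collude. The only cosmetic difference is that the paper does still invoke Property~\ref{prop:notification-implies-append} in the second case---applied to the hash-batch $hb$ rather than to $e$ itself---to obtain that some server appended $hb$, before concluding exactly as you do.
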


\begin{proof}
    
    Elements are added to $\TheSet$ in the Lines~\ref{alg:setchain-hash-setaddition} and \ref{alg:setchain-hash-othersetaddition}. First, let us consider the case where $e$ was added by server $v$ to $\TheSet$ in Line~\ref{alg:setchain-hash-setaddition}. Then, it was added when operation $\setobject.\add_v(e)$ was being processed.
    
    On the other hand, if $e$ was added by server $v$ to $\TheSet$ in Line~\ref{alg:setchain-hash-othersetaddition}, then a ledger block $B$ containing a hash batch $hb$ was notified to $v$ by the ledger $\ledgerobject$, where $hb = \langle hs,sg,v \rangle$, $hs = \Hash(\batch)$, and $e \in batch$. From Property~\ref{prop:notification-implies-append}, some server $w$ invoked $\ledgerobject.\Append_w(hb)$.
    
    Recall that, as mentioned in Section~\ref{subsec:systemmodel}, we assume that a server cannot create a valid element by itself and do not collude with clients. So, a server $w$ cannot append a valid element $e$ as a part of the hash batch $hb$ with $\ledgerobject.\Append_w(cb)$ without a client invocation $\setobject.\add_w(e)$.
\end{proof}

Algorithm \ref{alg:setchain-hash} guarantees Property \ref{prop:valid-epoch}.

\begin{lem}\label{lem:hash-valid-epoch}
    Let $v$ be a correct server, $(T,H,h,P)=\setobject.\get_v()$, and $i \in \{1, \ldots, h\}$. Then eventually all invocations $(T',H',h',P')=\setobject.\get_v()$ satisfy that $P'$ contains at least $f+1$ epoch proofs of $H[i]$.
\end{lem}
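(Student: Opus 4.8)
To establish Property~\ref{prop:valid-epoch} for~\ref{alg:setchain-hash}, the plan is to reuse the skeleton of Lemma~\ref{lem:compress-valid-epoch}, but to account for the fact that in~\ref{alg:setchain-hash} an epoch-proof is \emph{not} appended directly to the ledger: it is placed in the local \batch\ via \addepoch\ (Line~\ref{alg:setchain-hash:epochproof}) and is only disseminated after that batch is turned into a hash-batch and its contents are recovered through the hash-reversal mechanism. Consequently, the Hashchain-specific liveness lemmas (Lemmas~\ref{lem:hash-batch-original-retrieval} and~\ref{lem:hash-epoch-consolidation-liveness}) will do the work that Property~\ref{prop:ledger-add-eventual-notify} alone did in the simpler algorithms.

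First I would fix the content of the target epoch. Since $v$ is correct and $i \in \{1,\dots,h\}$, epoch $i$ has been consolidated at $v$ with $\history_v[i]=G$ for some set $G$, and by Lemma~\ref{lem:hash-consistent-gets} every correct server that reaches epoch $i$ agrees on exactly this $G$. Then, combining the consolidation-liveness argument (Lemma~\ref{lem:hash-epoch-consolidation-liveness}) with Properties~\ref{prop:ledger-add-eventual-notify} and~\ref{prop:ledger-consistent-notification}, every correct server $w$ eventually observes $f+1$ signatures on the batch that becomes epoch $i$, executes Line~\ref{alg:setchain-hash:epochproof}, and thereby produces its own epoch-proof $\langle i, p_w, w\rangle$ with $p_w=\Sign_w(\Hash(i,G))$, inserting it into its own \batch.

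Next I would push each such proof onto the ledger and into $v$'s \proofs. Under the standing assumption that a future epoch increment (or a timeout) always arrives, the \batch\ carrying $\langle i, p_w, w\rangle$ eventually triggers \texttt{isReady}, is hashed and recorded locally (Line~\ref{alg:setchain-hash-hashtobatch}), and is appended as a hash-batch (Line~\ref{alg:setchain-hash-append}); by Lemma~\ref{lem:hash-batch-original-retrieval}, $w$ stays able to serve that batch on request. By Properties~\ref{prop:ledger-add-eventual-notify} and~\ref{prop:ledger-consistent-notification}, $v$ is notified of this hash-batch, and for any hash new to $v$ it recovers the original batch (available by Lemma~\ref{lem:hash-batch-original-retrieval}). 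Since $\history_v[i]=G$, the predicate $\validproof(i,p_w,w,\history_v[i])$ holds, so $v$ inserts $\langle i,p_w,w\rangle$ into \proofs\ at Line~\ref{alg:setchain-hash:validepochproof}, and these proofs are never removed.

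The step I expect to be the main obstacle is the final counting. Through Line~\ref{alg:setchain-hash:validepochproof}, $v$ collects the proofs of all correct servers \emph{other than} $v$ (at least $f$ of them), but it does not re-ingest its own proof: the hash-batch carrying $\langle i,p_v,v\rangle$ is already present in \hashToBatch\ when $v$ created it (Line~\ref{alg:setchain-hash-hashtobatch}), so $v$ never enters the ``new hash'' branch for that hash-batch and never reaches Line~\ref{alg:setchain-hash:validepochproof} for its own contribution. In the boundary case $n=2f+1$ there are exactly $f$ other correct servers, so the argument must explicitly count $v$'s own self-generated proof---whose validity $v$ knows directly, having signed $\Hash(i,G)$ itself---together with the $\ge f$ proofs gathered from the others to reach the required $f+1$. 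This is precisely where~\ref{alg:setchain-hash} differs from~\ref{alg:setchain-vanilla} and~\ref{alg:setchain-compress}, in which every proof, including the local server's, is appended directly to the ledger and re-collected uniformly; making this accounting rigorous (or, alternatively, arguing that $v$ adds its own proof to \proofs\ at consolidation time) is the crux of the proof.
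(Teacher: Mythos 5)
Your overall route is the same as the paper's: fix the epoch content $G$ via Lemma~\ref{lem:hash-consistent-gets}, use Lemma~\ref{lem:hash-epoch-consolidation-liveness} together with Properties~\ref{prop:ledger-add-eventual-notify} and~\ref{prop:ledger-consistent-notification} to argue that every correct server consolidates epoch $i$ and emits an epoch-proof into its \batch, and then use Lemma~\ref{lem:hash-batch-original-retrieval} plus the ledger properties to get those proofs into $v$'s \proofs at Line~\ref{alg:setchain-hash:validepochproof}. Up to that point you and the paper agree step for step.

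The counting obstacle you flag at the end is real, and it is the paper's proof, not yours, that glosses over it: the paper simply asserts that eventually $v$ ``will receive and add to $\proofs$ at least $f+1$ epoch proofs.'' As you observe, Line~\ref{alg:setchain-hash:validepochproof} sits inside the branch guarded by $\batchOriginal=\emptyset$, so it only fires for hashes that are \emph{new} to $v$; the batch carrying $v$'s own proof $\langle i,p_v,v\rangle$ was stored in \hashToBatch at Line~\ref{alg:setchain-hash-hashtobatch} before its hash-batch was appended, so that branch is never taken for it (nor for other servers' re-signatures of that same hash), and $v$'s own proof never enters $v$'s \proofs. With $n=2f+1$ and all $f$ Byzantine servers silent, $v$ accumulates only the $f$ proofs of the other correct servers. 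However, your proposed repair --- ``explicitly count $v$'s own self-generated proof'' --- does not prove the lemma as stated, because the lemma quantifies over the set $P'$ returned by $\setobject.\get_v()$, which is exactly the \proofs variable, and the algorithm never writes $\langle i,p_v,v\rangle$ into it: Line~\ref{alg:setchain-hash:epochproof} calls \addepoch, which writes only to \batch. The clean resolutions are either an algorithmic tweak (have Line~\ref{alg:setchain-hash:epochproof} also insert the proof into \proofs, or move the proof extraction of Line~\ref{alg:setchain-hash:validepochproof} outside the ``new hash'' branch) or the stronger assumption $n\geq 2f+2$. You were right to single this out as the crux; the paper's proof does not address it.
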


\begin{proof}

    Let $E=\history[i]$ be an epoch with epoch number $i$ added to $\history$ by $v$ at Line~\ref{alg:setchain-hash-epochaddition}. Then $v$ received the $f+1$th signature for the hash $hs$ corresponding to the epoch $i$ as $hb=<hs,sg,z>$, such that  $v.hash\_to\_signers[hs]=batch\_original$, $e \in batch\_original$, $\Hash(\batchOriginal) == hs$ and \\ $|\hashToSigners[hs]| == f+1$. From Properties~\ref{prop:ledger-add-eventual-notify} and \ref{prop:ledger-consistent-notification}, ledger $\ledgerobject$ notifies the same set of blocks in the same order to all servers and the transactions inside a ledger block are ordered as well. So, all correct \setchain servers $w$, eventually receive enough ledger blocks to reach the consolidation of epoch $i$. 

    Once the epoch is consolidated, each correct server $w$ generates the proof for the epoch $E$ as $<i,p_w,w>$ and append it to the $batch$. Eventually, \textbf{upon}$((\mathtt{isReady}(\batch))$ is signalled and $w$ hashes $batch$, signs it and appends it to the ledger as $ohb=<ohs,osg,w>$. By Property \ref{prop:ledger-add-eventual-notify}, eventually $v$ receives a block $B$ containing $ohb$ and by Lemma \ref{lem:hash-batch-original-retrieval}, $v$ can retrieve the $batch$ from $w$, which contains the epoch proof $<i,p_w,w>$. $v$ adds the proof of epoch $E$ to $proofs$ in Line \ref{alg:setchain-hash:validepochproof}. Since we assume a system with $n$ servers, where at most $f<n/2$ are not correct, eventually $v$ will receive and add to $proofs$ at least $f+1$ epoch proofs for epoch $E$, and $proofs$ will be appended to the ledger as a part of the hash batches. Hence, eventually, in all invocations $(T', H', h', P')=\setobject.\get_v()$, it will hold that $P'$ will contain at least $f+1$ epoch proofs of $E$.
\end{proof}
\section{Analytical Performance Study.}
\label{sec:analysis}
%
%
In this section, we present a brief analysis of the stationary throughput that each of these algorithms can achieve.
Let us 
assume in this analysis that all the $n$ system servers are correct. Then, each epoch will have $n$ epoch-proofs, appended to the ledger $\ledgerobject$ in the case of Algorithm~\ref{alg:setchain-vanilla} and sent to the collector to be added to the \batch in the case of \ref{alg:setchain-compress} and~\ref{alg:setchain-hash}. 
Let us assume epoch-proofs have length $l_p$, the elements added by the clients have length $l_e$, and the ledger $\ledgerobject$ has blocks of capacity $C$. With Algorithm~\ref{alg:setchain-vanilla} the valid elements in each ledger block form an epoch. Then, in the steady state, each block will contain $n$ epoch proofs and up to $(C-n \cdot l_p)/l_e$ elements\footnote{For simplicity we will omit floors and ceilings in this analysis.}. If ledger $\ledgerobject$ creates blocks at a rate $R$ (in blocks/second), \ref{alg:setchain-vanilla} can reach a throughput of $T_v=R(C-n \cdot l_p)/l_e$ elements/second.
Let us now consider \ref{alg:setchain-compress} with collector size $c$. In the steady state, for each epoch $n$ epoch proofs are generated. This means that, on average, there are $n$ epoch proofs in each batch. Let us assume the compression algorithm used has a compression ratio $r$. Then, the length in the ledger of an epoch created from a full collector is $\ell=((c-n) \cdot l_e + n \cdot l_p)/{r}$. Then, in the steady state, each ledger block will contain up to $(c-n) \cdot C/ \ell$ valid elements, and \ref{alg:setchain-compress}
can reach a throughput of 
$T_c=\frac{R \cdot (c-n) \cdot C}{\ell}=\frac{R \cdot (c-n) \cdot C}{((c-n) \cdot l_e + n \cdot l_p)/{r}}$ el/s.
Finally, let us consider \ref{alg:setchain-hash} with collector size $c$, and let $l_h$ be the length of a hash-batch. Let us assume that the bottleneck of Algorithm~\ref{alg:setchain-hash} is appending to the ledger $\ledgerobject$. Observe that $n$ hash-batches are appended for each epoch consolidated. Then, in the steady state, \ref{alg:setchain-hash} can reach a throughput of $T_h=R \cdot (c-n) \cdot C/ (n \cdot l_h)$ elements/second.

\subsection{Analysis} 
\label{sec:results-analysis}

We can estimate the achievable throughput for each algorithm using the analysis of Section~\ref{sec:analysis} with the parameters of the evaluation scenario described in Section~\ref{sec:implementation}. These parameters for Algorithm \ref{alg:setchain-vanilla} are, in the average case, $n=10$, $C=500,000$ bytes, $l_e=438$ bytes, $l_p=139$ bytes, and $R=0.8$ block/s. Hence,
$T_v
\approx 955$ 
el/s. 
For Algorithm \ref{alg:setchain-compress} we found that for collector size $c=100$ the compression ratio is roughly $r=2.7$. Hence,  we have
$T_c[c=100] \approx
2,497$ el/s. For collector size $c=500$ the compression ratio is roughly $r=3.5$, and we have 
$T_c[c=500] \approx 3,330$ el/s.
Finally, for Algorithm \ref{alg:setchain-hash}, using that the hash-batch has length $l_h=139$ bytes, for collector size 
$c=100$ we have
$T_h[c=100] \approx
27,157$ el/s, while for collector size $c=500$ we have 
$T_h[c=500] \approx 147,857$ el/s.

Observe that $T_h[c=500]/T_v \approx 155$ and
$T_h[c=500]/T_c[c=500] \approx 44$. Hence, with Hashchain we expect the throughput to increase significantly.
%

\section{Cometbft}\label{sec:cometbft}
CometBFT \cite{tendermint.design} (previously known as Tendermint) is a Byzantine-tolerant state machine replication engine. CometBFT is a blockchain middleware that supports replicating arbitrary applications, written in any programming language.
Two fundamental components of CometBFT are a blockchain \textit{consensus engine} and a
generic \textit{application interface}.
The consensus engine is called \emph{TendermintCore}~\cite{Buchman.2018.Tendermint} and ensures that every validator (server) agrees on the same sequence of blocks with the same set of transactions in the same order.
The application interface called \emph{Application BlockChain Interface (ABCI)}, bridges the consensus engine and the application.
We write the code for the \setchain algorithms in the ABCI section of the ledger.

\subsection{Mapping the Block-based Ledger to CometBFT.}
%
The block-based ledger has two endpoints: an \APPEND function and a \NEWBLOCK notification. Here we define how these endpoints map to CometBFT.

\begin{itemize}
    \item \APPEND: In our algorithms, when a client appends a transaction, it uses the function \BroadcastTxAsync to send the transaction to a CometBFT ledger server.
    This function sends the transaction to the server without waiting for a reply. 
    The server stores the transaction in the mempool and checks if the transaction is valid before sharing it with the other servers using a gossip protocol.

    \item \NEWBLOCK: CometBFT's application interface ABCI has a function \FinalizeBlock. When the CometBFT validators agree upon the proposed block and finalize the order of transactions, the block is sent to all the CometBFT servers for the application layer to process the block. This allows the application to generate additional data (e.g., events, updates to the application's state, etc.) after the block has been finalized. So, the process done in the \setchain algorithms upon \NEWBLOCK notification is done as a part of CometBFT's \FinalizeBlock application logic.
\end{itemize}
\begin{table}[t!]
    \centering
    \caption{Parameters for \setchain evaluation}
    \label{tab:setchain_param}
    \begin{tabular}{|c|c|c|}
        \hline
        \textbf{Name} & \textbf{Description}& \textbf{Values}\\
        \hline 
        \hline
        $sending\_rate$ & Adding rate (el/s)
        & $10000$, $5000$, $1000$, $500$ \\
         \hline
         $collector\_limit$& Collector size (el)& 100, 500\\
         \hline
         $server\_count$& Number of servers & 4, 7, 10\\
         \hline
         $network\_delay$& Delay increase (ms)&0, 30, 100\\
         \hline
    \end{tabular}
\end{table}

\begin{table}[t!]
    \centering
    \caption{Throughput Comparison (upto 50s) for Figure \ref{fig:throughput_comparison}}
    \label{tab:throughput}
    \begin{tabular}{|c|c|c|c|c|}
        \hline
        \textbf{Algorithms} & \textbf{Left} & \textbf{Center} & \textbf{Right}\\
        \hline \hline
         Vanilla&  $171$ el/s & $100$ el/s & $100$ el/s\\
         \hline
         Compresschain&  $996$ el/s & $571$ el/s & $743$ el/s\\
         \hline
         Hashchain&  $4183$ el/s & $2540$ el/s & $7369$ el/s\\
         \hline
    \end{tabular}
\end{table}

\section{Commit Time Comparison} \label{sec:commit_time_comparison}

In this section, we explore how elements get committed over time in the same scenarios explored in
Section~\ref{subsec:efficiencyfactor}. For that, we compute the commit time of the first element, followed by the 10\%, 20\%,
30\%, 40\%, and 50\% of the elements for each algorithm and scenario.
We plot this data in Fig.~\ref{fig:impactofnofserverscommittime} with the $y$ axis truncated for visibility. 
In Fig.~\ref{fig:impactofsendratecommittime} we present results for different sending rates. We observe that Vanilla commits the first element earlier than the other two algorithms. For the low rates ($500$ and $1,000$) elements are committed at a regular pace. However, in the seven combinations of rate and algorithm that showed low efficiency, the pace is not regular.
Fig.~\ref{subfig:impactofnofserverssr10000committime} shows the impact of the number of servers in the commit time. A higher number of servers increases the commit time for Vanilla (although barely observable in the figure) and Compresschain, possibly because it makes consensus harder to reach. Surprisingly, for Hashchain (especially for a collector size of $500$) the pattern seems to be the opposite. We believe that a larger number of servers helps with the reverse hashing process of Hashchain.
Regarding the impact of the network delay, shown in Fig.~\ref{subfig:impactofnetworkdelaysr10000committime}, the conclusion is that it negatively affects the commit times, as expected.

\begin{figure*}[t!]
\begin{subfigure}[b]{0.45\textwidth}
\centering
\includegraphics[trim={1cm 3cm 6cm 0cm},clip,width=\textwidth]{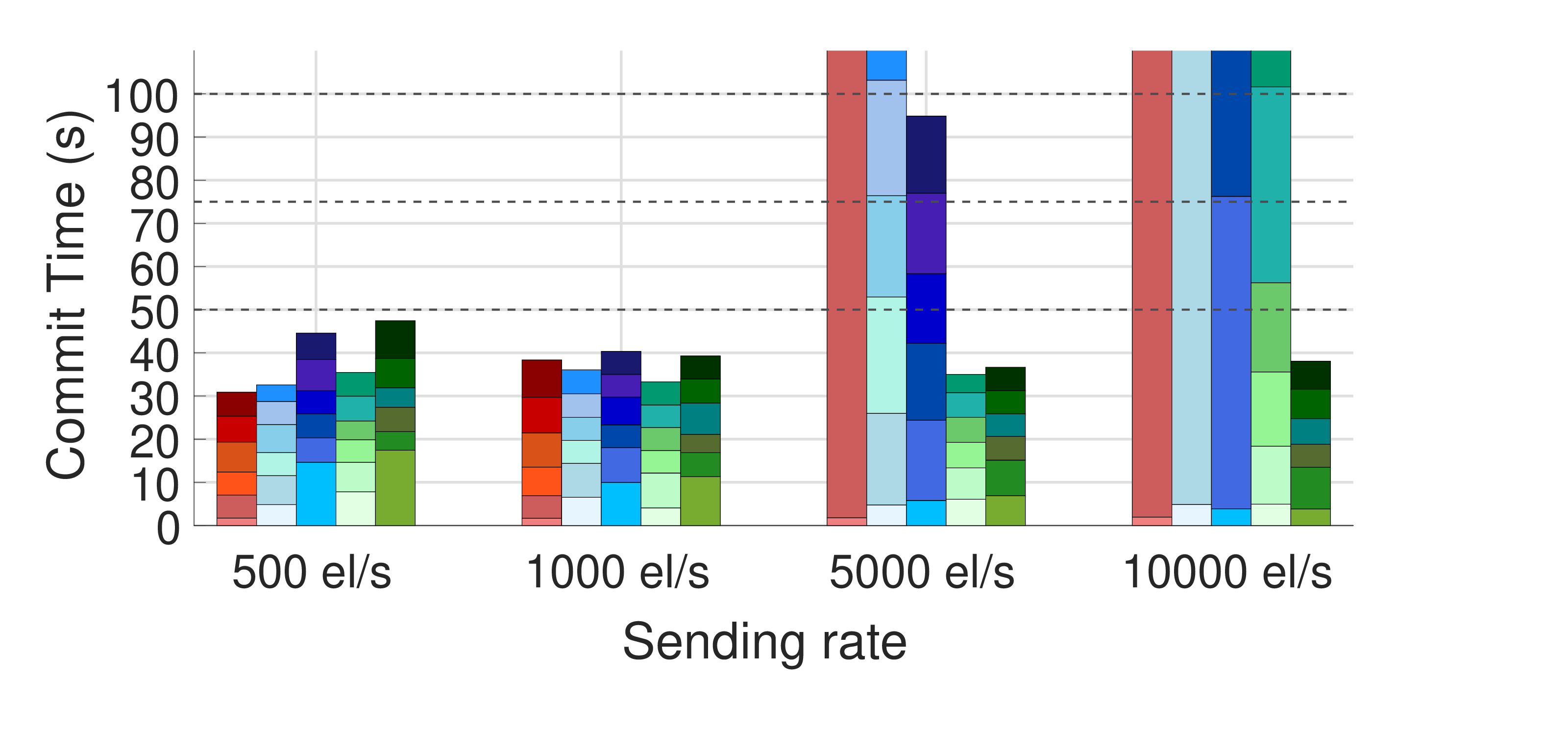}
\caption{Impact of sending rate ($10$ servers, $0$ network delay).}
\label{fig:impactofsendratecommittime}
\end{subfigure}
\hfill
\begin{subfigure}[b]{0.45\textwidth}
\centering
\includegraphics[trim={1cm 3cm 6cm 2cm},clip,width=\textwidth]{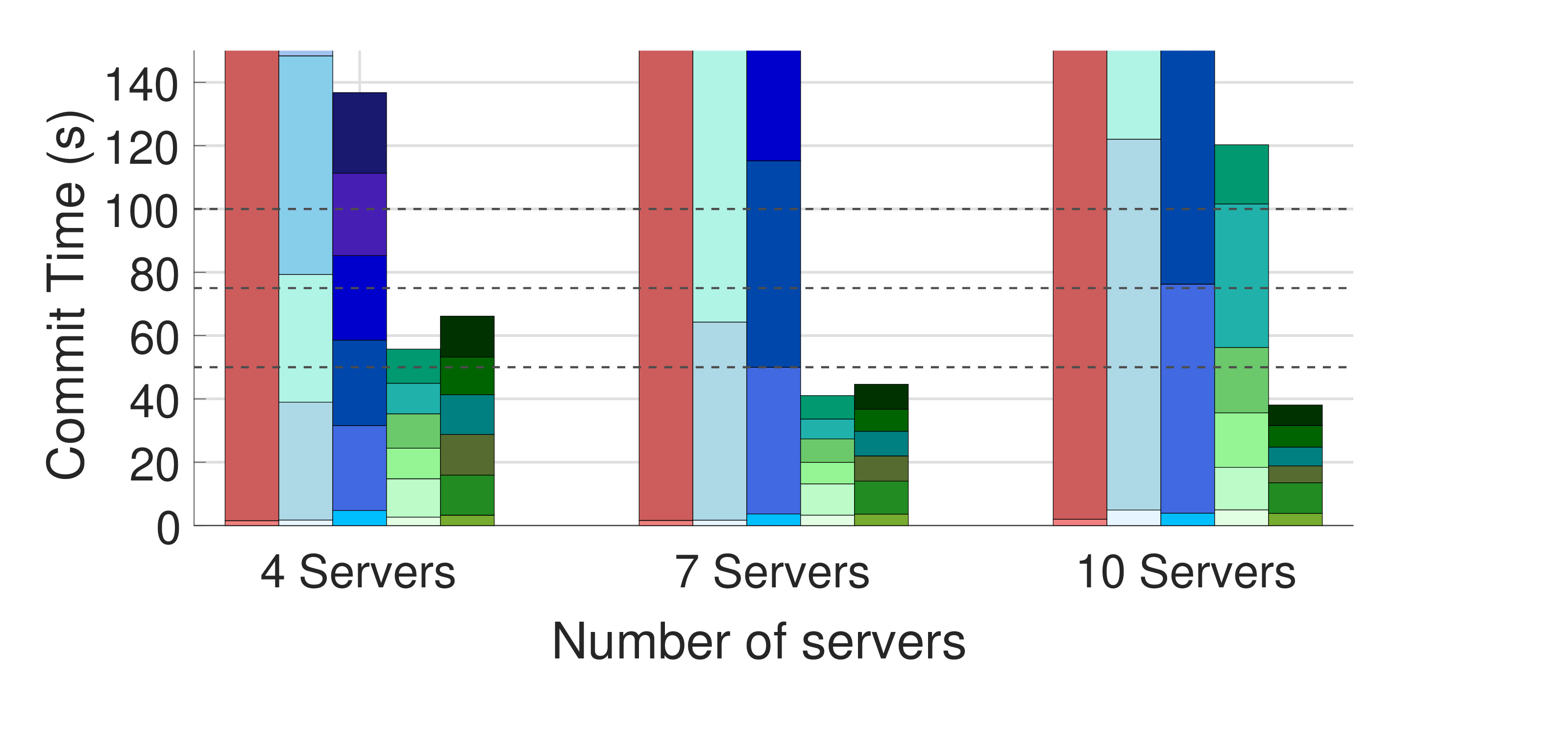}
\caption{Impact of number of servers ($10,000$ el/s, $0$ network delay).}
\label{subfig:impactofnofserverssr10000committime}
\end{subfigure}
\\
\begin{subfigure}[b]{0.45\textwidth}
\centering
\includegraphics[trim={1cm 3cm 6cm 1cm},clip,width=\textwidth]{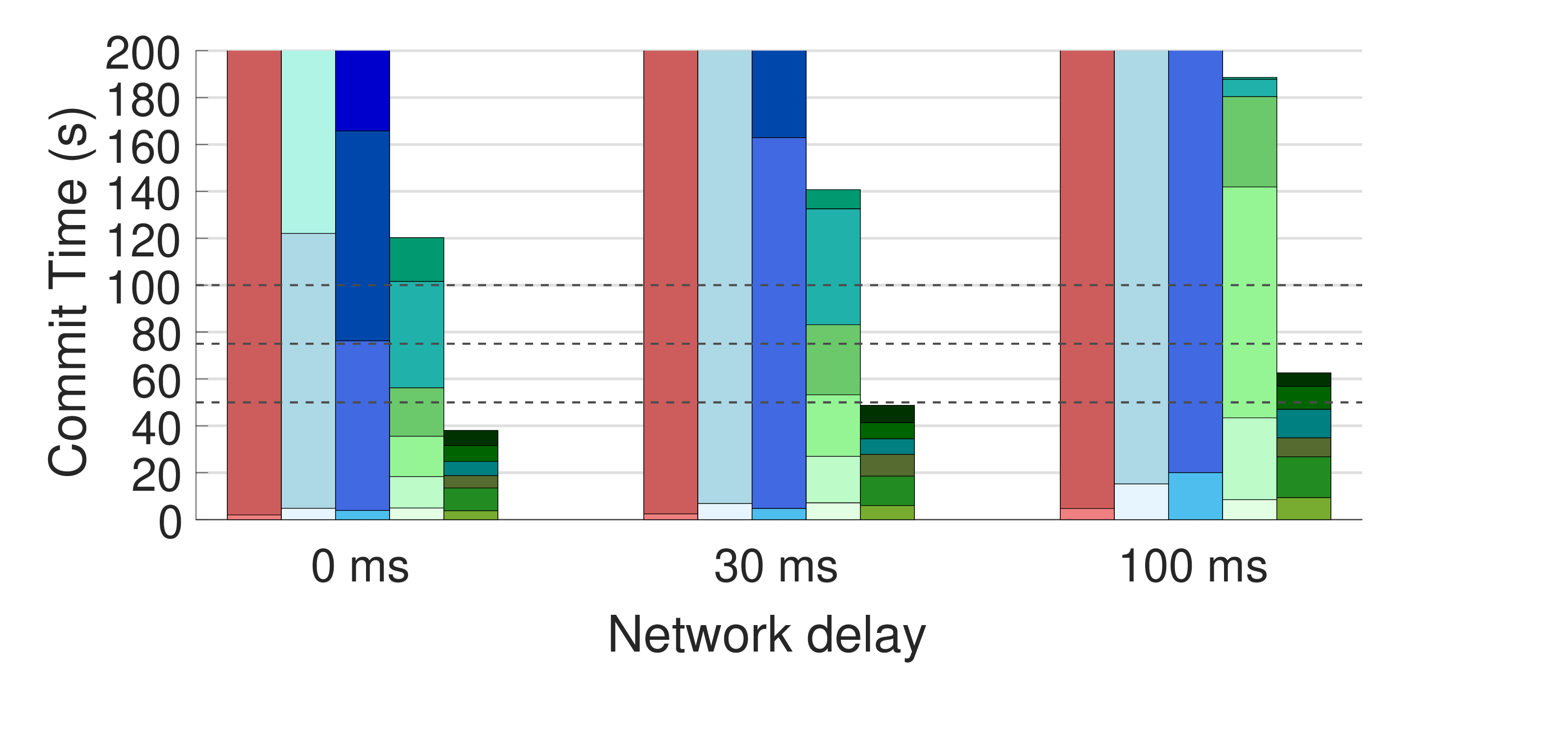}
\caption{Impact of network delay ($10$ servers, $10,000$ el/s).}
\label{subfig:impactofnetworkdelaysr10000committime}
\end{subfigure}
\hfill
\begin{subfigure}[b]{0.45\textwidth}
\centering
\includegraphics[trim={1cm 0cm 1cm 1cm},clip,width=\textwidth]{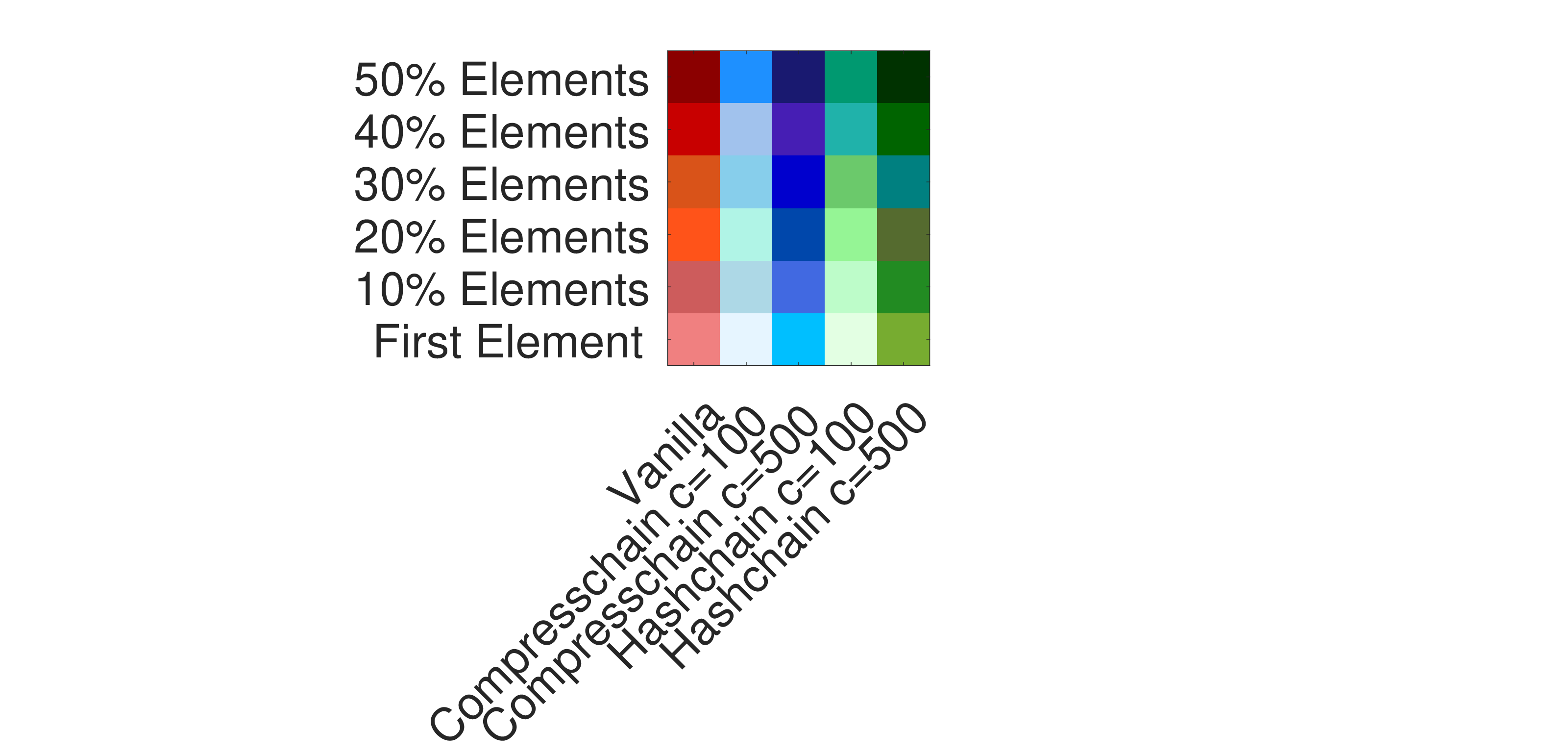}
\caption{Color codes}
\label{subfig:legendcommittimelarge}
\end{subfigure}
\caption{Commit times observed under different scenarios. The base scenario has $10$ servers, a sending rate of $10,000$ el/s, and no ($0$) network delay.}
\label{fig:impactofnofserverscommittime}
\end{figure*}

\section{Extension to a fully functional blockchain} \label{sec:extionsion_to_a_fully_functional_blockchain}
While the algorithms proposed are meant to implement a \setchain
object, in which elements inside an epoch have no order, in their
implementation, they impose an order inside each epoch (e.g., to hash
the epoch elements consistently).
%
%
This means that the proposed algorithms can be easily extended to
implement a blockchain, similarly as how Hyperledger Fabric or RedBelly
do:
%
(1) When adding elements to the \setchain and creating epochs, each transaction can be optimistically validated by itself (independently
  of all other transactions, that is, in parallel), ignoring its semantics.
%
(2) After each epoch is consolidated and its transactions ordered, the effect of its transactions can be computed (sequentially) in its
  actual final position. If a transaction is determined to be invalid it is marked as void.
%
Observe that extending \setchain to a blockchain can present a trade-off between decentralization and scalability.
As transaction execution must be done sequentially in an epoch, large epochs
may require large computational resources.
To ensure that less powerful servers can maintain synchronization with
the blockchain, it may be required
to limit epoch sizes, similar to Ethereum's block size
limitations~\cite{ethereumblocksize}.
\section{Limitations and Future Work} \label{sec:limitationandfuturework}
To the best of our knowledge, there is currently no blockchain benchmarking tool that allows for a direct comparison between~\setchain and state-of-the-art blockchain systems. We considered Diablo \cite{diablo}, a benchmarking suite focused on evaluating decentralized applications (dApps), but it does not suit our needs as~\setchain does not currently support dApps.
There are several potential directions in which future work could
evolve. One would be to run the experiments in a more distributed
environment to 
understand the scalability limits and ensure the system can handle
real-world demands. This would also include understanding the
tradeoffs between system parameters.
While using an underlying block-based ledger service, like Comet-BFT,
simplifies the algorithms, we have observed that it may also be a
bottleneck. We will explore if replacing this service with something
lighter, like a set consensus service \cite{redbelly} or Malachite \cite{malachite}, can increase performance and
scalability.
%
Another interesting work, as highlighted in Section \ref{subsubsec:pushing_hashchain_limits}, would be to implement a more efficient hash-reversal technique that could further improve Hashchain's performance.
Another promising direction would
be to design more efficient APIs for \setchain, and implement specific
DeFi applications that can use \setchain as the underlying decentralized infrastructure.
\end{document}